%% file: arxiv_june2026.tex
\newif \ifproofs
\proofstrue

\documentclass[11pt]{article}
\usepackage{amssymb}
\usepackage{mathrsfs,amsmath}
\usepackage{graphicx}
\usepackage{amsmath}
\usepackage{amsthm}
\usepackage{bbm}
\usepackage{dsfont}
\usepackage{listing}
\usepackage{hyperref}
\usepackage{color}
\usepackage[ruled, vlined,linesnumbered, noend]{algorithm2e}
\usepackage{comment}
\usepackage{enumitem}
\usepackage{booktabs}
\usepackage{tabulary}
\usepackage{soul}

\usepackage{float}
\usepackage{algpseudocode}

\usepackage{subcaption}

\usepackage{tikz}
\usepackage{pgfplots}
\pgfplotsset{compat=newest}

\usepackage{textcomp}
\usepackage{siunitx}

\newtheorem{theorem}{Theorem}
\newtheorem{corollary}{Corollary}
\newtheorem{lemma}{Lemma}

\newtheorem{proposition}{Proposition}
\theoremstyle{definition}
\newtheorem{definition}{Definition}
\newtheorem{remark}{Remark}

\usepackage{bm}

\DeclareMathOperator*{\argmax}{arg\,max}

\usepackage{natbib}
 \bibpunct[, ]{(}{)}{,}{a}{}{,}%
\newif\ifarxiv   
\arxivfalse 

\ifarxiv
\else

\fi

\input{tex/macros2}

\usepackage{fancyhdr}
\usepackage{calc}
\fancyheadoffset[RE]{\marginparsep+\marginparwidth}

\usepackage[top=1.5in, bottom=1in, left=1in, right=1in]{geometry}

\graphicspath{ {fig/} }

\title{Informal and Privatized Transit: Incentives, Efficiency and Coordination}
\author{Devansh Jalota\\ Columbia\\{\tt dj2757@columbia.edu}
        \and 
        Matthew Tsao\\ Lyft\\ {\tt mattxtsao@gmail.com}
}

\begin{document}
\maketitle

\begin{abstract}
Informal and privatized transit services, such as minibuses and shared auto-rickshaws, are integral to daily travel in large urban metropolises, particularly in developing nations, providing affordable commutes where formal public transport is inadequate and other options are unaffordable. A defining feature of these systems is their decentralized, market-based organization, with drivers providing service in response to rider demand and earning opportunities. While this structure helps fill critical mobility gaps, it can also generate inefficient service patterns when profit-driven driver route choices do not align with system-wide mobility goals.

We develop an analytically tractable game-theoretic framework to study incentives underlying informal and privatized transit systems with a fixed menu of routes, quantify efficiency losses from decentralized driver route choice, and design incentive mechanisms to mitigate these inefficiencies. Here, profit-maximizing informal operators (\emph{drivers}) decide where to provide service and cost-minimizing commuters (\emph{riders}) decide whether to use these services. Within this framework, we establish \emph{tight} price of anarchy bounds showing that decentralized, profit-maximizing driver behavior can lead to bounded yet substantial losses in cumulative driver profit and rider demand served and that these losses can be mitigated through targeted interventions: \emph{budget-balanced cross-subsidization}, which uses route-specific tolls/subsidies to shape driver payoffs, and \emph{fare optimization}, which changes rider demand and driver margins through centrally regulated route-level fares. Finally, numerical experiments based on a real-world informal transit system in Nalasopara, India, reinforce these findings.
\end{abstract}

\section{Introduction}
\vspace{-3pt}
Governments in many cities, particularly in developing nations, often struggle to provide adequate public transport to meet rising travel demands due to chronic fiscal constraints. Although on-demand mobility services, such as Uber and Lyft, have partially filled this gap, they remain unaffordable for most low and middle-income users in these cities. Thus, commuters face a stark mobility gap: on one side, low-frequency, overcrowded, and geographically limited public transit; on the other, high-cost individualized services that are unaffordable for daily travel. To fill this gap, informal transport (or intermediate public transport) services, such as minibuses, shared auto-rickshaws, matatus, jeepneys, and dollar vans, have become integral to daily travel, providing affordable commutes where a formal public transport system is absent or inadequate and other options are unaffordable~\citep{cervero2000informal}. Operated largely by private providers, these services form the backbone of urban mobility in many developing nation cities, often accounting for 50–100\% of all trips~\citep{Conwell2025PrivatizedTransit}, e.g., in Lagos, Nigeria, minibuses alone serve roughly 62\% of daily commutes~\citep{bjorkegren2025public}. 

Informal or otherwise privatized fixed-route services are not limited to developing nations, with similar services filling gaps even in mature public transit systems in the developed world. In New York, for example, a range of private operators have emerged, from long-standing jitney services in outer-borough corridors~\citep{GOLDWYN2020102309} to Uber’s recent airport shuttle offerings~\citep{uber-shuttle}, which pool riders along fixed routes to offer a lower cost shared alternative to individualized ride-hailing. Together, these examples highlight a broader global pattern: privatized and, often, informal shared transit systems are crucial to meeting the mobility needs of hundreds of millions of daily travelers worldwide that formal public transit leaves unmet.

A defining feature of these services that enables them to fill these mobility gaps, often at low or no fiscal cost to governments, is their decentralized, market-based organization, wherein private drivers or operators serve particular routes in response to rider demand and earning opportunities. While this decentralized structure allows these systems to extend service where formal public transit is absent, inadequate, or costly to expand, it can also produce inefficient service patterns, with operators concentrating on high-demand or lucrative corridors while low-income or peripheral neighborhoods with lower returns remain underserved. Thus, although these systems often resemble public transit operationally, with shared vehicles running along fixed routes, these profit-driven decentralized incentive patterns fundamentally shape where service is provided, creating inefficiencies that can significantly compromise system efficiency and equity~\citep{KLOPP2019657}. These inefficiencies raise a central policy question: \emph{how large are the losses from decentralized driver route choices, and how should the incentives of informal and privatized transit operators be shaped to mitigate them?} Quantifying these losses is important for governments, driver associations, and platform-mediated shared transit providers seeking to understand the value of coordination, especially as informal and privatized transit systems are increasingly being integrated, professionalized, and corporatized, i.e., as informal operators are organized into formal businesses~\citep{ssatp2021capetown}. At the same time, the practical policy problem is not only whether coordination is valuable, but how much of that value can be recovered through the design of incentive mechanisms. 


\textbf{Our Contributions:} 
This work addresses the above question by developing a framework for analyzing the incentives underlying informal and privatized transit systems and quantifying the performance loss from decentralized driver behavior. We further study two natural monetary mechanisms to shape incentives in these systems, characterize their optimal design, and identify the limits of these mechanisms, i.e., when losses from selfish driver behavior remain unavoidable.

In Section~\ref{sec:model}, we introduce a novel game-theoretic model of a privatized informal transit system with a fixed set of routes. On these routes, cost-minimizing \emph{riders} choose between using informal transit to complete their trip or an outside option, such as walking or staying at home, while profit-maximizing informal or privatized operators (\emph{drivers}) decide which routes to serve. Inspired by the practical operation and global prevalence of fixed-route informal transit systems, our framework captures two key features shaping incentives and equilibrium outcomes in such settings: (i) competition among drivers for profitable routes, which leads to over-provision on some routes and under-provision in others, and (ii) rider queuing and waiting delays, a defining feature of informal transit during morning and evening peaks that shapes riders’ willingness to use these services. 

To capture driver competition and its effect on their route choices, we use a non-atomic congestion game in which congestion reflects the impact of service over or under-provision on rider and driver payoffs, rather than physical road congestion. To capture rider queuing and waiting delays, we adapt Vickrey's seminal bottleneck model of peak-period congestion~\citep{vickrey1969congestion} to our informal transit setting. A key departure from the standard bottleneck model is that a route's service capacity is not fixed exogenously, but emerges endogenously from drivers’ profit-maximizing route choices. In this sense, the rider-side problem of our framework can be viewed as a \emph{bottleneck model with endogenous service capacities}. By jointly capturing both driver competition and rider delays in a unified framework, our model is sufficiently rich to capture the key operational features of informal and privatized transit systems, yet tractable enough to admit closed-form solutions and isolate the core forces at play. We thus view the \emph{formulation} of our model as a central contribution. 

For this model, in Section~\ref{sec:poa-bounds}, we derive price of anarchy (PoA) guarantees under both cumulative driver profit and rider welfare objectives, where rider welfare is measured by the total number of riders served by informal transit, a central policy objective in transit planning. Our PoA bounds quantify the worst-case ratio (over problem instances) between the optimal and equilibrium outcomes, thereby capturing the performance loss 
due to drivers' selfish route choices. In particular, we establish \emph{tight} PoA bounds of $2$ for cumulative driver profit and $1 + \frac{p_{\max}}{p_{\min}}$ for rider welfare, where $p_{\min}$ and $p_{\max}$ are the minimum and maximum per-rider profits across routes. To our knowledge, these constitute the first PoA bounds for informal transit systems 
and demonstrate that selfish driver decisions can result in \emph{bounded yet substantial inefficiencies} in both objectives.

To mitigate these inefficiencies, we study two mechanisms to steer informal transit operators to improved system outcomes: \emph{cross-subsidization} (Section~\ref{sec:cross-subsidies}) and \emph{fare optimization} (Section~\ref{sec:price-optimization}).

First, we study cross-subsidization, in which a central planner uses route-specific tolls or subsidies to encourage drivers to operate on particular routes. We show that for any informal transit instance, there
exists a budget-balanced (i.e., requiring no net spending) cross-subsidy scheme that aligns driver incentives with any desired system objective, and can be computed in polynomial time 
for both cumulative driver profit and rider welfare objectives. Thus, by appropriately accounting for incentives, central planners can eliminate inefficiencies from selfish driver behavior. 

Despite its efficacy, cross-subsidies can be difficult to implement in informal transit systems, as monitoring driver behavior and enforcing compliance can be challenging. We thus study fare optimization, in which route-level fares can be directly set by a central planner and which is closer to how informal and shared privatized transit systems are regulated by driver associations and governments in practice.
In this setting, we define a fare-optimized PoA notion that compares two benchmarks: a centralized benchmark, where the planner jointly optimizes fares and driver allocations to maximize either objective, and an equilibrium benchmark, where the planner chooses fares but drivers choose routes selfishly in response. 
Unlike cross-subsidies, 
fare optimization 
has \emph{asymmetric power} across objectives: 
it can recover the centralized rider welfare benchmark arbitrarily closely, but cannot improve the factor-two worst-case bound for cumulative profit. Moreover, despite the non-convexity of the joint fare-setting and driver allocation problems, we develop polynomial-time algorithms to compute the fare-optimized centralized and equilibrium benchmarks under both objectives.

We complement our theory with numerical experiments in Section~\ref{sec:experiments} based on a real-world informal transit system in Nalasopara, India. 
Our results under the existing fares in Nalasopara show that, although real-world inefficiencies do not reach the worst-case levels given by our PoA bounds, the corresponding fractions of the optimal cumulative driver profit and rider welfare 
remain meaningfully bounded away from 1, implying substantial inefficiencies in practice. We further examine a setting in which a central planner can adjust route-level fares, showing that fare optimization is an operationally viable tool for improving outcomes in informal transit systems, but its efficacy depends on how fare adjustments are paired with driver earning objectives or constraints.


The appendix contains additional related literature, proofs omitted from the main text, additional theoretical results, and numerical implementation details. While framed in the context of informal transit, our model may be applicable to other settings, such as ride-sharing, in which the forces we study, including driver competition for service and rider queuing delays, play a central role.


\vspace{-4pt}
\section{Related Literature} \label{sec:literature}
\vspace{-2pt}

A large literature documents the key role of informal transit in urban mobility and its defining aspects, including institutional arrangements, regulatory challenges, and service patterns. Despite taking many forms across contexts, ``informal transport is, stripped to the basics, about profit-seeking operators serving consumer demands''~\citep{cervero2000informal}. Motivated by this core feature, we study incentive interactions in fixed-route informal transit, common in cities worldwide. However, unlike the largely qualitative and case study based informal transit literature rooted in urban planning and political science~\citep{KLOPP2021191,Kerzhner_2022}, we develop a formal framework that complements these diagnostic studies with prescriptive tools for incentive design in such systems.

By modeling the incentives of informal transit systems and the effects of selfish driver behavior on system performance, our work connects to the literature on the efficiency of privatized transportation provision. Prior work studies competition among private service providers~\citep{rosaia-2020} and emphasizes search frictions~\citep{brancaccio2023search,buchholz2015spatial}, where spatial mismatch leads to matching costs or delays. In the context of informal transit,~\citet{mbonu2024market} document how territorial segmentation among minibus associations affects service provision, while~\citet{mittal2024efficient} show that informal transit routes often deviate less from shortest paths than formal public transit. Closest to our work,~\citet{Conwell2025PrivatizedTransit} develops a model of privatized shared transit with profit-maximizing drivers and rider queuing delays, and studies the efficiency gains from reorganizing privately operated minibuses. However, unlike the discrete stochastic queuing model of~\citet{Conwell2025PrivatizedTransit}, we adopt a deterministic continuum queuing framework inspired by~\citet{vickrey1969congestion} and leverage ideas from congestion games, enabling a closed-form equilibrium analysis.

Our work also extends the PoA literature on efficiency losses from selfish behavior~\citep{roughgarden2005selfish}, typically focused on a single class of strategic agents (e.g., commuters in routing games), to informal transit systems, where service provision and demand are jointly shaped by the strategic actions of both drivers and riders. Related work on informal transit systems considers two-sided incentives but relies primarily on numerical comparisons rather than PoA bounds~\citep{CHAVIS2013277,SANGVERAPHUNSIRI20221}, while others focus on equilibrium computation without efficiency comparisons~\citep{fernandez-1992}. In contrast, we derive explicit PoA bounds, which, to our knowledge, are the first such results for informal transit systems.


A key component of our framework is Vickrey’s bottleneck model~\citep{vickrey1969congestion}, which we adapt to study the rider-side decision of our framework, where we treat waiting time for informal transit akin to queuing delay at a bottleneck~\citep{KRAUS2002170}. Moreover, in line with prior extensions of the bottleneck model, we incorporate outside options~\citep{d0907f84-e14a-3d98-ad20-759f41491d6e,GONZALES20121519,jalota2026simplevsoptimalcongestion}, allowing riders to forgo transit when queuing delays become large. Despite these similarities, unlike existing bottleneck-based models that focus exclusively on rider behavior under fixed service capacities, we embed bottleneck-style rider behavior within a two-sided informal transit system in which service capacities and outcomes are endogenously determined by the strategic interaction between riders and profit-maximizing drivers.

By studying the allocation of drivers to serve riders in mobility systems, our work relates to the ride-sharing and shared rides literature. Queuing models have been used to study pricing in ride-hailing~\citep{banerjee-johari-2015,banerjee-johari-2016}, typically in stochastic and stationary settings, unlike Vickrey's deterministic and non-stationary bottleneck model. Shared rides services have been studied from the lens of online stochastic optimization~\citep{aouad-or-2022,ashlagi-2019} and dynamic programming~\citep{yan-pricing-2024,jacquillat-2025}. These works treat rider arrival as exogenous, and do not model the option for riders to strategically time their entry into the market, which we incorporate in our model. 
These works study exclusively matching pairs of riders together, with the exception of~\cite{alonso-mora-2017}, which considers higher-capacity shared rides but abstracts from strategic rider and driver behavior and pricing. Overall, these works study discrete models while we use a fluid model, though fluid approximations are often useful even in discrete models for designing algorithms with guarantees~\citep{aouad-or-2022,feng-2023-two}. 

Our work is also related to the design of mechanisms to mitigate inefficiencies in decentralized mobility systems, the design of public transit systems, and the economic impacts of public transportation infrastructure, which we review in Appendix~\ref{apdx:related-work}.

\vspace{-8pt}
\section{Model} \label{sec:model}
\vspace{-3pt}

This section introduces a stylized model of informal and privatized transit with a fixed menu of routes. On these routes, \emph{riders} seek to make trips, and service is provided by profit-seeking informal or privatized operators, whom we refer to as \emph{minibus drivers} (or simply \emph{drivers}). Motivated by the practical operations of these systems, our framework captures two key features shaping incentives and equilibrium outcomes: (i) driver route choice and competition for profitable routes, which endogenously determine service capacities across routes, and (ii) rider queuing and waiting delays, which influence their willingness to use informal transit.

In the following, we first introduce notation and describe the features of the informal transit environment we study (Section~\ref{subsec:preliminaries}) and specify rider demand for informal transit as a function of driver supply (Section~\ref{subsec:rider-payoffs}). Then, we define driver payoffs and our equilibrium notion in Section~\ref{subsec:eq-def}. Next, we introduce the performance metrics of cumulative driver profit and rider welfare and present the price of anarchy notion in Section~\ref{subsec:poa}. 
Finally, we discuss modeling assumptions in Section~\ref{subsec:model-assumptions}.

\vspace{-4pt}
\subsection{Preliminaries: Notation and Features of Informal Transit Environment} \label{subsec:preliminaries}
\vspace{-2pt}
We consider an informal transit system during the morning or evening peak period, defined by the interval $[t_1, t_2]$, with a fixed menu of $n$ routes. In this system, a continuum of $D$ \emph{minibus drivers} choose which routes to serve, where each minibus has a fixed capacity $F$, denoting the mass of riders that can be served in a single trip. Consistent with observed practice during peak periods, a driver serving a route operates at full capacity, picking up riders at the origin, transporting them to their destination, and then returning empty to the origin to continue service (see left of Figure~\ref{fig:warmup-model}).

Each route $i \in [n]$ is characterized by a vector $\Theta_i = (\Bar{p}_i, l_i, c_i)$, where $\Bar{p}_i$ is the per-trip fare paid by minibus riders, $l_i$ is the fixed one-way minibus travel time (so the round-trip travel time is $2 l_i$), and $c_i$ is the driver's per-trip operating cost. We first study a fixed-fare setting in this section and in Sections~\ref{sec:poa-bounds} and~\ref{sec:cross-subsidies}, in which the route-level fares $\Bar{p}_i$ are exogenously given. This captures many informal transit systems, including Nalasopara in Section~\ref{sec:experiments}, where fares remain stable over short- to medium-run horizons, and allows us to focus on equilibrium behavior under decentralized driver route choice. Section~\ref{sec:price-optimization} generalizes the model to allow fares to be optimized by a central planner.

As in Vickrey's seminal bottleneck model~\citep{vickrey1969congestion}, let $\Lambda_i$ be the mass of riders seeking to travel on route $i$ with desired arrival times at the destination uniformly distributed over the peak-period interval $[t_1, t_2]$, and let $\lambda_i = \frac{\Lambda_i}{t_2 - t_1}$ be the deterministic constant desired arrival rate. Then, riders on each route choose between the minibus (M) and an outside option (O), e.g., walking or a local public transit option, based on their travel costs defined in Section~\ref{subsec:rider-payoffs} (see right of Figure~\ref{fig:warmup-model}).

A central primitive of our model is the mapping from driver supply on a route to the rider demand served by minibuses, as only a fraction of riders may be served when service capacity is limited. Let $\x = (x_i)_{i \in [n]}$ denote the driver allocation across routes, where $x_i \geq 0$ is the mass of drivers serving route $i$, and let $\Omega = \{ \x \in \mathbb{R}_{\geq 0}^n: \sum_{i \in [n]} x_i = D \}$ denote the feasible set of driver allocations. All results extend naturally to the case when $\sum_{i \in [n]} x_i \leq D$, and we focus on the equality-constrained case for analytical clarity. Given a driver allocation $\x$, we define $\LLambda^M(\x) = (\Lambda_i^M(x_i))_{i \in [n]}$ as the vector of rider demands served by the minibuses, where $\Lambda_i^M(x_i)$ is the number of riders served when $x_i$ drivers operate on that route. The remaining $\Lambda_i - \Lambda_i^M(x_i)$ riders take the outside option. 

We let $\mathcal{L}$ denote the class of admissible rider demand functions $\LLambda^M(\cdot)$, induced under particular modeling choices for rider costs and queuing behavior, which we specify in Section~\ref{subsec:rider-payoffs}. Then, we have the following description of an instance of an informal transit system.

\vspace{-2pt}
\begin{definition}[The Informal Transit System]\label{def:informal_transit_instance}
An instance of an informal transit system is specified by the tuple $I = (n, F, \overline{\textbf{p}}, \textbf{l}, \textbf{c}, t_1, t_2, D, \LLambda^M)$, where $n$ is the number of routes, $F$ is the vehicle capacity, and $\overline{\textbf{p}}, \textbf{l}, \textbf{c} \in \mathbb{R}^n$ denote the vectors of per-trip fares, travel times, and driver costs for the $n$ routes. $[t_1, t_2]$ is the peak-period interval over which riders' desired destination arrival times are uniformly distributed, $D$ is the total mass of minibus drivers, and $\LLambda^M : \mathbb{R}^n \rightarrow \mathbb{R}^n$ specifies the rider demand for each route as a function of the driver supply on that route. Then, under a rider demand function family $\mathcal{L}$, the set of all instances of the informal transit system is given by:
{\setlength{\abovedisplayskip}{0pt}
\setlength{\belowdisplayskip}{0pt}
\setlength{\jot}{1pt}
\begin{align*}
    \mathcal{I}_{\mathcal{L}} := \{ (n, F, \overline{\textbf{p}}, \textbf{l}, \textbf{c}, t_1, t_2, D, \LLambda^M) : n\geq 1, \overline{\textbf{p}}, \textbf{l}, \textbf{c} \in \mathbb{R}^n_{> 0}, F, t_1, t_2, D \in \mathbb{R}_{> 0}, \LLambda^M \in \mathcal{L} \}.
\end{align*}}
\end{definition}

\vspace{-15pt}
\begin{figure}[tbh!]
      \centering
      \includegraphics[width=140mm]{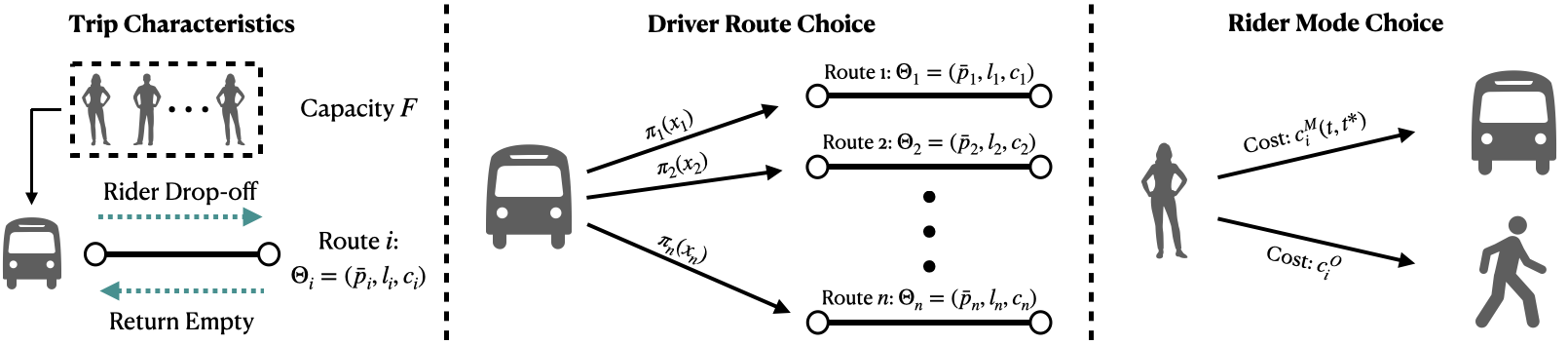}
      \vspace{-9pt}
      \caption{\small \sf Depiction of the characteristics of the informal transit system on which minibus drivers service riders on a fixed menu of routes. The left panel depicts the trip characteristics, where, on each trip, a minibus transports $F$ riders from the route’s origin to its destination and then returns empty to the origin to serve additional riders. The middle panel illustrates driver route choice, with each driver selecting one of $n$ routes to operate on based on profitability. The right panel shows the mode choice decision faced by riders who either travel by minibus or take the outside option.
      }
      \label{fig:warmup-model} 
   \end{figure}    

\vspace{-20pt}

\subsection{Minibus Rider Demand Function} \label{subsec:rider-payoffs}
\vspace{-2pt}
This section specifies the class $\mathcal{L}$ of admissible demand functions $\LLambda^M(\cdot)$ studied in this work. We consider demand functions arising endogenously from equilibrium rider behavior, for each level of driver supply, under an adaptation of Vickrey’s bottleneck model of peak-period congestion, in which rider costs reflect queuing delays, a central feature of informal transit during morning or evening peaks. While we focus on this class of demand functions, our formulation is sufficiently general as it subsumes an important special case of capacity-constrained demand, closely related to formulations commonly studied in adjacent domains such as ride-sharing, where rider queuing delays are abstracted away~\citep{bimpikis2019spatial}, and central to our results (see Propositions~\ref{prop:tightness-poa-profit}, \ref{prop:tightness-poa-demand}, and Theorem~\ref{thm:fare-opt-profit-poa}). That said, extending our analysis to alternate demand classes arising from different modeling choices of rider costs and queuing behavior is an important direction for future work.

\textbf{Demand Function Under Rider Queuing:} To specify the rider demand function class we study, we first describe rider behavior under an adaptation of Vickrey's bottleneck model~\citep{vickrey1969congestion,GONZALES20121519}. On each route $i \in [n]$, cost-minimizing riders choose either an outside option (e.g., walking) with a fixed cost $c_i^O$ or a minibus service. Given $x_i$ drivers operating on route $i$, the cost of using the minibus depends on its service capacity $\mu_i(x_i) = \frac{x_i F}{2 l_i }$, where $F$ is the vehicle capacity and $2l_i$ is the round-trip travel time. In the \emph{over-supplied} regime, when the service capacity exceeds the desired rider arrival rate, i.e., $\mu_i(x_i) \geq \lambda_i$, all riders can arrive at the destination at their desired arrival times when using the minibus without queuing or schedule delays. In this case, the cost of using the minibus is a constant given by $c_i^M = \Bar{p}_i + \eta_T l_i$, where $\eta_T$ represents the common value of time across riders. Without loss of generality, we assume that the minibus absent rider queuing and schedule delays is weakly more attractive than the outside option, i.e., $S_i := c_i^O - c_i^M \geq 0$ for all routes $i$; otherwise, such routes can be removed from consideration, as riders would always prefer the outside option regardless of minibus supply.

In the \emph{under-supplied} regime when $\mu_i(x_i) < \lambda_i$, rider queuing and schedule delays may arise, as is common during morning or evening peaks in informal transit systems. In this case, riders, beyond choosing their mode (i.e., outside option or minibus), select their arrival time when using the minibus. Accordingly, in addition to the fixed cost $c_i^M$, the cost of using the minibus consists of (i) queuing (or waiting) delays $w(t)$, representing the time spent waiting for the minibus by a rider whose trip is completed at time $t$, and (ii) schedule delays $|t^* - t|$, capturing the discrepancy between a rider's desired destination arrival time $t^*$ and actual arrival time $t$. Let $\eta_W$ be the penalty for incurring a unit of waiting time delay, and $\eta_E$ and $\eta_L$ be the schedule delay penalties for arriving early or late, respectively, with $0 < \eta_E < \eta_W$ and $\eta_L > 0$, as is standard in the literature. Then, normalizing the waiting time penalty $\eta_W = 1$ (with all other parameters scaled accordingly), the minibus travel cost for a user with a desired destination arrival time $t^*$ and actual arrival time $t$ is: 
{\setlength{\abovedisplayskip}{1pt}
\setlength{\belowdisplayskip}{1pt}
\setlength{\jot}{1pt}
\begin{align} \label{eq:cost-minibus-queuing}
    c_i^M(t, t^*) = \Bar{p}_i + \eta_T l_i + w(t) + \eta_E(t^* - t)_+ + \eta_L(t - t^*)_+.
\end{align}}
Under this cost structure, for a given driver supply $x_i$ on a route $i$, a \emph{rider equilibrium} arises when all $\Lambda_i$ riders choose a mode (minibus or outside option) and, if using the minibus, an arrival time, to minimize travel costs, with no rider having an incentive to deviate. For a fixed driver supply $x_i$, the resulting equilibrium is akin to that in Vickrey's bottleneck model with an outside option~\citep{GONZALES20121519}, and as $x_i$ varies, traces out the minibus rider demand function $\Lambda_i^M(\cdot)$ depicted on the left of Figure~\ref{fig:induced-rider-demand}. We provide a closed-form characterization of this demand function in Equation~\eqref{eq:rider-demand-function} in Proposition~\ref{prop:rider-demand-func-queuing} (see Section~\ref{subsec:rider-demand-derivation-vickrey}), along with a discussion of the regimes that arise. Given this characterization, we define the class $\mathcal{L}_V$ as the set of all rider demand functions in Equation~\eqref{eq:rider-demand-function} for arbitrary choices of strictly positive primitives $F, \overline{\textbf{p}}, \textbf{l}, \textbf{c}, t_2, t_1, \eta_E, \eta_L, (\Lambda_i)_{i \in [n]}$ and non-negative values of $(S_i)_{i \in [n]}$, where, recall that $S_i = c_i^O - c_i^M$ captures the relative attractiveness of the minibus absent queuing and schedule delays. Note that if $S_i < 0$, the rider demand is zero.

\textbf{Capacity-Constrained Demand Function:} We now consider an important special case of the above demand function, corresponding to $S_i = 0$ on route $i$. In this case, no queuing or schedule delays arise at equilibrium, as any such delays would make the minibus strictly more costly than the outside option and induce riders to switch. Hence, the demand function reduces to Equation~\eqref{eq:rider-demand-function-reduction} (see Section~\ref{subsec:profit-poa}), exhibiting a piecewise-linear relationship shown on the right of Figure~\ref{fig:induced-rider-demand}. Specifically, served demand on route $i$ increases linearly in the mass of allocated drivers up to a threshold $k_i^* = \frac{2 l_i \lambda_i}{F}$ at which $\mu_i(x_i) = \lambda_i$, beyond which the service capacity is sufficient to serve all riders.

The capacity-constrained demand function mirrors standard equilibrium relations where the served demand is limited by the minimum of the demand arrival rate and service capacity, with full demand capture occurring only once the saturation condition $\mu_i(x_i) = \lambda_i$ holds. While both demand functions in Figure~\ref{fig:induced-rider-demand} coincide in the over-supplied regime (i.e., $\mu_i(x_i) \geq \lambda_i$) when driver supply exceeds $k_i^*$ and all riders are served, they diverge in the under-supplied regime. In this regime, unlike the $S_i = 0$ case corresponding to the capacity-constrained demand function, when $S_i > 0$, riders are willing to incur queuing or schedule delays, allowing the minibus to capture additional demand even when $\mu_i(x_i) < \lambda_i$. In this case, we show that the demand function is concave quadratic below a threshold $\Tilde{k}_i^*$, which need not coincide with the saturation threshold at which $\mu_i(x_i) = \lambda_i$, i.e., $k_i^* \neq \Tilde{k}_i^*$; in particular, there are regimes where $\mu_i(x_i)<\lambda_i$, yet all arriving riders are served by the minibus at equilibrium. For a further discussion on these demand functions, see Section~\ref{subsec:rider-demand-derivation-vickrey}.

\vspace{-10pt}

\begin{figure}[tbh!]
      \centering
      \includegraphics[width=110mm]{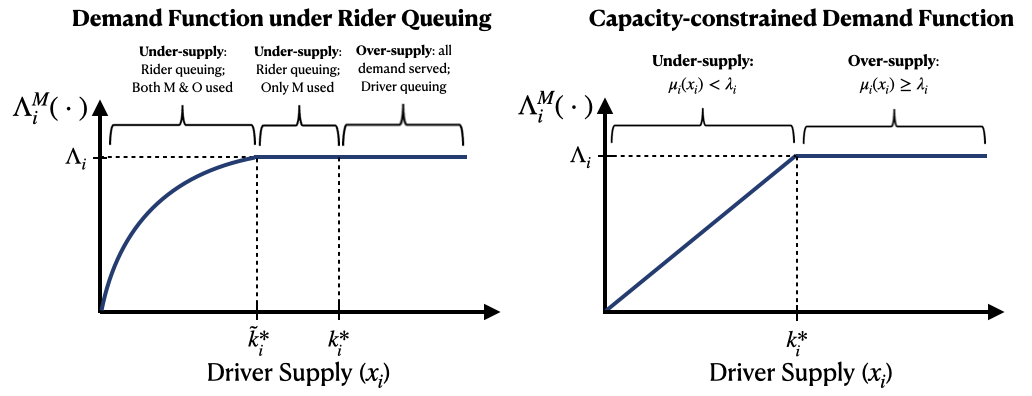}
      \vspace{-10pt}
      \caption{\small \sf Depiction of the minibus demand function when incorporating rider queuing (left) and under a capacity-constrained formulation without rider queuing (right). When rider queuing is incorporated via a Vickrey-style bottleneck model, the minibus rider demand is concave quadratic below a threshold $\Tilde{k}_i^* \leq k_i^*$, and remains flat thereafter. When the cost difference $S_i = 0$, this demand function reduces to the capacity-constrained formulation, where the served minibus demand increases linearly with driver supply until $k_i^*$ at which $\mu_i(x_i) = \lambda_i$, and remains flat thereafter.
      }
      \label{fig:induced-rider-demand} 
   \end{figure}    

\vspace{-23pt}

\subsection{Driver Payoffs and Equilibrium Notion} \label{subsec:eq-def}
\vspace{-2pt}
We now describe the supply-side of the model and how drivers allocate themselves across routes. Motivated by the profit motive of informal transit operators, we assume drivers choose routes to maximize individual profits (see middle of Figure~\ref{fig:warmup-model}). Let $p_i = \Bar{p}_i - \frac{c_i}{F} \geq 0$ denote the per-rider profit on route $i$, equal to the trip fare minus per-rider operating cost. Then, given $x_i$ drivers serving $\Lambda_i^M(x_i)$ riders on route $i$, the total route profit is $p_i \Lambda_i^M(x_i)$, which we assume is shared equally among drivers, yielding a per-driver profit $\pi_i(x_i) = \frac{\Lambda_i^M(x_i) p_i}{x_i}$. Under the demand functions in Equation~\eqref{eq:rider-demand-function}, $\pi_i(\cdot)$ is well-defined and bounded for $x_i\geq 0$. Drivers select routes to maximize this payoff, taking the demand functions and the allocation of other drivers as given. This strategic interaction induces a Wardrop-style~\citep{wardrop-ue} equilibrium driver allocation, in which no driver can profitably deviate. We formalize this notion below and let $\Omega^{Eq}$ denote the set of equilibrium driver allocations.

\vspace{-2pt}
\begin{definition}[Equilibrium Driver Allocation] \label{def:eq-def}
Consider a driver allocation $\x$, with $\Lambda_i^M(\cdot)$ denoting the minibus rider demand function on each route $i$. Then, $\x$ is an equilibrium driver allocation if for any route $i$ with $x_{i}>0$, it holds that $\pi_i(x_i) \geq \pi_j(x_j)$ for all routes $j \in [n]$.
\end{definition}

\vspace{-4pt}
\subsection{Price of Anarchy} \label{subsec:poa}
\vspace{-2pt}
Since drivers selfishly maximize individual profits, the resulting equilibrium allocation of drivers across routes may not be optimal from the perspective of a central planner (e.g., government or driver association). In this work, we evaluate the equilibrium outcomes for an informal transit instance $I = (n, F, \overline{\textbf{p}}, \textbf{l}, \textbf{c}, t_1, t_2, D, \LLambda^M)$ using two metrics of direct relevance to a central planner: (i) \emph{cumulative driver profits}, defined as $P_I(\x) = \sum_{i = 1}^n p_i \Lambda_i^M(x_i)$, an aggregate notion of driver welfare, and (ii) \emph{cumulative rider demand served} by the minibus system, given by $R_I(\x) = \sum_{i = 1}^n \Lambda_i^M(x_i)$, an aggregate notion of rider welfare commonly used in transit planning, where maximizing cumulative demand served is often a central policy objective~\citep{bertsimas2021data,YU201258}.

To quantify the performance loss under these metrics due to the decentralized profit-maximizing route choices of drivers, we adopt the classical notion of the \emph{price of anarchy} (PoA), which quantifies the worst-case ratio (over problem instances) between the equilibrium system objective and optimal system objective obtained under a centralized authority coordinating the drivers' actions. 
\vspace{-2pt}
\begin{definition}[Price of Anarchy] \label{def:poa}
For a given family of demand functions $\mathcal{L}$, the profit PoA, denoted $\text{P-PoA}(\mathcal{L})$, is defined to be the worst case profit ratio over informal transit instances (see Definition~\ref{def:informal_transit_instance}) of the optimum cumulative driver profit to that achieved by an equilibrium driver allocation. The rider welfare PoA, denoted $\text{R-PoA}(\mathcal{L})$, is defined similarly.
{\setlength{\abovedisplayskip}{4pt}
\setlength{\belowdisplayskip}{4pt}
\setlength{\jot}{1pt}
\begin{align}
    \text{P-PoA}(\mathcal{L}) = \sup_{I \in \mathcal{I}_{\mathcal{L}}} \sup_{\x^* \in \Omega} \sup_{\x^{Eq} \in \Omega^{Eq}} \frac{P_I(\x^*)}{P_I(\x^{Eq})} \quad \quad \quad \text{R-PoA}(\mathcal{L}) = \sup_{I \in \mathcal{I}_{\mathcal{L}}} \sup_{\x^* \in \Omega} \sup_{\x^{Eq} \in \Omega^{Eq}} \frac{R_I(\x^*)}{R_I(\x^{Eq})}
\end{align}}
\end{definition}
In the remainder of this work, we focus on the demand function class $\mathcal{L}_V$ induced under an adaptation of Vickrey's bottleneck model (see Section~\ref{subsec:rider-payoffs}). Hence, for notational simplicity, we will drop the dependency on $\mathcal{L}$ and use $\text{P-PoA}, \text{R-PoA}$ in place of $\text{P-PoA}(\mathcal{L}), \text{R-PoA}(\mathcal{L})$ respectively. Similarly, we will use $P, R$ instead of $P_I, R_I$ when the choice of instance $I$ is unambiguous and clear.

\vspace{-5pt}
\subsection{Additional Discussion of Modeling Assumptions} \label{subsec:model-assumptions}
\vspace{-2pt}

We provide additional discussion on our modeling assumptions. First, in modeling a minibus trip (left of Figure~\ref{fig:warmup-model}), we assume that after dropping off riders, drivers return empty to the origin and continue serving the same route, reflecting the operation of informal transit systems during morning or evening peaks when demand is highly unidirectional~\citep{bjorkegren2025public}. While we assume a round-trip travel time of $2 l_i$, our results extend naturally when the two legs of the trip have asymmetric travel times. Further, consistent with practice, where over 96\% of minibuses often operate at full capacity~\citep{bjorkegren2025public}, we assume all minibuses operate at capacity $F$.

Next, we abstract from physical road congestion and assume fixed travel times $l_i$ independent of the number of drivers on a route, consistent with prior work~\citep{bjorkegren2025public,Conwell2025PrivatizedTransit}. Rather than physical road congestion, our congestion game captures \emph{competition among drivers for riders}, reflecting how service over or under-provision affects payoffs in informal transit systems.

Further, in line with the operational realities and institutional norms of informal transit systems, we assume that drivers commit to a single route to provide service for the entire horizon (e.g., the duration of the morning or evening peak) and do not switch routes between trips. We further show in Appendix~\ref{apdx:route-switching} that gains from route switching are limited at equilibria in our framework.

Finally, our assumption of a fixed menu of routes reflects the operation of many informal transit systems, where the set of routes remain stable over short to medium horizons (e.g., several months), as in the Nalasopara system studied in Section~\ref{sec:experiments}. Given a set of routes, our analysis focuses on the incentive and equilibrium effects in informal transit systems, where changes to route structures are often infeasible or of limited relevance due to regulatory constraints. In this sense, our work serves as a natural starting point for studying higher-level planning decisions such as route design, particularly when new informal transit systems are planned or old ones substantially redesigned.

Overall, while real-world informal and privatized transit systems are complex, our model abstracts from some operational details to isolate the core forces at play, most notably, driver competition for profitable routes and rider queuing delays~\citep{cervero2000informal}, while remaining faithful to how these systems operate in practice. These abstractions enable a tractable framework that yields clear insights into the incentives and equilibrium behavior in informal transit systems, and extending the model to incorporate additional operational features is a valuable future research direction.

\vspace{-6pt}
\section{Price of Anarchy Bounds} \label{sec:poa-bounds}
\vspace{-2pt}
We begin the study of our informal transit system through a price of anarchy (PoA) analysis under both cumulative driver profit and rider welfare metrics introduced in Section~\ref{subsec:poa}. To this end, under the rider behavior model in Section~\ref{subsec:rider-payoffs}, we first derive the minibus rider demand as a function of driver supply and analyze its key properties in Section~\ref{subsec:rider-demand-derivation-vickrey}. Leveraging this characterization, we establish tight PoA bounds for cumulative driver profit and rider welfare in Sections~\ref{subsec:profit-poa} and~\ref{subsec:demand-poa}, respectively. Our results show that decentralized, self-interested driver decisions can lead to bounded yet substantial inefficiencies in both objectives, with comparatively larger losses in rider welfare.

\vspace{-4pt}
\subsection{Derivation of Minibus Demand Function Under Rider Queuing} \label{subsec:rider-demand-derivation-vickrey}
\vspace{-2pt}
This section derives the rider demand served by the minibus as a function of driver supply on a route and studies its properties, which underpin our PoA analysis in Sections~\ref{subsec:profit-poa} and~\ref{subsec:demand-poa}. 

We first characterize, for any driver supply $x_i$ on a route, the total demand served by the minibus at equilibrium under the costs defined in Equation~\eqref{eq:cost-minibus-queuing}. In presenting this result, we focus on the regime when the cost difference between the outside option and the minibus without rider queuing and schedule delays, given by $S_i = c_i^O - c_i^M$, is non-negative. Recall that when $S_i < 0$, regardless of the driver supply, the outside option is always preferred to the minibus, i.e., $\Lambda_i^M(x_i) = 0$ for all $x_i$.
\vspace{-3pt}
\begin{proposition}[Minibus Demand Under Rider Queuing] \label{prop:rider-demand-func-queuing}
Consider an informal transit system with a fixed menu of routes, where riders on each route $i \in [n]$ choose either the minibus and incur a cost as defined in Equation~\eqref{eq:cost-minibus-queuing} or an outside option with a fixed cost $c_i^O$. Further, suppose that $S_i = c_i^O - c_i^M \geq 0$, and define the driver allocation threshold $\Tilde{k}_i^* = \min \left\{ \frac{2 l_i \lambda_i}{F}, \frac{2 l_i \Lambda_i \eta_E \eta_L}{F S_i (\eta_E + \eta_L)} \right\}$. Then: 
\vspace{-8pt}
\begin{align} \label{eq:rider-demand-function}
    \Lambda_i^M(x_i) = 
    \begin{cases}
        \frac{F}{2l_i} \left( t_2 - t_1 + S_i \left( \frac{\eta_E + \eta_L}{\eta_E \eta_L} \right) \right) x_i - \Big(\left(\frac{F}{2l_i}\right)^2 S_i \frac{\eta_E + \eta_L}{\eta_E \eta_L} \frac{t_2 - t_1}{\Lambda_i} \Big) x_i^2, & \text{if } x_i \in [0, \Tilde{k}_i^*) \\[-4pt]
        \Lambda_i, & \text{if } x_i \geq \Tilde{k}_i^*
    \end{cases}
\end{align}
\end{proposition}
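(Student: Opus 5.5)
The plan is to fix a route $i$ and a driver supply $x_i$, and treat the rider side as a Vickrey bottleneck with a \emph{fixed} service capacity $\mu := \mu_i(x_i) = \frac{x_i F}{2 l_i}$ and an outside option. Write $T := t_2 - t_1$ and $K := \frac{\eta_E + \eta_L}{\eta_E \eta_L}$, so that $\lambda_i = \Lambda_i / T$. Two regimes are easy and will be handled first.

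\textbf{Easy regimes.} If $\mu \ge \lambda_i$, every rider can arrive exactly at $t^*$ with no queue. Each rider then pays $c_i^M \le c_i^O$, so $\Lambda_i^M = \Lambda_i$. If $S_i = 0$, any positive queuing or schedule delay makes the minibus strictly worse than the outside option. Hence only riders who can be served with zero delay use it, giving $\Lambda_i^M = \mu T = \frac{F}{2l_i} T x_i$. This agrees with Equation~\eqref{eq:rider-demand-function} at $S_i=0$.

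\textbf{Main case: $S_i > 0$ and $\mu < \lambda_i$.} I will posit an equilibrium structure and then verify it. The minibus discharges riders at rate $\mu$ continuously over an extended window $[t_1 - a,\, t_2 + b]$, so that $\Lambda_i^M = \mu (T + a + b)$. The remaining riders, spread across desired times, take the outside option.

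Equilibrium requires three conditions:
\begin{enumerate}[label=(\roman*)]
\item every minibus user's generalized cost is at most $c_i^O = c_i^M + S_i$;
\item every outside-option user could not obtain a cost below $c_i^O$ by entering at any time;
\item no user wants to shift their arrival time.
\end{enumerate}
Together these force the total delay cost $w(t) + \eta_E (t^*-t)_+ + \eta_L (t-t^*)_+$ to equal exactly $S_i$ for all marginal users.

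At the window endpoints the queue is zero. So the first user, at $t_1 - a$, pays only early schedule delay, and the last user, at $t_2 + b$, pays only late schedule delay. Equating each endpoint's cost with the marginal cost $S_i$ pins down $a$ and $b$.

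The part I expect to be the main obstacle is the interior. I need to work out which desired-arrival-time riders are served (a fraction $\mu/\lambda_i$ of each desired-time cohort in the interior) and show the queue profile $w(t)$ can be chosen consistently. The constraints are that the queue's slope is governed by $\eta_E$ and $\eta_L$ in the usual bottleneck way, and that every served rider's cost equals $S_i$. I will first try the standard Vickrey reasoning: the equilibrium queue slopes are $\eta_E$ and $-\eta_L$ relative to the schedule-delay reference. With partial service of each cohort, the effective schedule mismatch scales by $(1-\mu/\lambda_i)$. This should yield
\[
  a + b = S_i K \Bigl(1 - \tfrac{\mu}{\lambda_i}\Bigr).
\]
Substituting this into $\Lambda_i^M = \mu(T+a+b)$ and using $\mu = \frac{F x_i}{2 l_i}$ and $\lambda_i = \Lambda_i/T$ gives exactly the quadratic branch
\[
  \Lambda_i^M(x_i) = \frac{F}{2l_i}\bigl(T + S_i K\bigr) x_i - \Bigl(\frac{F}{2l_i}\Bigr)^2 S_i K \frac{T}{\Lambda_i} x_i^2 .
\]
It would then remain to check that deviations to other arrival times or to the outside option are unprofitable, and that the equilibrium served mass is unique. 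The served mass should be unique because the common equilibrium cost is strictly monotone in the number served.

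\textbf{Threshold and cap.} Finally, served demand cannot exceed $\Lambda_i$. The quadratic equals $\Lambda_i$ exactly when
\[
  \mu\bigl(T + S_i K(1 - \mu/\lambda_i)\bigr) = \Lambda_i .
\]
This factors as $(\lambda_i - \mu)(\mu S_i K - \Lambda_i) = 0$ after multiplying through by $\lambda_i$. Its roots are $\mu = \lambda_i$ and $\mu = \Lambda_i/(S_i K)$, i.e.
\[
  x_i = \frac{2 l_i \lambda_i}{F} \qquad\text{and}\qquad x_i = \frac{2 l_i \Lambda_i \eta_E \eta_L}{F S_i(\eta_E + \eta_L)} .
\]
The smaller root is $\Tilde{k}_i^*$. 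For $x_i \ge \Tilde{k}_i^*$, the extended window covers everyone at common delay cost at most $S_i$, so all $\Lambda_i$ riders are served. I will check that the quadratic is increasing on $[0, \Tilde{k}_i^*)$, which makes the piecewise function continuous and establishes Equation~\eqref{eq:rider-demand-function}.
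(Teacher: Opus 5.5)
\textbf{The step that fixes $a$ and $b$ fails.} Your overall frame matches the paper: serve at rate $\mu$ over a window of length $t_2-t_1+a+b$, and check that the quadratic meets $\Lambda_i$ at the roots $\mu=\lambda_i$ and $\mu=\Lambda_i/(S_iK)$. Your final formula $a+b=S_iK(1-\mu/\lambda_i)$ is also correct. The problem is the step meant to produce it. You pin down $a$ and $b$ by making the endpoint riders marginal: the first rider (desired time $t_1$, served at $t_1-a$ with no queue) and the last rider, so that $\eta_E a=S_i$ and $\eta_L b=S_i$. That gives $a+b=S_iK$ and hence $\Lambda_i^M=\mu(t_2-t_1+S_iK)$. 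This is linear in $x_i$, with no $x_i^2$ term, and contradicts the formula you then assert. The sentence about the ``effective schedule mismatch scaling by $(1-\mu/\lambda_i)$'' does not bridge this. In fact the endpoint riders are inframarginal: each pays only $S_i(1-\mu/\lambda_i)<S_i$. For the same reason, minibus users have no common equilibrium cost, so your uniqueness remark needs rephrasing.

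\textbf{The missing idea is the three-phase queue profile.} The paper adapts this profile from \citet{GONZALES20121519}.
\begin{itemize}
\item Early phase: the queue rises at slope $\eta_E$ from $0$ at $t_A$ to $S_i$ at $t_B$, so $t_B-t_A=S_i/\eta_E$.
\item Middle phase: the queue stays flat at $S_i$ on $[t_B,t_C]$. Only here is a fraction $\mu/\lambda_i$ of each desired-time cohort served, exactly on time. These are the marginal riders, indifferent to the outside option.
\item Late phase: the queue falls at slope $\eta_L$, with $t_D-t_C=S_i/\eta_L$.
\end{itemize}
Every rider with desired time in $[t_1,t_B]$ is served early, a mass $N_E=\mu S_i/\eta_E$. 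Hence $t_B-t_1=N_E/\lambda_i$, which gives $a=t_1-t_A=\frac{S_i}{\eta_E}(1-\mu/\lambda_i)$; symmetrically, $b=\frac{S_i}{\eta_L}(1-\mu/\lambda_i)$. Your endpoint condition actually computes the whole early-phase length $t_B-t_A$, not the overhang before $t_1$. The paper states the same computation as $t_C-t_B=(1-S_i/\bar{S}(x_i))(t_2-t_1)$, using that $N_E+N_L=\mu S_iK$ is the fraction $S_i/\bar{S}(x_i)$ of $\Lambda_i$, where $\bar{S}(x_i)=\frac{\Lambda_i\eta_E\eta_L}{\mu(\eta_E+\eta_L)}$.

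\textbf{The threshold regime needs a separate argument.} For $\Lambda_i/(S_iK)\le\mu<\lambda_i$, you need the paper's case (i). When $S_i\ge\bar{S}(x_i)$, the equilibrium is the classical bottleneck with no outside option. Its worst-off rider (on time, at the peak queue) pays $\bar{S}(x_i)\le S_i$, so nobody defects and $\Lambda_i^M=\Lambda_i$. A ``common delay cost'' is again the wrong description there, since costs vary across riders up to $\bar{S}(x_i)$.
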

\vspace{-12pt}

\proof{Proof Sketch.}
Fix a route $i$ with $S_i \geq 0$ and $x_i$ drivers, resulting in a service rate of $\mu_i(x_i) = \frac{x_i F}{2 l_i}$. Then, defining the time horizon over which drivers service the route as $T_i(x_i)$, the total mass of riders served is: $\Lambda_i^M(x_i) = \min \{ \mu_i(x_i), \lambda_i \} T_i(x_i)$. Note $T_i(x_i) \geq \Delta := t_2 - t_1$, as riders may be willing to arrive earlier or later than their desired destination arrival times under the costs in Equation~\eqref{eq:cost-minibus-queuing}. 

To establish an expression for $T_i(x_i)$, we first characterize the equilibrium of riders’ mode choice (minibus versus outside option) and arrival-time decisions on a route $i$ with $x_i$ drivers by adapting the equilibrium characterization in Vickrey’s bottleneck model with an outside option from~\cite{GONZALES20121519} to our setting (see Proposition~\ref{prop:gonzales-eq-characterization} in Appendix~\ref{apdx:pf-rider-demand-gonzalez}). Combining this equilibrium characterization with the property that rider's desired destination arrival times are uniformly distributed, we derive a closed-form expression for the total service time. When the system is under-supplied (i.e., $\mu_i(x_i) < \lambda_i$), $T_i(x_i) = \Delta + \min \{ S_i, \Bar{S}(x_i) \} \left( \frac{\eta_E + \eta_L}{\eta_E \eta_L} \right) \left( 1 - \frac{\Delta x_i F}{2 l_i \Lambda_i} \right)$, where $\Bar{S}(x_i) \! = \! \frac{\Lambda_i \eta_E \eta_L}{\mu_i(x_i)(\eta_E + \eta_L)}$ (see Corollary~\ref{cor:service-time} in Appendix~\ref{apdx:pf-rider-demand-gonzalez}). When the system is over-supplied, $T_i(x_i) \! = \! \Delta$. 

Finally, we evaluate the rider demand $\Lambda_i^M(x_i) = \mu_i(x_i) T_i(x_i)$ by substituting the derived service time relation in the different regimes for driver supply, yielding the expression in Equation~\eqref{eq:rider-demand-function}. 
\endproof


For a complete proof, see Appendix~\ref{apdx:pf-rider-demand-gonzalez}. Proposition~\ref{prop:rider-demand-func-queuing} implies that the rider demand function is concave quadratic up to a driver supply threshold $\Tilde{k}_i^*$, beyond which all rider demand is served. The resulting demand function in Equation~\eqref{eq:rider-demand-function}, shown on the left of Figure~\ref{fig:induced-rider-demand}, highlights a key departure of our setting from standard equilibrium models, in which full demand capture occurs only once the saturation condition $\mu_i(x_i) = \lambda_i$ is met, as with the capacity-constrained demand function on the right of Figure~\ref{fig:induced-rider-demand}. Specifically, Proposition~\ref{prop:rider-demand-func-queuing} implies that all rider demand is served at a threshold $\Tilde{k}_i^*$, which may be strictly smaller than the saturation threshold $k_i^* = \frac{2 l_i \lambda_i}{F}$ at which $\mu_i(x_i) = \lambda_i$. Thus, all rider demand can be served even in an under-supplied system ($\mu_i(x_i)<\lambda_i$) if the outside option is sufficiently unattractive, albeit while inducing rider queuing. Once the driver supply exceeds the saturation threshold $k_i^* = \frac{2 l_i \lambda_i}{F}$, rider queuing is eliminated at equilibrium.

We now leverage Proposition~\ref{prop:rider-demand-func-queuing} to establish key continuity and monotonicity properties. 

\vspace{-2.5pt}
\begin{corollary}[Continuity and Monotonicity of Minibus Rider Demand] 
\label{cor:monotonicity-rider-demands}
The minibus rider demand function $\Lambda_i^M(\cdot)$ given by Equation~\eqref{eq:rider-demand-function} for any route $i$ is continuous and monotonically non-decreasing in the driver allocation $x_i$ on that route.
\end{corollary}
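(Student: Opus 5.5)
The plan is to work directly with the closed form in Equation~\eqref{eq:rider-demand-function} from Proposition~\ref{prop:rider-demand-func-queuing}. Fix a route $i$ and introduce the shorthand $A = \frac{F}{2l_i}$, $\Delta = t_2 - t_1$, and $K = S_i\frac{\eta_E+\eta_L}{\eta_E\eta_L} \ge 0$. On $[0,\Tilde{k}_i^*)$ the demand is then the quadratic
\[
q(x) = A(\Delta+K)x - \frac{A^2K\Delta}{\Lambda_i}x^2 .
\]
The two candidate thresholds in $\Tilde{k}_i^*$ become $r_1 = \frac{2l_i\lambda_i}{F} = \frac{\Lambda_i}{A\Delta}$ and $r_2 = \frac{2 l_i \Lambda_i \eta_E \eta_L}{F S_i (\eta_E + \eta_L)} = \frac{\Lambda_i}{AK}$. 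With this notation, $\Tilde{k}_i^* = \min\{r_1,r_2\}$.

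The key step is the factorization
\[
q(x) - \Lambda_i = -\frac{A^2K\Delta}{\Lambda_i}\,(x-r_1)(x-r_2),
\]
which holds when $K>0$. I will verify it by expanding the right-hand side. The product of the roots is $r_1 r_2 = \Lambda_i^2/(A^2 K\Delta)$, which recovers the constant term $-\Lambda_i$. The sum of the roots is $r_1+r_2 = \Lambda_i(\Delta+K)/(AK\Delta)$, which recovers the linear coefficient $A(\Delta+K)$. Hence $r_1$ and $r_2$ are exactly the two roots of $q(x)=\Lambda_i$.

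Continuity then follows immediately. Each piece is a polynomial, so the only point to check is the junction, and there $q(\Tilde{k}_i^*) = \Lambda_i$ because $\Tilde{k}_i^*$ is one of these roots.

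For monotonicity, note that $q$ is a concave parabola. It satisfies $q(x)=\Lambda_i$ at $r_1$ and $r_2$, so its vertex sits at the midpoint $(r_1+r_2)/2 \ge \min\{r_1,r_2\} = \Tilde{k}_i^*$. Therefore $q'(x) \ge 0$ on $[0,\Tilde{k}_i^*]$, so $q$ is non-decreasing there. Beyond $\Tilde{k}_i^*$ the function is constant at $\Lambda_i$, and continuity at the junction glues the two monotone pieces together.

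I must also treat the degenerate case $S_i=0$, i.e.\ $K=0$. Then $r_2$ is interpreted as $+\infty$, so $\Tilde{k}_i^* = r_1$, and $q(x) = A\Delta x$ is linear and increasing with $q(r_1) = \Lambda_i$. This is the capacity-constrained case.

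The only real obstacle is spotting the root factorization. Without it, one would have to compare $q(\Tilde{k}_i^*)$ against $\Lambda_i$ separately in each branch of the minimum, and likewise locate the vertex relative to $\Tilde{k}_i^*$ case by case. The factorization handles both branches and both properties at once.
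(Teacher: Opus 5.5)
Your proof is correct and follows essentially the same route as the paper's. Continuity comes from checking that the quadratic branch equals $\Lambda_i$ at both candidate thresholds, and monotonicity from showing that the vertex of the concave parabola lies at or beyond $\Tilde{k}_i^*$. Your root factorization packages both checks at once and places the vertex correctly at $(r_1+r_2)/2$, whereas the paper's derivative computation states the bound $x_i \le r_1 + r_2$, a harmless factor-of-two slip since both quantities are at least $\min\{r_1,r_2\}$.
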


\vspace{-6pt}
\begin{corollary}[Continuity and Monotonicity of Per-Driver Profits] 
\label{cor:monotonicity-per-driver-profits}
Consider a route $i$ with a demand function $\Lambda_i^M(\cdot)$ in Equation~\eqref{eq:rider-demand-function}. Then, the per-driver profit $\pi_i(x_i) \! = \! \frac{\Lambda_i^M(x_i) p_i}{x_i}$ is continuous and non-increasing in the driver allocation $x_i$ on route $i$.
\end{corollary}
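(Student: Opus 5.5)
We use the closed form from Proposition~\ref{prop:rider-demand-func-queuing} and treat the two pieces separately. Write $\Lambda_i^M(x_i) = a_i x_i - b_i x_i^2$ on $[0,\Tilde{k}_i^*)$ and $\Lambda_i^M(x_i)=\Lambda_i$ for $x_i \geq \Tilde{k}_i^*$. The coefficients are
\[
a_i = \frac{F}{2l_i}\Big(\Delta + S_i\tfrac{\eta_E+\eta_L}{\eta_E\eta_L}\Big) > 0, \qquad b_i = \Big(\frac{F}{2l_i}\Big)^2 S_i \tfrac{\eta_E+\eta_L}{\eta_E\eta_L}\tfrac{\Delta}{\Lambda_i} \geq 0, \qquad \Delta = t_2-t_1.
\]
On the quadratic piece the per-driver profit is
\[
\pi_i(x_i) = p_i\,(a_i - b_i x_i), \qquad x_i \in (0,\Tilde{k}_i^*).
\]
This is affine in $x_i$ with slope $-p_i b_i \leq 0$, so it is continuous and non-increasing there. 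At $x_i=0$ we define $\pi_i(0)$ as the limit $p_i a_i$, which is the natural convention and is what the paper means by ``well-defined and bounded''. With this convention, $\pi_i$ is continuous at $0$ and the bound holds on all of $[0,\Tilde{k}_i^*)$.

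On the flat piece, $\pi_i(x_i) = p_i\Lambda_i/x_i$ with $x_i \geq \Tilde{k}_i^*>0$. This is continuous and non-increasing because $p_i\Lambda_i\geq 0$.

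It remains to handle the junction at $\Tilde{k}_i^*$. Corollary~\ref{cor:monotonicity-rider-demands} gives continuity of $\Lambda_i^M$ there, meaning $a_i\Tilde{k}_i^* - b_i(\Tilde{k}_i^*)^2 = \Lambda_i$. Dividing by $\Tilde{k}_i^*>0$ shows that $p_i(a_i - b_i\Tilde{k}_i^*) = p_i\Lambda_i/\Tilde{k}_i^*$, so the left limit of $\pi_i$ matches its value at $\Tilde{k}_i^*$ and $\pi_i$ is continuous there. The two pieces are each non-increasing and agree at the junction, so $\pi_i$ is non-increasing on all of $[0,\infty)$.

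The main point needing care is this junction identity. I expect to take it from Corollary~\ref{cor:monotonicity-rider-demands}. As a check, it can also be verified directly in both cases of the minimum defining $\Tilde{k}_i^*$.
\begin{itemize}
\item If $\Tilde{k}_i^* = 2l_i\lambda_i/F$, then $\frac{F}{2l_i}\Tilde{k}_i^* = \lambda_i$ and $\Delta\lambda_i = \Lambda_i$. The quadratic evaluates to $\lambda_i\big(\Delta + S_i\frac{\eta_E+\eta_L}{\eta_E\eta_L}\big) - \lambda_i^2 S_i\frac{\eta_E+\eta_L}{\eta_E\eta_L}\frac{\Delta}{\Lambda_i}$, and the $S_i$ terms cancel, leaving $\Lambda_i$.
\item In the other case, continuity at the threshold comes out of the service-time formula. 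The term $\min\{S_i,\Bar S(x_i)\}$ switches exactly where the full demand is reached.
\end{itemize}
Finally, the degenerate case $S_i=0$ gives $b_i=0$. Then $\pi_i$ is constant at $p_i F\Delta/(2l_i)$ up to $k_i^*$ and equals $p_i\Lambda_i/x_i$ afterward, which is consistent with the argument above.
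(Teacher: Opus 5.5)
Your proposal is correct and follows essentially the same argument as the paper. On $[0,\tilde{k}_i^*)$ the per-driver profit is the affine function $p_i(a_i - b_i x_i)$ with nonpositive slope, on $[\tilde{k}_i^*,\infty)$ it is $p_i\Lambda_i/x_i$, and the two pieces are glued by the continuity of $\Lambda_i^M$ at $\tilde{k}_i^*$ from Corollary~\ref{cor:monotonicity-rider-demands}. Your convention $\pi_i(0)=p_i a_i$ matches how the paper uses $\pi_i(0)$ in its tightness examples. Your second junction bullet can be made concrete by substituting $\tilde{k}_i^* = \frac{2l_i\Lambda_i\eta_E\eta_L}{F S_i(\eta_E+\eta_L)}$ into the quadratic, which gives exactly $\Lambda_i$.
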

\vspace{-2pt}

For proofs of Corollaries~\ref{cor:monotonicity-rider-demands} and~\ref{cor:monotonicity-per-driver-profits}, see Appendices~\ref{apdx:pf-cor1} and~\ref{apdx:pf-cor2}, respectively. Corollary~\ref{cor:monotonicity-rider-demands} implies a natural monotonicity condition that the served rider demand is non-decreasing in the driver supply, implying that the cumulative profit $p_i \Lambda_i^M(x_i)$ on a route is also non-decreasing. However, Corollary~\ref{cor:monotonicity-per-driver-profits} shows that per-driver profit decreases in driver supply, a consequence of the concavity of the rider demand function in Equation~\eqref{eq:rider-demand-function} and, hence, of the profit function $p_i \Lambda_i^M(x_i)$.

\vspace{-5pt}
\subsection{Price of Anarchy for Cumulative Driver Profit} \label{subsec:profit-poa}
\vspace{-2.5pt}
Building on the rider demand function characterization in Proposition~\ref{prop:rider-demand-func-queuing} and the monotonicity properties in Corollaries~\ref{cor:monotonicity-rider-demands} and~\ref{cor:monotonicity-per-driver-profits}, we now analyze the PoA in our informal transit system with respect to cumulative driver profits. We show that the profit PoA is at most 2 and that this bound is tight: for any $\varepsilon > 0$, there exists an instance $I \in \mathcal{I}_{\mathcal{L}_V}$ for which the profit ratio is at least $2 - \varepsilon$. 
\vspace{-2.5pt}
\begin{theorem}[PoA Upper Bound for Cumulative Driver Profit] \label{thm:poa-driver-profit}
Consider an informal transit system with a class $\mathcal{L}_V$ of rider demand functions $(\Lambda_i^M(\cdot))_{i \in [n]}$ in Equation~\eqref{eq:rider-demand-function}. Then, $\text{P-PoA} \leq 2$.
\end{theorem}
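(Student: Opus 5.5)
Write $f_i(x_i) := p_i \Lambda_i^M(x_i)$ for the total profit on route $i$. Then $P(\x) = \sum_i f_i(x_i)$ and $\pi_i(x_i) = f_i(x_i)/x_i$. The plan is the standard Roughgarden-style argument for concave, non-decreasing "value" functions, using only two facts about $f_i$. First, by Corollary~\ref{cor:monotonicity-rider-demands}, $f_i$ is continuous and non-decreasing. Second, $f_i$ is concave on $[0,\infty)$ with $f_i(0)=0$. Concavity follows from Proposition~\ref{prop:rider-demand-func-queuing}: $\Lambda_i^M$ is a concave quadratic on $[0,\Tilde{k}_i^*)$ and constant at $\Lambda_i$ afterwards, and it is continuous and non-decreasing, so its left derivative at $\Tilde{k}_i^*$ is nonnegative. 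From these two facts we get the key inequality
\[
f_i(y) \le f_i(x) + \frac{f_i(x)}{x}\,(y - x)_+ \qquad \text{for all } x>0,\ y\ge 0.
\]
For $y\le x$ this is monotonicity. For $y>x$, concavity with $f_i(0)=0$ gives $f_i(y)-f_i(x) \le (y-x)\,\partial^+ f_i(x) \le (y-x) f_i(x)/x$, since the average slope from $0$ dominates the right derivative. In the case $x_i=0$, we use instead $f_i(y) \le y\,\pi_i(0^+)$, where $\pi_i(0^+)=\sup_x \pi_i(x)$ by Corollary~\ref{cor:monotonicity-per-driver-profits}.

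Next, fix an equilibrium $\x^{Eq}$ and any $\x^*\in\Omega$. Let $\pi^\star$ denote the common per-driver profit on used routes. By Definition~\ref{def:eq-def}, $\pi_i(x_i^{Eq}) = \pi^\star$ whenever $x_i^{Eq}>0$, and $\pi_j(x_j^{Eq}) \le \pi^\star$ for every $j$. For unused routes, where $x_j^{Eq}=0$, the value $\pi_j(0)$ is understood as the continuous extension (the limit $p_j\cdot$slope at $0$), which is finite by the paper's remark that $\pi_i$ is bounded; continuity then gives $\pi_j(0)\le \pi^\star$. Applying the inequality route by route yields
\[
f_i(x_i^*) \le f_i(x_i^{Eq}) + \pi^\star (x_i^* - x_i^{Eq})_+ .
\]
This holds on both used and unused routes: on unused routes $f_i(x_i^*)\le x_i^*\pi_i(0)\le x_i^*\pi^\star$.

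Summing over $i$, and using $\sum_i (x_i^*-x_i^{Eq})_+ \le \sum_i x_i^* = D$, gives $P(\x^*) \le P(\x^{Eq}) + \pi^\star D$. Finally, $\pi^\star D = \sum_{i:x_i^{Eq}>0} \pi^\star x_i^{Eq} = \sum_i f_i(x_i^{Eq}) = P(\x^{Eq})$, so $P(\x^*)\le 2P(\x^{Eq})$. The ratio is well defined provided $P(\x^{Eq})>0$. This holds since $D>0$ and every $p_i>0$ and $S_i\ge 0$ give a positive slope at $0$, so every route with positive drivers serves positive demand. If some $p_i=0$ while that route is used, we get $\pi^\star=0$ and hence $P(\x^*)=0$ as well, a degenerate case.

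The main obstacle I expect is rigorously establishing global concavity of $f_i$, and hence the average-slope inequality, across the kink at $\Tilde{k}_i^*$. This matters especially when $\Tilde{k}_i^*$ equals the saturation threshold $2l_i\lambda_i/F$ rather than the point where the quadratic peaks. I will handle it by checking that the quadratic in Equation~\eqref{eq:rider-demand-function} has nonnegative derivative on $[0,\Tilde{k}_i^*]$ and reaches $\Lambda_i$ continuously there; both should follow from the choice of $\Tilde{k}_i^*$ as the minimum of the two thresholds. A secondary subtlety is the treatment of $\pi_j(0)$ for unused routes, which I resolve through the limit argument above.
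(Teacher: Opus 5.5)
Your proof is correct and is essentially the paper's argument. Your route-wise inequality with $(x_i^*-x_i^{Eq})_+$ repackages the paper's split into $L_1=\{i: x_i^{Eq}\ge x_i^*\}$ (handled by demand monotonicity) and $L_2=\{i: x_i^{Eq}< x_i^*\}$ (handled by per-driver profit monotonicity plus the equilibrium condition), with the identical closing step $\pi^\star D = P(\x^{Eq})$. Note that the $y>x$ case of your inequality is exactly $\pi_i(y)\le \pi_i(x)$, i.e., Corollary~\ref{cor:monotonicity-per-driver-profits}, so the global-concavity detour you flag as the main obstacle is not needed.
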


\vspace{-7pt}

\begin{proposition}[Tightness of P-PoA Bound] \label{prop:tightness-poa-profit}
For any $\varepsilon > 0$, there exists an informal transit instance $I \in \mathcal{I}_{\mathcal{L}_V}$ such that the profit ratio exceeds $2-\varepsilon$.
\end{proposition}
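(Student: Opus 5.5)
We construct a two-route instance with capacity-constrained demand ($S_1=S_2=0$, which is admissible in $\mathcal{L}_V$). For this demand,
\[
\Lambda_i^M(x_i)=\frac{F}{2l_i}\,\Delta\,\min\{x_i,k_i^*\}, \qquad \Delta:=t_2-t_1 .
\]
The per-driver profit is therefore the constant $\frac{F\Delta p_i}{2l_i}$ when $x_i\le k_i^*$, and it equals $\frac{p_i\Lambda_i}{x_i}$ when $x_i>k_i^*$. This is the classical Pigou structure: one route has constant per-driver profit, like a constant-cost link, and the other has profit decaying as $1/x$ once saturated. Selfish drivers crowd onto the saturated route past the point where it adds any value.

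For the construction, normalize $F=2$, $l_1=l_2=1$, $\Delta=1$ and $D=1$, so that $\Lambda_i^M(x_i)=\min\{x_i,\Lambda_i\}$ and $k_i^*=\Lambda_i$.

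\textbf{Route 1 (saturating route).} Take $\Lambda_1=1$ and $p_1=1$. Its per-driver profit is $\pi_1(x)=1$ for $x\le 1$.

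\textbf{Route 2 (unsaturating route).} Take $p_2=1-\delta$ and $\Lambda_2=1$. Its per-driver profit is $\pi_2(x)=1-\delta$ for $x\le 1$.

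This pair by itself gives no loss, since any split serves everyone. Route 1 must instead be small. So take $\Lambda_1=a<1$, keeping $p_1=1$. Then
\[
\pi_1(x)=1 \ \text{ for } x\le a, \qquad \pi_1(x)=\frac{a}{x}\ \text{ for } x>a .
\]

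\textbf{Equilibrium.} At an interior equilibrium we need $a/x_1=1-\delta$. This gives $x_1=\frac{a}{1-\delta}$, which lies in $(a,1)$ provided $a<1-\delta$. The equilibrium profit is
\[
P(\x^{Eq})=a+(1-\delta)\left(1-\frac{a}{1-\delta}\right)=1-\delta .
\]

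\textbf{Optimum.} Put $x_1=a$ and $x_2=1-a$. The optimal profit is
\[
P(\x^*)=a+(1-\delta)(1-a).
\]

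\textbf{Ratio.} The profit ratio is
\[
\frac{P(\x^*)}{P(\x^{Eq})}=\frac{a+(1-\delta)(1-a)}{1-\delta}=1+\frac{a\delta}{1-\delta}.
\]
This only approaches $1+a$, so a single constant route is too weak. The ratio is maximized with $\delta$ near $1$ and $a<1-\delta$, which forces the extra term $\frac{a\delta}{1-\delta}$ to stay below $\delta<1$. Pushing it to $2$ needs a better instance. The missing ingredient is that route 2 must also be capped.

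\textbf{Repaired instance.} Give route 2 the parameters $p_2=\delta$ (small) and $\Lambda_2=1$. Give route 1 the parameters $p_1=1$ and $\Lambda_1=1-\delta$, using the full normalization with $D=1$. Then:
\begin{itemize}
\item \emph{Optimum.} Set $x_1=1-\delta$ and $x_2=\delta$. The optimal profit is $(1-\delta)+\delta^2$.
\item \emph{Equilibrium.} Route 1 stays strictly more profitable per driver than route 2 as long as $\frac{1-\delta}{x_1}>\delta$, i.e.\ for all $x_1<\frac{1-\delta}{\delta}$. Since $\frac{1-\delta}{\delta}>1$ for small $\delta$, all drivers go to route 1. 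The equilibrium profit is $1-\delta$.
\end{itemize}
This still gives a ratio near $1$. The right scaling is to make route 2's unit profit comparable to the equilibrium per-driver profit on route 1, so that half the drivers are "wasted."

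\textbf{Plan for the final choice.} Set $\pi_2\equiv q$ on its linear region, and choose $\Lambda_1$ and $q$ so that at equilibrium $x_1=\frac{\Lambda_1}{q}\approx 1$. The equilibrium profit is then about $\Lambda_1$. At the optimum, route 1 receives $\Lambda_1$ drivers and route 2 receives $1-\Lambda_1$, earning $q(1-\Lambda_1)$. With $q\approx\Lambda_1$ the ratio is $1+(1-\Lambda_1)$, which tends to $2$ as $\Lambda_1\to 0$.

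\textbf{Verification steps.} Take $q=\Lambda_1/(1-\eta)$ with $\eta$ small, and then let $\Lambda_1\to 0$. We must check three things:
\begin{enumerate}
\item $k_2^*=\Lambda_2\ge 1$, so that route 2 never saturates.
\item $S_i=0$, $p_i>0$, and the remaining primitives are legal, so the instance lies in $\mathcal{I}_{\mathcal{L}_V}$.
\item The equilibrium is essentially unique by Corollary~\ref{cor:monotonicity-per-driver-profits}, so the supremum over $\Omega^{Eq}$ is attained at this allocation.
\end{enumerate}
The main obstacle is this parameter tuning. The ratio exceeds $2-\varepsilon$ once $\Lambda_1$ and $\eta$ are small enough.
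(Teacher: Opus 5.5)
Your final instance is correct and is essentially the paper's construction: a two-route capacity-constrained ($S_i=0$) Pigou-type instance in which a small route saturates at $k_1^*=\Lambda_1$ and selfish drivers crowd onto it until its diluted per-driver profit matches the flat profit $q$ of the other route, giving ratio $\frac{\Lambda_1+q(1-\Lambda_1)}{q}=2-\eta-\Lambda_1$ (the paper instead caps route 2 at $\tilde k_2^*=1-\tilde k_1^*$ and uses the corner equilibrium $(1,0)$ with exact indifference, which corresponds to your $\eta=0$ case). You should delete the exploratory detours, because several claims in them are false: your second attempt already works (its ratio $1+\frac{a\delta}{1-\delta}$ tends to $2-a$ as $\delta\uparrow 1-a$, and setting $1-\delta=q$ gives exactly your final instance), route 2 need not be capped, and uniqueness neither follows from Corollary~\ref{cor:monotonicity-per-driver-profits} (per-driver profits are flat below $k_i^*$ when $S_i=0$) nor is needed, since exhibiting one equilibrium suffices for the supremum in Definition~\ref{def:poa}.
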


\vspace{-2.5pt}

These results show that in the worst case, selfish driver behavior can reduce cumulative driver profits to half of the centralized optimum, implying a substantial efficiency loss. Yet, unlike settings with unbounded PoAs~\citep{kannan2013price}, the resulting loss in cumulative driver profits remains bounded due to the concavity of the demand function in Equation~\eqref{eq:rider-demand-function}. We now prove these results. 

\vspace{-3pt}

\proof{Proof of Theorem~\ref{thm:poa-driver-profit}.}
Fix an informal transit instance $I$ with rider demand functions given by Equation~\eqref{eq:rider-demand-function} for all routes $i \in [n]$. Since our proof applies to all feasible instances $I$, we drop the dependency on $I$ in our notation. For this instance, let $\x^*$ be the cumulative profit-maximizing allocation, i.e., $\x^* \in \argmax_{\x \in \Omega} P(\x) = \sum_{i = 1}^n p_i \Lambda_i^M(x_i)$, and let $\x^{Eq}$ be any equilibrium allocation. Moreover, let $Q$ be the set of routes for which the the cost difference $S_i \geq 0$. Note for any route $i \in [n] \backslash Q$, the minibus serves no rider demand and generates no profit regardless of the allocation. 

At any equilibrium $\x^{Eq}$, all routes $i$ with $x_i^{Eq} > 0$ have the same per-driver profit $\pi^{Eq} = \frac{\Lambda_i^M(x_i^{Eq}) p_i}{x_i^{Eq}}$. Further, by Definition~\ref{def:eq-def}, $\pi^{Eq} \geq \pi_j(x_j^{Eq})$ for all $j \in [n]$. Thus, the total profit under $\x^{Eq}$ can be re-expressed as $P(\x^{Eq}) = \sum_{i \in Q} p_i \Lambda_i^M(x_i^{Eq}) \stackrel{(a)}{=} \sum_{i \in Q} \pi_i(x_i^{Eq}) x_i^{Eq} \stackrel{(b)}{=} \sum_{i \in Q} \pi^{Eq} x_i^{Eq},$ where (a) follows from the definition of $\pi_i(\cdot)$ and (b) follows as $\pi^{Eq} = \pi_i(x_i^{Eq})$ for all routes $i$ with $x_i^{Eq} > 0$. Analogously, letting $\pi_i^* = \pi_i(x_i^*)$ be the per-driver profit on each route $i \in Q$ at $\x^*$, the optimal cumulative profit is $P(\x^*) = \sum_{i \in Q} \pi_i^* x_i^*$. Then, defining $L_1 = \{ i \in Q: x_i^{Eq} \geq x_i^* \}$ and $L_2 = \{ i \in Q: x_i^{Eq} < x_i^* \}$: 
{\setlength{\abovedisplayskip}{-1pt}
\setlength{\belowdisplayskip}{-4pt}
\setlength{\jot}{0pt}
\begin{align*}
    P(\textbf{x}^*) &= \sum_{i \in Q} \pi_i^* x_i^* = \sum_{i \in L_1} \pi_i^* x_i^* + \sum_{i \in L_2} \pi_i^* x_i^* \stackrel{(a)}{\leq} \sum_{i \in L_1} \pi^{Eq} x_i^{Eq} + \sum_{i \in L_2} \pi_i^* x_i^* \stackrel{(b)}{\leq} P(\x^{Eq}) + \sum_{i \in L_2} \pi_i^* x_i^*, \\
    &\stackrel{(c)}{\leq} P(\x^{Eq}) + \sum_{i \in L_2} \pi^{Eq} x_i^* \stackrel{(d)}{\leq} P(\x^{Eq}) + P(\x^{Eq}) = 2 P(\x^{Eq}) 
\end{align*}}
where (a) follows from Corollary~\ref{cor:monotonicity-rider-demands}, as $x_i^{Eq} \geq x_i^*$ for all $i \in L_1$, (b) follows as $\sum_{i \in L_1} \pi^{Eq} x_i^{Eq} \leq \sum_{i \in Q} \pi^{Eq} x_i^{Eq}$ as $L_1 \subseteq Q$, (c) follows as $\pi^{Eq} \geq \pi_i^*$ for all $i \in L_2$ by Corollary~\ref{cor:monotonicity-per-driver-profits}, and (d) follows as $\sum_{i \in L_2} \pi^{Eq} x_i^{*} \leq \sum_{i \in Q} \pi^{Eq} x_i^{*} = \pi^{Eq} \sum_{i \in Q} x_i^{Eq}$ as $L_2 \subseteq Q$ and $\sum_{i \in Q} x_i^* = \sum_{i \in Q} x_i^{Eq} = D$. 
The above analysis holds for any equilibrium allocation $\x^{Eq}$ and instance $I$, 
thus establishing our claim. 
\endproof

\vspace{-2pt}

\proof{Proof of Proposition~\ref{prop:tightness-poa-profit}.}
Consider an instance with $n = 2$ routes and a driver supply $D = 1$. Moreover, suppose that $S_i = 0$ on both routes. Then, the minibus rider demand function on both routes reduces to the capacity-constrained demand function shown on the right of Figure~\ref{fig:induced-rider-demand}: 
{\setlength{\abovedisplayskip}{0pt}
\setlength{\belowdisplayskip}{-1pt}
\setlength{\jot}{1pt}
\begin{align} \label{eq:rider-demand-function-reduction}
    \Lambda_i^M(x_i) = 
    \begin{cases}
        \frac{F (t_2 - t_1)}{2l_i} x_i, & \text{if } x_i \in [0, \Tilde{k}_i^*] \\[-2pt]
        \Lambda_i, & \text{if } x_i > \Tilde{k}_i^*,
    \end{cases}
\end{align}}
where $\Tilde{k}_i^* = k_i^* = \frac{2 l_i \lambda_i}{F}$. Given Equation~\eqref{eq:rider-demand-function-reduction}, consider an instance where $\Tilde{k}_1^* + \Tilde{k}_2^* = D = 1$. Then, the cumulative profit-maximizing allocation is $\x^* = (\Tilde{k}_1^*, \Tilde{k}_2^*)$, resulting in a total profit $P(\x^*) = p_1 \Lambda_1 + p_2 \Lambda_2$. Next, suppose that the per-driver profit on the two routes satisfies $\pi_1(1) = \pi_2(0)$, i.e., the system supports an equilibrium $\x^{Eq} = (1, 0)$. Then, the following equalities must hold: $p_1 \Lambda_1 = \pi_1(1) = \pi_2(0) = p_2 \frac{F(t_2 - t_1)}{2 l_2} = p_2 \Lambda_2 \frac{F}{2 l_2 \lambda_2} = p_2 \Lambda_2 \frac{1}{\Tilde{k}_2^*}$. Using this relation, the profit ratio between the optimal and equilibrium allocation is given by: $\frac{P(\x^*)}{P(\x^{Eq})} = \frac{p_1 \Lambda_1 + p_2 \Lambda_2}{p_1 \Lambda_1} = 1 + \frac{p_2 \Lambda_2}{p_1 \Lambda_1} = 1 + \Tilde{k}_2^*.$ Finally, taking the limit as $\Tilde{k}_2^* \rightarrow 1$ and $\Tilde{k}_1^* \rightarrow 0$ while satisfying $\Tilde{k}_1^* + \Tilde{k}_2^* = D = 1$, the above analysis implies that the profit ratio approaches two, establishing our claim. 
\endproof
\vspace{-2pt}
These proofs rely on key properties of our informal transit system established in Section~\ref{subsec:rider-demand-derivation-vickrey}. The profit PoA upper bound in Theorem~\ref{thm:poa-driver-profit} follows from the monotonicity of the minibus demand and per-driver profits (Corollaries~\ref{cor:monotonicity-rider-demands} and~\ref{cor:monotonicity-per-driver-profits}) and concavity of the demand function in Equation~\eqref{eq:rider-demand-function}. For tightness, Proposition~\ref{prop:tightness-poa-profit} constructs an instance with $S_i = 0$ for all routes $i$. In this regime, the demand function reduces to a piecewise-linear form shown on the right of Figure~\ref{fig:induced-rider-demand}, implying that the capacity-constrained demand function corresponds to the worst-case profit PoA instance. 

\vspace{-5pt}
\subsection{Price of Anarchy for Rider Welfare} \label{subsec:demand-poa}
\vspace{-2pt}
This section extends the PoA analysis from the previous section to the rider welfare metric. Letting $p_{\max}$ and $p_{\min}$ denote the maximum and minimum per-rider minibus profit across the $n$ routes, we show that the PoA with respect to cumulative rider welfare is bounded above by $1 + \frac{p_{\max}}{p_{\min}}$. Moreover, we show that this bound is tight akin to Proposition~\ref{prop:tightness-poa-profit}. These results are formalized below.

\vspace{-3.5pt}
\begin{theorem}[PoA Upper Bound for Rider Welfare] \label{thm:poa-demand-served}
Consider an informal transit system with a class $\mathcal{L}_V$ of rider demand functions $(\Lambda_i^M(\cdot))_{i \in [n]}$ in Equation~\eqref{eq:rider-demand-function}. Then, R-PoA $\leq 1 + \frac{p_{\max}}{p_{\min}}$. 
\end{theorem}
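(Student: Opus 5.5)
The plan is to mirror the proof of Theorem~\ref{thm:poa-driver-profit}, working with rider counts instead of profits and converting between them using the per-rider margins $p_i \in [p_{\min}, p_{\max}]$.

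Fix an instance, a rider-welfare-maximizing allocation $\x^*$, and any equilibrium $\x^{Eq}$. As before, restrict attention to the set $Q$ of routes with $S_i \geq 0$. Let $\pi^{Eq}$ denote the common per-driver profit on routes used at equilibrium, so that $\pi^{Eq} \geq \pi_j(x_j^{Eq})$ for every $j$. Split $Q$ into $L_1 = \{i : x_i^{Eq} \geq x_i^*\}$ and $L_2 = \{i : x_i^{Eq} < x_i^*\}$. Then
\[
R(\x^*) = \sum_{i \in L_1} \Lambda_i^M(x_i^*) + \sum_{i \in L_2} \Lambda_i^M(x_i^*).
\]

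For the $L_1$ term, Corollary~\ref{cor:monotonicity-rider-demands} gives $\Lambda_i^M(x_i^*) \leq \Lambda_i^M(x_i^{Eq})$. Hence this term is at most $\sum_{i \in Q} \Lambda_i^M(x_i^{Eq}) = R(\x^{Eq})$.

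For the $L_2$ term, write $\Lambda_i^M(x_i^*) = \pi_i(x_i^*)\, x_i^*/p_i$. On $L_2$ we have $x_i^* > x_i^{Eq}$, so Corollary~\ref{cor:monotonicity-per-driver-profits} gives $\pi_i(x_i^*) \leq \pi_i(x_i^{Eq}) \leq \pi^{Eq}$. Using $p_i \geq p_{\min}$, the term is therefore at most $\pi^{Eq} \sum_{i \in L_2} x_i^* / p_{\min} \leq \pi^{Eq} D / p_{\min}$.

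It remains to bound $\pi^{Eq} D$. Since every route with positive equilibrium load earns exactly $\pi^{Eq}$,
\[
\pi^{Eq} D = \sum_{i} \pi^{Eq} x_i^{Eq} = \sum_i p_i \Lambda_i^M(x_i^{Eq}) \leq p_{\max} R(\x^{Eq}).
\]
Combining the two terms yields $R(\x^*) \leq \bigl(1 + p_{\max}/p_{\min}\bigr) R(\x^{Eq})$. This holds for every instance and every equilibrium, which gives the claim.

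The main obstacle is the degenerate case where some routes have $p_i = 0$, or where routes have $p_i>0$ but $\pi^{Eq} = 0$. If $p_{\min} = 0$, the bound is vacuous, so we may assume every route in $Q$ has $p_i > 0$, or interpret $p_{\min}$ over routes with positive margin. Even then, we must check that $\pi_i$ is well defined at $x_i = 0$; the text asserts it is bounded, and taking it as the limit $p_i\,\Lambda_i^{M\prime}(0)$ keeps Corollary~\ref{cor:monotonicity-per-driver-profits} valid there. Beyond this, the argument goes through cleanly because of the exact identity $\pi^{Eq} D = P(\x^{Eq})$.
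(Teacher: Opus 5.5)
Your proposal is correct and follows the paper's proof essentially step for step. It uses the same split into $L_1$ and $L_2$, monotonicity of $\Lambda_i^M$ on $L_1$, the bound $\pi_i(x_i^*) \leq \pi^{Eq}$ on $L_2$ converted to riders via $p_{\min}$, and the identity $\pi^{Eq} D = P(\x^{Eq}) \leq p_{\max} R(\x^{Eq})$ to finish.
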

\vspace{-8pt}
\begin{proposition}[Tightness of R-PoA Bound] \label{prop:tightness-poa-demand}
For any $\varepsilon > 0$, there exists an informal transit instance $I \in \mathcal{I}_{\mathcal{L}_V}$ such that rider welfare ratio exceeds $1 + \frac{p_{\max}}{p_{\min}} - \varepsilon$. 
\end{proposition}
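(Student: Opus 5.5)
The statement to prove is Proposition~\ref{prop:tightness-poa-demand}. I would mirror the construction in the proof of Proposition~\ref{prop:tightness-poa-profit}, which works with the capacity-constrained demand function. Take $n=2$ routes, $D=1$, and $S_1=S_2=0$, so that each $\Lambda_i^M(\cdot)$ has the piecewise-linear form of Equation~\eqref{eq:rider-demand-function-reduction} with thresholds $\Tilde{k}_i^* = k_i^*$. Choose the thresholds so that $\Tilde{k}_1^*+\Tilde{k}_2^*=1$. Then $\x^*=(\Tilde{k}_1^*,\Tilde{k}_2^*)$ serves all demand, giving $R(\x^*)=\Lambda_1+\Lambda_2$. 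This allocation is rider-welfare optimal, since no allocation can serve more than $\Lambda_1+\Lambda_2$.

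The idea is to make the low-margin route the one that carries most of the demand but is abandoned at equilibrium. Let route~1 be the high-profit route with $p_1=p_{\max}$, and route~2 the low-profit route with $p_2=p_{\min}$. We want $\x^{Eq}=(1,0)$ to be an equilibrium. By Corollary~\ref{cor:monotonicity-per-driver-profits}, it suffices to impose $\pi_1(1)\ge \pi_2(0)$, and we take equality.

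On route~1 the allocation $x_1=1$ exceeds $\Tilde{k}_1^*$, so $\pi_1(1)=p_1\Lambda_1$. On route~2 the profit per driver as $x_2\to 0^+$ is $\pi_2(0)=p_2 F(t_2-t_1)/(2l_2)=p_2\Lambda_2/\Tilde{k}_2^*$, where $\pi_2(0)$ is understood as this limit, as in the earlier proof. The equilibrium condition therefore becomes
\[
p_{\max}\Lambda_1 = p_{\min}\Lambda_2/\Tilde{k}_2^*,
\qquad\text{i.e.,}\qquad
\Lambda_2/\Lambda_1 = (p_{\max}/p_{\min})\,\Tilde{k}_2^*.
\]
Hence
\[
\frac{R(\x^*)}{R(\x^{Eq})}=\frac{\Lambda_1+\Lambda_2}{\Lambda_1}=1+\frac{p_{\max}}{p_{\min}}\,\Tilde{k}_2^*,
\]
which tends to $1+p_{\max}/p_{\min}$ as $\Tilde{k}_2^*\to 1$ and $\Tilde{k}_1^*\to 0$. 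For $\Tilde{k}_2^*$ close enough to $1$, the ratio exceeds $1+p_{\max}/p_{\min}-\varepsilon$.

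The main step needing care is showing that these parameter choices can be realized by strictly positive primitives. I would fix $F$, $t_1$, $t_2$ and the target ratio $p_{\max}/p_{\min}$. I would then pick $\Lambda_1$ and $\Lambda_2$ satisfying the equilibrium relation above. Next I would set $l_i$ so that $\Tilde{k}_i^*=2l_i\lambda_i/F$ equals its prescribed value; this is possible because $l_i$ enters linearly and $\lambda_i=\Lambda_i/(t_2-t_1)$. Finally I would choose the fares $\Bar{p}_i$ and costs $c_i$ to obtain the desired per-rider profits $p_i$, and choose $c_i^O=c_i^M$ so that $S_i=0$.

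Since $p_{\max}$ and $p_{\min}$ here are the instance's own extreme margins, the bound is approached for every ratio $p_{\max}/p_{\min}$. This establishes tightness of Theorem~\ref{thm:poa-demand-served}.
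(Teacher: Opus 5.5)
Your proposal is correct and follows the paper's proof essentially step for step: two routes with $D=1$ and $S_1=S_2=0$, thresholds with $\Tilde{k}_1^*+\Tilde{k}_2^*=1$, the equilibrium $\x^{Eq}=(1,0)$ obtained by imposing $\pi_1(1)=\pi_2(0)$, and the ratio $1+\Tilde{k}_2^*\,p_1/p_2$, which tends to $1+p_{\max}/p_{\min}$ as $\Tilde{k}_2^*\to 1$. Your explicit check that these choices are realizable by strictly positive primitives (fix $\Lambda_1$, set $\Lambda_2$ from the equilibrium relation, then $l_i$, then $\Bar{p}_i$, $c_i$, and $c_i^O=c_i^M$) is a small addition the paper leaves implicit; the only slip is that the condition $\pi_1(1)\ge\pi_2(0)$ follows directly from Definition~\ref{def:eq-def}, so you do not need Corollary~\ref{cor:monotonicity-per-driver-profits} at that step.
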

\vspace{-3pt}

These results highlight that decentralized, selfish driver decisions can substantially reduce the cumulative rider demand served to as little as a $1/(1 + \frac{p_{\max}}{p_{\min}})$ fraction of the demand that can be served under a centralized optimum that coordinates driver actions. Since  $\frac{p_{\max}}{p_{\min}} \geq 1$, in the worst case, selfish driver behavior is more detrimental to rider welfare than to cumulative driver profits (see Section~\ref{subsec:profit-poa}). That said, in practical settings, the ratio $\frac{p_{\max}}{p_{\min}}$ is typically small, as suggested by our numerical experiments (see Section~\ref{sec:experiments}); for instance, using data from Nalasopara, India, this ratio equals three, resulting in a R-PoA of four. While the proof of Proposition~\ref{prop:tightness-poa-demand} follows analogously to the proof of Proposition~\ref{prop:tightness-poa-profit} (see Appendix~\ref{apdx:pf-tightness-poa-demand}), we emphasize that Theorem~\ref{thm:poa-demand-served} does not follow directly from the profit PoA bound in Theorem~\ref{thm:poa-driver-profit}. In particular, a naive application of the profit PoA bound would yield a rider welfare PoA of $2 \frac{p_{\max}}{p_{\min}}$. However, establishing the sharper bound in Theorem~\ref{thm:poa-demand-served} requires additional arguments and a different set of inequalities (see Appendix~\ref{apdx:pf-poa-rider-welfare}).

\vspace{-5.5pt}
\section{Mechanism I: Cross-subsidization} \label{sec:cross-subsidies}
\vspace{-2pt}
To mitigate the efficiency losses from selfish driver behavior, we study mechanisms through which a public authority can steer informal transit operators toward improved system outcomes in terms of cumulative driver profit and rider welfare. This section studies \emph{cross-subsidization}, in which the public authority sets route-specific tolls and subsidies to influence driver route choices and shape service patterns. The key idea is that tolls can deter excessive driver entry on highly lucrative routes, with the resulting revenue used to subsidize service on less profitable routes. Such mechanisms have practical antecedents in adjacent domains such as ride-hailing~\citep{uber_boost_plus_2022}.

In this section, we show that for any instance of our informal transit system, there exists a budget-balanced (i.e., zero net expenditure) cross-subsidization scheme, which can be derived in closed form, that aligns individual driver incentives with any desired system objective, and can be computed in polynomial time for both cumulative driver profit and rider welfare objectives. These results demonstrate that by appropriately accounting for driver incentives, a public authority can eliminate the PoA inefficiencies identified in Section~\ref{sec:poa-bounds} without incurring any net fiscal costs.

We first introduce cross-subsidization and define the notion of a \emph{budget-balanced} cross-subsidy scheme. In our setting with cross-subsidies, a public authority assigns each route $i \in [n]$ a transfer $\tau_i \in \mathbb{R}$, which influences driver payoffs. Negative transfers represent tolls paid by drivers to operate on a route, while positive transfers represent subsidies. Then, under a driver allocation $x_i$ on route $i$, drivers earn transfer-adjusted profits $\Tilde{\pi}(x_i) = \pi_i(x_i) + \tau_i$. With these modified payoffs, the equilibrium driver allocation under cross-subsidies can be defined analogously to Definition~\ref{def:eq-def}, with the two notions coinciding when the transfer vector $\ttau = (\tau_i)_{i \in [n]} = \mathbf{0}$. Crucially, cross-subsidies affect only driver payoffs and have no direct impact on rider payoffs or the rider demand function, with riders being influenced only through induced changes in driver allocations across routes. We focus, in particular, on \emph{budget-balanced} cross-subsidy schemes, under which the induced equilibrium driver allocation $\x$ satisfies  $\sum_{i = 1}^n \tau_i x_i = 0$, so that the collected revenues finance driver subsidies. 

We now characterize and establish the existence of a budget-balanced cross-subsidy scheme that induces any target driver allocation $\x^*$, e.g., those maximizing cumulative driver profits or rider welfare, as an equilibrium. Moreover, the corresponding cross-subsidy transfer vector yields per-driver profits that are a weighted combination of their profits at $\x^*$ without cross-subsidies.

\vspace{-3pt}
\begin{theorem}[Optimal Budget-Balanced Cross-Subsidization Scheme] \label{thm:cross-subsidies-optimal}
For any informal transit instance with a class $\mathcal{L}_V$ of demand functions in Equation~\eqref{eq:rider-demand-function}, let $\x^* \in \Omega$ be a target driver allocation with $x_j^* > 0$ for at least one route $j$. Then, there exists a budget-balanced cross-subsidization scheme $\ttau = (\tau_i)_{i \in [n]}$, which induces $\x^*$ as an equilibrium. Moreover, $\ttau$ can be derived in closed-form and induces equilibrium per-driver profits $\Tilde{\pi}^{Eq} = \frac{\sum_{i \in [n]} \pi_i^* x_i^*}{\sum_{i \in [n]} x_i^*}$, where $\pi_i^* = \pi_i(x_i^*)$.
\end{theorem}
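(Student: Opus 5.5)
The construction is explicit: equalize every route's transfer-adjusted per-driver profit at the common level $\Tilde{\pi}^{Eq}$. Define $\pi_i^* = \pi_i(x_i^*)$ for every route $i$, including routes with $x_i^* = 0$. For such routes, Corollary~\ref{cor:monotonicity-per-driver-profits} and the boundedness of $\pi_i$ at $0$ from Section~\ref{subsec:eq-def} give the limiting value $\pi_i(0) = \lim_{x \to 0^+} \Lambda_i^M(x)p_i/x$. By Equation~\eqref{eq:rider-demand-function}, this equals $p_i \frac{F}{2l_i}\bigl(t_2-t_1+S_i\frac{\eta_E+\eta_L}{\eta_E\eta_L}\bigr)$, which is finite.

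The candidate target level is
\[
\Tilde{\pi}^{Eq} = \frac{\sum_{i\in[n]}\pi_i^*x_i^*}{\sum_{i\in[n]}x_i^*}.
\]
This is well defined because $x_j^*>0$ for some $j$, and in fact $\sum_i x_i^* = D > 0$. Set $\tau_i = \Tilde{\pi}^{Eq} - \pi_i^*$ for all $i\in[n]$, so that $\Tilde{\pi}_i(x_i^*) = \pi_i^* + \tau_i = \Tilde{\pi}^{Eq}$ on every route. This formula is the closed form claimed in the statement.

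The proof then has two checks. First, the equilibrium condition. Every route, used or not, yields the same transfer-adjusted profit $\Tilde{\pi}^{Eq}$ at $\x^*$. So for any $i$ with $x_i^*>0$ and any $j\in[n]$, we have $\Tilde{\pi}_i(x_i^*) = \Tilde{\pi}_j(x_j^*)$, and the cross-subsidy analogue of Definition~\ref{def:eq-def} holds with equality. Second, budget balance:
\[
\sum_i \tau_i x_i^* = \Tilde{\pi}^{Eq}\sum_i x_i^* - \sum_i \pi_i^* x_i^* = 0,
\]
by the choice of $\Tilde{\pi}^{Eq}$. The induced equilibrium per-driver profit is $\Tilde{\pi}^{Eq}$ by construction, which is the weighted average in the statement.

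The main subtlety is the unused routes $j$ with $x_j^*=0$. For them, $\pi_j(0)$ must be interpreted as the right-limit, and it must be finite so that $\tau_j$ is a real number. Corollary~\ref{cor:monotonicity-per-driver-profits} ensures $\pi_j(\cdot)$ is continuous and non-increasing, and the limit above is finite. Routes outside $Q$ (those with $S_i<0$) have $\pi_i \equiv 0$, so $\tau_i = \Tilde{\pi}^{Eq}$; they pose no issue.

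One might worry that some route $j$ with $x_j^*=0$ would strictly attract drivers. The choice $\tau_j = \Tilde{\pi}^{Eq} - \pi_j(0)$ makes it exactly indifferent, and indifference suffices under the weak inequality in Definition~\ref{def:eq-def}. It is also not a problem that the transfers $\tau_j$ on unused routes are unconstrained: they multiply $x_j^*=0$ in the budget, so they do not affect budget balance.
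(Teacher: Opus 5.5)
Your proof is correct and is essentially the paper's: the paper writes the equal-profit and budget-balance conditions as an $n \times n$ linear system (Equation~\eqref{eq:lin-eq-cross-subsidies}) and solves it by elimination, and its unique solution is exactly your $\tau_i = \Tilde{\pi}^{Eq} - \pi_i^*$, which you posit and verify directly. Your treatment of $\pi_j(0)$ as the finite right-limit on unused routes adds a point of care that the paper leaves implicit.
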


\proof{Proof Sketch.}
For each route $i$, define the per-driver profit at $\x^*$ (without cross-subsidies) as $\pi_i^* = \pi_i(x_i^*) = \frac{\Lambda_i^M(x_i^*) p_i}{x_i^*}$. To induce $\x^*$ as an equilibrium under a cross-subsidy scheme $\ttau$, for any route $i$ with $x_i^* > 0$, its transfer-adjusted per-driver profits are at least that of any other route. Thus, we construct $\ttau$ that satisfies this equilibrium condition with equality for all routes, i.e., $\pi_i^* + \tau_i = \pi_j^* + \tau_j$ for all $i, j$. Combining these relations with the budget-balance condition $\sum_{i \in [n]} \tau_i x_i^* = 0$ yields a system of $n$ (unique) equations with $n$ unknowns (see Equation~\eqref{eq:lin-eq-cross-subsidies}), corresponding to the entries of $\ttau$. Solving this linear system yields an expression for $\ttau$, under which per-driver profits are $\Tilde{\pi}^{Eq}$. 
\endproof
\vspace{-3pt}

For a complete proof, see Appendix~\ref{apdx:pf-thm-cross-subsidy}. Theorem~\ref{thm:cross-subsidies-optimal} shows that cross-subsidies can eliminate PoA inefficiencies arising from selfish driver behavior in informal transit systems by aligning driver incentives with any target driver allocation. 
In this sense, cross-subsidies play a role akin to marginal-cost pricing in congestion games, which restores efficiency under selfish routing~\citep{roughgarden2005selfish}; however, unlike marginal-cost pricing, our cross-subsidy scheme is budget-balanced.

The constructive proof of Theorem~\ref{thm:cross-subsidies-optimal} implies a natural algorithm to implement any desired driver allocation as an equilibrium. First, a central planner computes a target allocation $\x^*$ that maximizes a chosen objective. It then computes the per-driver profits $\pi_i^* = \pi_i(x_i^*) = \frac{\Lambda_i^M(x_i^*) p_i}{x_i^*}$ using the demand function in Equation~\eqref{eq:rider-demand-function} for each route $i$. Finally, it sets the transfer vector $\ttau$ that induces $\x^*$ as an equilibrium by solving the linear system in Theorem~\ref{thm:cross-subsidies-optimal}'s proof (see Equation~\eqref{eq:lin-eq-cross-subsidies}). This procedure implies that the efficacy of cross-subsidies is limited only by the central planner’s ability to compute $\x^*$. As a corollary, the cumulative driver profit and rider welfare maximizing allocations can be implemented in polynomial time as equilibria via a budget-balanced cross-subsidy scheme.

\vspace{-2.5pt}
\begin{corollary}
\label{cor:poly-time-cross-subsidy}
For any informal transit instance with a class $\mathcal{L}_V$ of demand functions, 
let $\x^*$ be the cumulative driver profit or rider welfare-maximizing allocation. Then, there exists a budget-balanced cross-subsidy scheme, computable in polynomial time, that induces $\x^*$ as an equilibrium. 
\end{corollary}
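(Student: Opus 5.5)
By Theorem~\ref{thm:cross-subsidies-optimal}, once an optimal allocation $\x^*$ is known, the budget-balanced transfers $\ttau$ follow from a linear system in closed form. So the corollary reduces to computing $\x^*$ in polynomial time for both objectives. The remaining step, evaluating $\pi_i^* = \Lambda_i^M(x_i^*) p_i / x_i^*$ and solving Equation~\eqref{eq:lin-eq-cross-subsidies}, takes $O(n)$ arithmetic operations. Routes with $x_i^*=0$ need $\pi_i^*$ read as the limit $\pi_i(0^+)$, i.e., the slope of $\Lambda_i^M$ at $0$. This is finite by Equation~\eqref{eq:rider-demand-function}, and the theorem's hypothesis that some $x_j^*>0$ holds automatically since $D>0$.

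The core step is showing that $\max_{\x \in \Omega} \sum_i w_i \Lambda_i^M(x_i)$ is polynomial-time solvable, with weights $w_i = p_i$ for profit and $w_i = 1$ for rider welfare. Each $\Lambda_i^M$ is concave: by Proposition~\ref{prop:rider-demand-func-queuing} it is a concave quadratic on $[0,\Tilde{k}_i^*)$, constant after, and nondecreasing and continuous by Corollary~\ref{cor:monotonicity-rider-demands}. Hence its derivative is nonincreasing across $\Tilde{k}_i^*$, so the objective is separable concave over a simplex.

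I would solve it exactly through the KKT conditions. There is a multiplier $\nu \ge 0$ with $x_i^* = g_i(\nu)$, where $g_i(\nu) = \min\{\Tilde{k}_i^*, \max\{0, (w_i a_i - \nu)/(2 w_i b_i)\}\}$. Here $a_i, b_i$ are the linear and quadratic coefficients in Equation~\eqref{eq:rider-demand-function}. When $b_i = 0$ (the case $S_i=0$), $g_i$ is a step function, giving a fractional-knapsack greedy. Each $g_i$ is piecewise linear and nonincreasing in $\nu$ with at most two breakpoints ($\nu = w_i a_i$ and $\nu = w_i(a_i - 2b_i\Tilde{k}_i^*)$). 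I would sort the $O(n)$ breakpoints and scan the intervals to find the one where $\sum_i g_i(\nu)$ crosses $D$, then solve the resulting linear equation for $\nu$. If even $\nu=0$ gives $\sum_i g_i(0) < D$, every route is saturated and the surplus drivers can be placed arbitrarily, since the problem has $\sum x_i = D$ and demand is flat beyond $\Tilde{k}_i^*$. This runs in $O(n\log n)$ time.

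The main obstacle is this exact optimization step, specifically the degenerate flat pieces. These arise with $S_i = 0$, where $g_i$ jumps at $\nu = w_i a_i$, and they force a careful fractional split among tied routes rather than a unique formula. I would handle ties by allocating the residual $D - \sum_{\text{untied}} g_i$ across tied routes up to their caps $\Tilde{k}_i^*$, which is itself a linear step. Beyond that, I would note that all quantities are rational functions of the input primitives, so the computation is polynomial in the encoding size.
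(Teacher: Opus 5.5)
Your proposal is correct and follows the paper's argument. Theorem~\ref{thm:cross-subsidies-optimal}'s closed-form budget-balanced transfers reduce the claim to computing $\x^*$, and maximizing $\sum_i w_i \Lambda_i^M(x_i)$ over $\Omega$ is a separable concave program. Where the paper only cites concavity and convexity of $\Omega$ to get polynomial-time solvability, your KKT/water-filling computation, together with your check that concavity survives the kink at $\Tilde{k}_i^*$, gives an explicit $O(n\log n)$ procedure that returns $\x^*$ exactly rather than approximately. One small slip: $\pi_i(0^+)$ is $p_i$ times the slope of $\Lambda_i^M$ at $0$, not the slope itself.
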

\vspace{-3pt}
The proof of this result follows by combining Theorem~\ref{thm:cross-subsidies-optimal} with the fact that the rider demand functions in Equation~\eqref{eq:rider-demand-function} are concave quadratic and the feasible set $\Omega$ of driver allocations is convex, enabling the cumulative driver profit and rider welfare maximization problems to be solved in polynomial time. Beyond the above algorithmic and computational results, we provide additional properties of the budget-balanced cross-subsidy scheme derived in Theorem~\ref{thm:cross-subsidies-optimal} in Appendix~\ref{apdx:generality-thm-cross-subsidies}, and discuss multiplicity and uniqueness of induced equilibrium outcomes in Appendices~\ref{apdx:discussion-uniqueness} and~\ref{apdx:pf-uniqueness}. 

\vspace{-6.5pt}
\section{Mechanism II: Fare Optimization} \label{sec:price-optimization}
\vspace{-2pt}

Despite its efficacy, cross-subsidization can be difficult to implement in informal transit systems, where monitoring route choices and revenues across drivers is challenging and compliance is limited. This section therefore studies a more directly enforceable policy lever that is closer to how informal and shared privatized transit systems are regulated in practice: \emph{fare optimization}. Unlike the fixed-fare model studied thus far, we now consider a setting where route-level fares can be set and regulated by a central authority, such as a transit agency or association. This focus on centralized fare setting institutions reflects the institutional reality and status quo of many informal and privatized transit markets, as individual drivers typically do not freely set fares, and, instead, route-level fares are regulated by public authorities and driver associations. For example, jeepney fares in Philippines are approved by the Land Transportation Franchising and Regulatory Board~\citep{pna_2026_ltfrb_fare_hikes}, trotro and shared-taxi fares in Ghana are periodically adjusted through transport unions~\citep{ghanaweb_2025_nungua_gprtu_fares}, and minibus-taxi associations in South Africa retain substantial authority over fare adjustments~\citep{thepost_2026_taxi_fare_adjustments}. These examples also highlight that the fare-setting objective may differ by institution, as public regulators may prioritize rider welfare and affordability, while driver unions may place greater weight on driver profitability.

Although cross-subsidies and fare adjustments are both monetary levers, they differ in a fundamental way. As noted in Section~\ref{sec:cross-subsidies}, cross-subsidies directly affect driver payoffs while leaving rider payoffs and the rider demand function unchanged, with riders affected only indirectly through the induced reallocation of drivers across routes. Fare adjustments, in contrast, directly affect both sides of the market, changing the margin drivers earn per rider and the cost riders incur from taking the minibus, thereby influencing both equilibrium driver allocations and realized demand.

This distinction between the policies crucially shapes the efficacy of fare optimization as a tool for mitigating inefficiencies from selfish driver behavior. Specifically, we show that, unlike cross-subsidies, fare optimization has \emph{asymmetric power} across objectives. Comparing the best fare-optimized centralized outcome to the best fare-optimized equilibrium outcome, formalized through a fare-optimized PoA notion introduced in Section~\ref{subsec:fare-optimization-setup}, we show that optimized fares can recover the fare-optimal centralized benchmark for rider welfare arbitrarily closely (Section~\ref{subsec:fare-welfare}), but cannot improve the tight factor-two PoA bound for cumulative driver profit (Section~\ref{subsec:fare-profit}). In addition to these PoA results, in Section~\ref{subsec:fare-computation}, we develop polynomial-time algorithms to compute the fare-optimized centralized and equilibrium benchmarks under both objectives despite the bi-level and non-convex structure of the joint fare-setting and driver allocation problems.

\vspace{-5pt}

\subsection{Fare Optimization Setup} \label{subsec:fare-optimization-setup}
\vspace{-2pt}

This section extends the model of an informal transit system described in Section~\ref{sec:model} to the setting where the route-level fare vector $\Bar{\p}$ is chosen by the central planner to optimize a desired objective, cumulative driver profit or rider welfare, and is not fixed apriori. In the following, we define the fare-dependent minibus rider demand function and the associated PoA notions to compare the fare-optimized centralized and equilibrium benchmarks under both objectives.

\textbf{Price Feasibility Set and Notation:} For each route $i \in [n]$, the fare $\bar p_i$ is chosen from the interval $\mathcal P_i := \big[ \frac{c_i}{F},\; c_i^O-\eta_T l_i\big],$ which we assume is non-empty. The lower bound ensures non-negative per-rider profits, while the upper bound ensures that, absent queuing and schedule delays, the minibus is weakly preferred to the outside option. Let $\mathcal P := \prod_{i=1}^n \mathcal P_i$ denote the set of feasible fare vectors.

Given a fare $\bar p_i$, the per-rider profit on a route can be defined as $p_i(\bar p_i) := \bar p_i - \frac{c_i}{F}$ and the cost advantage of the minibus, without rider queuing and waiting delays, to the outside option is $S_i(\bar p_i) := c_i^O-\eta_T l_i-\bar p_i.$ Both these quantities now depend on the chosen fares, where increasing $\bar p_i$ raises the driver margin $p_i(\bar p_i)$ but lowers the rider-side minibus cost advantage $S_i(\bar p_i)$.

\textbf{Fare-Dependent Minibus Rider Demand:} Since fares are no longer exogenously fixed, we express the minibus rider demand function, derived in Proposition~\ref{prop:rider-demand-func-queuing}, as a function of both the driver allocation and the fare. Specifically, the minibus rider demand on route $i$ is:
{\setlength{\abovedisplayskip}{0pt}
\setlength{\belowdisplayskip}{0pt}
\setlength{\jot}{1pt}
\begin{align}\label{eq:rider-demand-fare-dependent}
    \Lambda_i^M \!(x_i, \bar p_i) = 
    \begin{cases}
        \frac{F}{2l_i} \left( t_2 \! - \! t_1 \! + \! S_i(\bar p_i) \left( \frac{\eta_E + \eta_L}{\eta_E \eta_L} \right) \right) x_i - \Big(\left(\frac{F}{2l_i}\right)^2 \! \! S_i(\bar p_i) \frac{\eta_E + \eta_L}{\eta_E \eta_L} \frac{t_2 - t_1}{\Lambda_i} \Big) x_i^2, & \!\!\! \text{if } x_i \in [0, \Tilde{k}_i^*(\bar p_i)) \\[-4pt]
        \Lambda_i, & \!\!\! \text{if } x_i \geq \Tilde{k}_i^*(\bar p_i)
    \end{cases}
\end{align}}
which is obtained by substituting $S_i=S_i(\bar p_i)$ in Equation~\eqref{eq:rider-demand-function} . Note here that the threshold $\tilde k_i^*(\bar p_i)$ also is fare-dependent, where $\tilde k_i^*(\bar p_i) := \min\left\{ \frac{2l_i\Lambda_i}{F(t_2-t_1)}, \frac{2l_i\Lambda_i\eta_E\eta_L}{F S_i(\bar p_i)(\eta_E+\eta_L)} \right\}$. Given a fare-dependent rider demand function, define the per-driver profit function on a route $i$ as $\pi_i(x_i, \bar p_i) := \frac{p_i(\bar p_i)\Lambda_i^M(x_i, \bar p_i)}{x_i}$. Then, for any fare vector $\bar \p \in \mathcal P$, let $\Omega^{Eq}(\bar{\mathbf p})\subseteq\Omega$ denote the set of induced equilibrium driver allocations, defined analogously to Definition~\ref{def:eq-def}. That is, a driver allocation $\x \in \Omega^{Eq}(\bar{\mathbf p})$ is an equilibrium if every route $i$ with $x_i>0$ yields weakly higher per-driver profit than any other route.

\textbf{Fare-Optimized Benchmarks and PoA:} We now define the fare-optimized cumulative driver profit and rider welfare benchmarks and the associated PoA notions. Throughout this section, an instance $I$ consists of the model primitives in Definition~\ref{def:informal_transit_instance}, except that the route-level fare vector $\bar{\p}$ is not fixed and is instead chosen from $\mathcal P$. For any instance $I$ with a driver allocation $\x$ and fare vector $\bar{\p}\in\mathcal P$, the cumulative driver profit is $P_I(\x,\bar{\p}) := \sum_{i=1}^n p_i(\bar p_i)\Lambda_i^M(x_i,\bar p_i)$ and rider welfare is $R_I(\mathbf x,\bar{\mathbf p}) := \sum_{i=1}^n \Lambda_i^M(x_i,\bar p_i)$. The corresponding fare-optimized centralized benchmarks are:
{\setlength{\abovedisplayskip}{2pt}
\setlength{\belowdisplayskip}{2pt}
\setlength{\jot}{1pt}
\begin{align*}
    P_I^{\mathrm{fare,opt}} := \sup_{\bar{\mathbf p}\in\mathcal P} \sup_{\mathbf x\in\Omega} P_I(\mathbf x,\bar{\mathbf p}), \qquad 
    R_I^{\mathrm{fare,opt}} := \sup_{\bar{\mathbf p}\in\mathcal P} \sup_{\mathbf x\in\Omega} R_I(\mathbf x,\bar{\mathbf p}).
\end{align*}}
Similarly, the fare-optimized equilibrium cumulative driver profit and rider welfare are:
{\setlength{\abovedisplayskip}{2pt}
\setlength{\belowdisplayskip}{2pt}
\setlength{\jot}{1pt}
\begin{align*}
    P_I^{\mathrm{fare,eq}} := \sup_{\bar{\mathbf p}\in\mathcal P} \inf_{\mathbf x\in\Omega^{\mathrm{Eq}}(\bar{\mathbf p})} P_I(\mathbf x,\bar{\mathbf p}), \qquad R_I^{\mathrm{fare,eq}} := \sup_{\bar{\mathbf p}\in\mathcal P} \inf_{\mathbf x\in\Omega^{\mathrm{Eq}}(\bar{\mathbf p})}
    R_I(\mathbf x,\bar{\mathbf p}).
\end{align*}}
Here, for any fixed fare vector, the infimum evaluates the worst equilibrium outcome induced by those fares. The outer supremum over fare vectors captures a central planner's ability to choose fares to optimize a desired objective, a common feature of informal transit systems, and Section~\ref{subsec:fare-computation} shows that the corresponding fares under both objectives can be computed efficiently. We then define the fare-optimized PoAs by comparing the centralized and equilibrium benchmarks:
{\setlength{\abovedisplayskip}{4pt}
\setlength{\belowdisplayskip}{4pt}
\setlength{\jot}{1pt}
\begin{align*}
    \mathrm{P\text{-}PoA}^{\mathrm{fare}} := \sup_{I\in\mathcal I} \frac{P_I^{\mathrm{fare,opt}}}{P_I^{\mathrm{fare,eq}}},
    \qquad
    \mathrm{R\text{-}PoA}^{\mathrm{fare}} := \sup_{I\in\mathcal I} \frac{R_I^{\mathrm{fare,opt}}}{R_I^{\mathrm{fare,eq}}}.
\end{align*}}
A few comments about the above PoA notions are in order. First, they are the fare-optimization analogues of the PoA notions in Definition~\ref{def:poa}, with the difference that the planner may choose fares to optimize either objective under both the centralized and equilibrium benchmarks. Next, the fare-optimized PoAs are upper bounded by their fixed-fare counterparts, i.e., $\mathrm{P\text{-}PoA}^{\mathrm{fare}} \leq \mathrm{P\text{-}PoA}$ and $\mathrm{R\text{-}PoA}^{\mathrm{fare}} \leq \mathrm{R\text{-}PoA}$, as the planner can recover the fixed-fare PoAs by choosing the centralized optimum fare vector. In the remainder of this section, we characterize the fare-optimized PoA under both objectives and develop algorithms for computing the centralized and equilibrium benchmarks.

\subsection{Fare-Optimized PoA for Rider Welfare} \label{subsec:fare-welfare}

This section analyzes the fare-optimized PoA for rider welfare. Our main result is that fare optimization can essentially eliminate the rider welfare loss from decentralized driver behavior. In particular, Theorem~\ref{thm:fare-welfare-poa} establishes that optimized fares can induce an equilibrium whose rider welfare is arbitrarily close to the fare-optimized centralized benchmark. This guarantee is constructive, and the fares needed to realize it can be computed efficiently, as discussed in Appendix~\ref{apdx:price-opt-welfare}.

\begin{theorem}[Fare-optimized Rider Welfare PoA]
\label{thm:fare-welfare-poa}
Suppose $\frac{c_i}{F} < c_i^O-\eta_T l_i$ for all routes $i\in[n]$. Then, for every instance $I$ and every $\varepsilon>0$, there exists a fare vector $\Bar \p^\varepsilon\in\mathcal P$ such that the induced driver equilibrium $\x^{\mathrm{Eq}}$ is unique and $R_I(\x^{\mathrm{Eq}}, \bar{\p}^{\varepsilon}) \geq \frac{1}{1+\varepsilon} R_I^{\mathrm{fare,opt}}.$ Consequently, $\mathrm{R\text{-}PoA}^{\mathrm{fare}}= 1.$
\end{theorem}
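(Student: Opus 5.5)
The plan rests on two observations: rider welfare is maximized when fares are as low as possible, and small but carefully proportioned per-rider margins can steer selfish drivers to (essentially) any target allocation.

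\textbf{Step 1: identify the benchmark.} For fixed $x_i$, differentiating \eqref{eq:rider-demand-fare-dependent} in $S_i$ gives coefficient $\frac{\eta_E+\eta_L}{\eta_E\eta_L}\frac{F x_i}{2l_i}\big(1-\frac{F x_i (t_2-t_1)}{2 l_i \Lambda_i}\big)\ge 0$ on the quadratic branch. The threshold $\tilde k_i^*(\bar p_i)$ is also non-increasing in $S_i$. Hence $\Lambda_i^M(x_i,\bar p_i)$ is non-increasing in $\bar p_i$. So $R_I^{\mathrm{fare,opt}}=\max_{\x\in\Omega}R_I(\x,\bar\p^0)$ with $\bar p_i^0=c_i/F$, i.e., zero margins. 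Let $\x^*$ attain it.

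\textbf{Step 2: reduce to a strictly positive target.} By continuity of $R_I(\cdot,\bar\p^0)$ (Corollary~\ref{cor:monotonicity-rider-demands}), replace $\x^*$ by an allocation $\x'$ with $x_i'>0$ for all $i$ that loses at most a factor $(1+\varepsilon)^{-1/2}$. For instance, move a tiny mass of drivers to each route with $x_i^*=0$.

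\textbf{Step 3: implement $\x'$ with small margins.} Take fares $\bar p_i=c_i/F+\delta\theta_i$ with $\delta>0$ small. Choose $\theta_i$ so that per-driver profits at $\x'$ are equalized:
\[
\theta_i\,\Lambda_i^M\big(x_i',\,c_i/F+\delta\theta_i\big)=x_i' \quad \text{for all } i.
\]
This is a one-dimensional equation in $\theta_i$ for each route. At $\theta_i=0$ the left side is $0<x_i'$. For small $\delta$ the left side grows roughly linearly in $\theta_i$, because $\Lambda_i^M$ is continuous and bounded below by a positive constant near $\delta\theta_i=0$. So the intermediate value theorem yields a solution $\theta_i$ of order $x_i'/\Lambda_i^M(x_i',\bar p^0_i)$. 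The hypothesis $c_i/F<c_i^O-\eta_T l_i$ ensures the resulting fare lies in $\mathcal P_i$ for $\delta$ small, and that $S_i(\bar p_i)>0$.

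With these fares every route earns per-driver profit $\delta$ at $\x'$, so $\x'\in\Omega^{\mathrm{Eq}}(\bar\p^\delta)$.

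\textbf{Step 4: uniqueness.} Since $S_i>0$ and the margin $\delta\theta_i>0$, each $\pi_i(\cdot,\bar p_i)$ is strictly decreasing. On the quadratic branch $\pi_i=p_i(a_i-b_i x_i)$ with $b_i>0$, and on the flat branch $\pi_i=p_i\Lambda_i/x_i$. Now suppose $\x''\neq\x'$ is another equilibrium. Some route $j$ has $x''_j>x'_j$ and some route $k$ has $x''_k<x'_k$. Then $\pi_j(x''_j)<\delta<\pi_k(x''_k)$ with $x''_j>0$, contradicting the equilibrium condition. Hence the equilibrium is unique.

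\textbf{Step 5: pass to the limit.} Its welfare $R_I(\x',\bar\p^\delta)\to R_I(\x',\bar\p^0)$ as $\delta\to0$, by joint continuity of \eqref{eq:rider-demand-fare-dependent} in $(x,S)$ and continuity of $\theta_i(\delta)$. Choosing $\delta$ small enough loses at most another factor $(1+\varepsilon)^{-1/2}$. Combining with Step 2 gives $R_I(\x^{\mathrm{Eq}},\bar\p^\varepsilon)\ge R_I^{\mathrm{fare,opt}}/(1+\varepsilon)$. Since $\varepsilon>0$ is arbitrary and $\mathrm{R\text{-}PoA}^{\mathrm{fare}}\ge1$ trivially, $\mathrm{R\text{-}PoA}^{\mathrm{fare}}=1$.

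\textbf{Main obstacle.} I expect Step 3 to be the hardest part, because of the circularity: the margin that equalizes profits also moves demand through $S_i(\bar p_i)$. I will first try the intermediate value theorem argument per route. I also need a uniform-in-$\delta$ bound on $\theta_i$ so that $\delta\theta_i\to0$, which I will obtain from $\Lambda_i^M(x_i',\cdot)\ge \Lambda_i^M(x_i',c_i^O-\eta_T l_i)>0$ for $x_i'>0$. If the intermediate value theorem step becomes messy, the fallback is to fix $\theta_i=x_i'/\Lambda_i^M(x_i',\bar p_i^0)$. Then I would argue that the unique equilibrium $\x^\delta$ converges to $\x'$ as $\delta\to0$, by strict monotonicity and continuity of the normalized profits $\pi_i/\delta$.
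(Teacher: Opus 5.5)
Your proposal is correct and follows essentially the paper's argument. Monotonicity of demand in the fare puts the benchmark at $\bar{\p}^{\min}$; tiny route-specific margins chosen by the intermediate value theorem equalize per-driver profits at the target; $S_i>0$ makes per-driver profits strictly decreasing, which gives uniqueness; and continuity in the fare yields the $(1+\varepsilon)^{-1}$ bound.

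The one difference is your Step 2. Instead of spreading drivers onto unused routes, the paper keeps $\x^*$ itself and gives each unused route a margin small enough that its zero-load profit $\pi_j(0,\bar p_j)$ lies strictly below the common profit level. It also picks that level below the profits attainable at the edge of a pre-fixed continuity neighborhood, so no separate bound on the margins is needed. This implements the exact optimizer and avoids your extra $(1+\varepsilon)^{-1/2}$ loss. Your fully interior target, in exchange, turns uniqueness into a clean self-contained two-route contradiction.

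In Step 5 you only need $\delta\theta_i\to 0$. Your bound $\theta_i\le x_i'/\Lambda_i^M(x_i',c_i^O-\eta_T l_i)$ already gives this, so you do not need continuity of $\theta_i$ in $\delta$.
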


The proof of this result relies on the fact that the minibus rider demand is weakly decreasing in the fare, so the fare-optimized rider welfare centralized benchmark is achieved at the minimum feasible fare vector $\bar{\p}^{\min} := \left(\frac{c_1}{F},\ldots,\frac{c_n}{F}\right).$ While minimum fares maximize rider welfare, they also eliminate driver margins and leave drivers indifferent across routes. Then, the key to proving Theorem~\ref{thm:fare-welfare-poa} lies in showing that vanishingly small, route-specific fare perturbations can resolve this driver indifference and steer drivers toward a rider-welfare maximizing allocation while preserving nearly all of the centralized rider welfare. For a complete proof of Theorem~\ref{thm:fare-welfare-poa}, see Appendix~\ref{apdx:pf-fare-welfare-poa}. 

\begin{remark}[Driver Reservation Wages] \label{rem:reservation-wages}
The construction in Theorem~\ref{thm:fare-welfare-poa} uses vanishingly small fares close to their minimum feasible levels, and therefore drives per-driver profits arbitrarily close to zero. This is useful for characterizing the fare-optimized PoA, but may be less suitable as a practical policy when drivers require a positive reservation wage $W>0$. To that end, in Appendix~\ref{apdx:reservation-wages-extension-poa}, we extend Theorem~\ref{thm:fare-welfare-poa} to a setting in which both the centralized and equilibrium fare-optimized rider welfare benchmarks must satisfy the reservation wage constraint \(\pi_i(x_i,\bar p_i)\geq W\) on every used route. This extension is technically more involved because, when $W>0$, the welfare-maximizing fare for a fixed allocation is no longer simply the minimum feasible fare; the planner must choose the lowest fare satisfying the reservation wage constraint, which depends on the allocation $x_i$. Appendix~\ref{apdx:reservation-wages-extension-poa} shows that, despite this tighter coupling between fares and allocations, the PoA guarantee of one from Theorem~\ref{thm:fare-welfare-poa} continues to hold.
\end{remark}

\vspace{-4pt}

\subsection{Fare-Optimized PoA for Cumulative Driver Profit} \label{subsec:fare-profit}
\vspace{-2pt}

This section extends the PoA analysis to the cumulative driver profit metric. In contrast to the rider welfare result in Section~\ref{subsec:fare-welfare}, we show that fare optimization cannot eliminate equilibrium inefficiency for the cumulative profit objective. Theorem~\ref{thm:fare-opt-profit-poa} shows that even though the planner can optimize fares, the worst-case fare-optimized profit PoA remains exactly two, matching the corresponding bound in the setting with fixed fares (see Theorem~\ref{thm:poa-driver-profit}).
\vspace{-2pt}
\begin{theorem}[Fare-optimized Cumulative Driver Profit PoA] \label{thm:fare-opt-profit-poa}
For every instance $I$, $P_I^{\mathrm{fare,opt}} \! \! \le \! 2 P_I^{\mathrm{fare,eq}},$ and for all $\varepsilon \!> \!0$ there exists an instance $I_\varepsilon$ with $\frac{P_{I_\varepsilon}^{\mathrm{fare,opt}}}{P_{I_\varepsilon}^{\mathrm{fare,eq}}} \! > \! 2 - \varepsilon.$ Thus, $\mathrm{P\text{-}PoA}^{\mathrm{fare}} = 2.$
\end{theorem}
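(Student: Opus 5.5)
The argument has two parts: an upper bound that reuses Theorem~\ref{thm:poa-driver-profit} at the centrally optimal fare, and a lower-bound instance built so that fare flexibility cannot help.

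\textbf{Upper bound.} Fix an instance $I$ and $\delta>0$. Choose a fare vector $\bar{\p}^\delta\in\mathcal P$ and an allocation $\x^\delta\in\Omega$ with
\[
P_I(\x^\delta,\bar{\p}^\delta)\geq P_I^{\mathrm{fare,opt}}-\delta .
\]
With fares frozen at $\bar{\p}^\delta$, the instance is a fixed-fare instance in $\mathcal I_{\mathcal L_V}$. This holds because $S_i(\bar p_i^\delta)\geq 0$ by the definition of $\mathcal P_i$, and $p_i(\bar p_i^\delta)\geq 0$.

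I would reread the proof of Theorem~\ref{thm:poa-driver-profit} to confirm it covers this case. It uses only Corollaries~\ref{cor:monotonicity-rider-demands} and~\ref{cor:monotonicity-per-driver-profits} and the equal-profit property of equilibria. Routes with $p_i=0$ are harmless: they contribute zero to both sides, and the chain of inequalities still goes through.

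Two facts remain to check for fares $\bar{\p}^\delta$:
\begin{itemize}
\item $\Omega^{Eq}(\bar{\p}^\delta)$ is nonempty. This follows from continuity and monotonicity of the $\pi_i$ via a standard water-filling argument, or by viewing equilibria as minimizers of a convex potential.
\item Every $\x\in\Omega^{Eq}(\bar{\p}^\delta)$ satisfies $P_I(\x^\delta,\bar{\p}^\delta)\le 2P_I(\x,\bar{\p}^\delta)$, by Theorem~\ref{thm:poa-driver-profit}.
\end{itemize}
Taking the infimum over equilibria gives
\[
P_I^{\mathrm{fare,eq}}\ \geq\ \inf_{\x\in\Omega^{Eq}(\bar{\p}^\delta)}P_I(\x,\bar{\p}^\delta)\ \geq\ \tfrac12\big(P_I^{\mathrm{fare,opt}}-\delta\big).
\]
Letting $\delta\to 0$ yields $P_I^{\mathrm{fare,opt}}\le 2P_I^{\mathrm{fare,eq}}$.

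\textbf{Lower bound.} I would adapt the instance in Proposition~\ref{prop:tightness-poa-profit}, with $n=2$ and $D=1$. The difficulty is that the planner now chooses fares, so the instance must force a bad equilibrium at every fare vector. The plan is to make each interval $\mathcal P_i$ a single point, i.e. $c_i/F=c_i^O-\eta_T l_i$. Then $\mathcal P$ is a singleton with $S_i=0$, so fares cannot move at all.

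One issue is that $p_i=0$ at that point, which makes every profit zero. I would fix this with a degenerate interval where $c_i/F<c_i^O-\eta_T l_i$ is very tight. Alternatively, I would argue directly: pick $c_i^O$ so that the top of $\mathcal P_i$ gives $S_i=0$ with a positive margin $p_i$. Profit on a route is increasing in $p_i$ up to the point where the demand curve begins to change. So I would first try to show that, on routes with small $l_i$ or little slack, the profit-maximizing fare under both benchmarks is the top fare. At that fare demand is capacity-constrained, and we recover Proposition~\ref{prop:tightness-poa-profit} exactly.

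\textbf{Main obstacle.} The hard step is verifying this claim at the equilibrium level. Lowering fares below the top of $\mathcal P_i$ raises $S_i$, which makes demand concave-quadratic. This could tilt the equilibrium away from the bad allocation $(1,0)$. I need the worst equilibrium at every fare vector to have profit at most about $p_1\Lambda_1$. My plan is to pick $\mathcal P_i$ so narrow that any fare change scales $p_i$ down faster than it can raise served demand. Concretely, I would bound the slope of demand in $S_i$ by $F x_i(\eta_E+\eta_L)/(2l_i\eta_E\eta_L)$ and choose the interval width small relative to $p_i$. Then every fare choice yields an equilibrium ratio of at least $2-\varepsilon$ as $\tilde k_2^*\to 1$.
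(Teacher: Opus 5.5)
Your upper bound is the paper's argument: apply Theorem~\ref{thm:poa-driver-profit} to the fixed-fare instance induced by each $\bar{\p}\in\mathcal P$, then take suprema. Your remarks on zero-margin routes are fine.

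The lower bound, however, is only a plan. The step you call the main obstacle is the whole content of tightness, and the quantitative condition you propose for it is misstated. Since $p_i(\bar p_i)+S_i(\bar p_i)=M_i:=c_i^O-\eta_T l_i-\frac{c_i}{F}$, the width of $\mathcal P_i$ equals the largest attainable margin. It therefore cannot be made small relative to $p_i$, which is also why the singleton idea forces $p_i=0$. The correct scale is $1/\beta$ with $\beta:=\frac{\eta_E+\eta_L}{\eta_E\eta_L(t_2-t_1)}$, and it does not involve $l_i$. Your slope bound divided by the capacity-constrained rate $\frac{F(t_2-t_1)}{2l_i}x_i$ is exactly $\beta$, so the condition making the top fare optimal is $\beta M_i<1$.

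Preserving the allocation $(1,0)$ is also not the right goal, and in fact it fails. With route 2 at its top fare, lowering route 1's margin to $p_1<K$ gives $\pi_1(1,\bar p_1)=p_1\Lambda_1<K\Lambda_1=\pi_2(0,\bar p_2)$, so drivers move to route 2. What must be shown is that every equilibrium at every fare vector has profit at most the top-fare value $K\Lambda_1$. This should be compared against the single benchmark $P_I^{\mathrm{fare,opt}}\ge K(\Lambda_1+\Lambda_2)$ attained at top fares, not fare-by-fare, since the per-fare ratio is close to $1$ when, e.g., $p_2\approx 0$.

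The paper closes this with a case split on $x_2$. Take $D=1$, $k_1^*=\rho$, $k_2^*=1-\rho$, $\Lambda_2=k_2^*\Lambda_1$, and $M_1=M_2=K$ with $\beta K<1$; top fares give $P_I^{\mathrm{fare,opt}}\ge K\Lambda_1(2-\rho)$. Fix any fares and any equilibrium. If $x_2=0$, the profit is at most $p_1\Lambda_1\le K\Lambda_1$. If $x_2>0$, the profit equals the common per-driver profit (as $D=1$), and by Corollary~\ref{cor:monotonicity-per-driver-profits}
\[
\pi^{\mathrm{eq}}=\pi_2(x_2,\bar p_2)\le \pi_2(0,\bar p_2)=\frac{\Lambda_2}{k_2^*}\,p_2\bigl(1+\beta(K-p_2)\bigr)\le \frac{\Lambda_2}{k_2^*}\,K=K\Lambda_1,
\]
since $p\mapsto p(1+\beta(K-p))$ has derivative at least $1-\beta K>0$ on $[0,K]$. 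So only the entry value $\pi_2(0,\cdot)$ needs controlling.

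Your pointwise-domination idea can also be completed. The bound $\Lambda_i^M(x,\bar p_i)\le\min\{(1+\beta S_i(\bar p_i))\frac{F(t_2-t_1)}{2l_i}x,\ \Lambda_i\}$ holds on both branches. Hence $\beta M_i\le 1$ yields $\pi_i(x,\bar p_i)\le\pi_i(x,\bar p_i^{\max})$ for all $x$, where $\bar p_i^{\max}:=c_i^O-\eta_T l_i$ is the top fare. A comparison argument like the one in Appendix~\ref{apdx:pf-uniqueness} then shows that the equilibrium per-driver profit at any fares is at most the top-fare one, reducing exactly to Proposition~\ref{prop:tightness-poa-profit}. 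That comparison step has to be written out, though; as it stands, the proposal does not establish $P_I^{\mathrm{fare,eq}}\le K\Lambda_1$.
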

\vspace{-2pt}



For a proof of Theorem~\ref{thm:fare-opt-profit-poa}, see Appendix~\ref{apdx:eq:profit-poa-fare}. Theorem~\ref{thm:fare-opt-profit-poa} shows that fare optimization has limited power for the cumulative profit objective. Unlike rider welfare, where optimized fares can essentially eliminate the inefficiencies from selfish driver behavior, fare optimization cannot improve the factor-two PoA guarantee for cumulative profit. The key feature driving these different guarantees is monotonicity: rider welfare is weakly decreasing in fares for a fixed driver allocation, whereas cumulative driver profit need not be, since higher fares increase driver margins but may reduce rider participation. Thus, fare regulation is an effective lever for mitigating welfare inefficiencies, but it cannot by itself overcome the worst-case profit inefficiency caused by selfish driver behavior. In this sense, fare optimization is strictly less powerful than cross-subsidization for profit maximization: cross-subsidies can correct driver incentives without changing rider payoffs or demand, whereas fare adjustments alter driver margins by changing the prices faced by riders, creating a tradeoff between higher driver margins and lower rider participation. That said, fare optimization can be paired with cross-subsidies to recover the centralized cumulative driver profit benchmark.


\vspace{-5pt}
\subsection{Computing Fare-Optimized Outcomes} \label{subsec:fare-computation}
\vspace{-2pt}

While the fare-optimized PoAs in Sections~\ref{subsec:fare-welfare} and~\ref{subsec:fare-profit} characterize the power of fares in mitigating inefficiencies from selfish driver behavior, they are operationally meaningful only if the corresponding fare-optimized outcomes can be efficiently computed. Unlike the fixed fare setting, the centralized benchmarks require jointly optimizing over fares and driver allocations, and the equilibrium benchmarks have a bi-level structure, since the planner chooses fares while driver allocations are determined by the induced equilibrium. Despite the general difficulty of solving such joint pricing-allocation and bi-level optimization problems, we exploit the structure of our model to show that the centralized and equilibrium benchmarks introduced in Section~\ref{subsec:fare-optimization-setup} can be computed efficiently for both rider welfare and cumulative driver profit objectives. In this section, we focus on computing the fare-optimized centralized and equilibrium benchmarks for cumulative driver profit; the corresponding computational results for rider welfare are presented in Appendix~\ref{apdx:price-opt-welfare}.

\textbf{Fare-Optimized Equilibrium Cumulative Driver Profit:} We first show that the fare-optimized equilibrium profit benchmark $P_I^{\mathrm{fare,eq}}$ can be computed efficiently for any feasible instance $I$. 
For each route $i$, define the largest per-driver profit route $i$ can generate with $x_i \in [0, D]$ drivers when its fare is optimized as $q_i(x_i):=\max_{\bar p_i\in \mathcal P_i} \pi_i(x_i,\bar p_i)$. Moreover, for each candidate common per-driver profit level $\rho \geq 0$, define $u_i(\rho):=\max\{x_i\in[0,D]: q_i(x_i)\geq \rho\},$ with $u_i(\rho)=0$ when $\rho>q_i(0)$. Thus, $u_i(\rho)$ is the largest mass of drivers that route $i$ can support while still achieving per-driver profit at least $\rho$, after optimizing its fare. We now show that for any $\varepsilon>0$, $P_I^{\mathrm{fare,eq}}$ can be computed in time $O(n \log^2(\frac{1}{\varepsilon}))$ by reducing the problem into a one-dimensional binary search over $\rho$.

\begin{theorem}
\label{thm:equilibrium-profit-computation}
Fix an instance $I$ and define $\rho_I^* := \sup\left\{\rho\geq 0:\sum_{i=1}^n u_i(\rho)\geq D\right\}.$ Then, $P_I^{\mathrm{fare,eq}} = D\rho_I^*$ can be computed to additive accuracy $\varepsilon>0$ in time $O\!\left(n \log^2\frac{1}{\varepsilon}\right)$.
\end{theorem}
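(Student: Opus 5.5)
The plan is to prove the identity $P_I^{\mathrm{fare,eq}} = D\rho_I^*$ first, and then turn it into a nested binary search.

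\textbf{Step 1: equilibrium profit is $D$ times the common per-driver profit.} Fix $\bar{\p}\in\mathcal P$ and any $\x\in\Omega^{\mathrm{Eq}}(\bar{\p})$. By Definition~\ref{def:eq-def}, every used route earns the same per-driver profit $\pi^{Eq}$. As in the proof of Theorem~\ref{thm:poa-driver-profit}, this gives $P_I(\x,\bar{\p})=\sum_i \pi_i(x_i,\bar p_i)x_i=\pi^{Eq}D$.

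\textbf{Step 2: the equilibrium level is unique for fixed fares.} For fixed $\bar{\p}$, define the water-filling level
\[
\rho(\bar{\p}) := \sup\Big\{\rho\ge 0:\textstyle\sum_i v_i(\rho)\ge D\Big\},\qquad v_i(\rho):=\max\{x\in[0,D]:\pi_i(x,\bar p_i)\ge\rho\}.
\]
By Corollary~\ref{cor:monotonicity-per-driver-profits}, each $\pi_i(\cdot,\bar p_i)$ is continuous and non-increasing. I will show every equilibrium has $\pi^{Eq}=\rho(\bar{\p})$.
\begin{itemize}
\item Suppose an equilibrium had a lower level $\pi<\rho(\bar{\p})$. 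Pick $\rho'\in(\pi,\rho(\bar{\p}))$. Every route with $x_i<v_i(\rho')$ has $\pi_i(x_i,\bar p_i)\ge\rho'>\pi$, which is impossible for an unused route and also for a used one. So $x_i\ge v_i(\rho')$ for all $i$, and $\sum_i x_i\ge\sum_i v_i(\rho')\ge D$. Pushing this to a contradiction needs care because of flat pieces; I expect the continuity and monotonicity of $\pi_i$ to close it.
\item The symmetric argument rules out a higher level.
\end{itemize}
Hence $\inf_{\x\in\Omega^{\mathrm{Eq}}(\bar{\p})}P_I = D\rho(\bar{\p})$. Existence of an equilibrium follows from the same water-filling construction.

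\textbf{Step 3: the outer supremum over fares equals $\rho_I^*$.} Fares decouple across routes, so it suffices to compare $v_i$ with $u_i$.
\begin{itemize}
\item \emph{Upper bound.} Since $q_i\ge\pi_i(\cdot,\bar p_i)$ pointwise, we have $v_i(\rho)\le u_i(\rho)$. Therefore $\rho(\bar{\p})\le\rho_I^*$ for every $\bar{\p}$.
\item \emph{Lower bound.} Take any $\rho<\rho_I^*$; then $\sum_i u_i(\rho)\ge D$ since $u_i$ is non-increasing. For each $i$, choose $\bar p_i$ attaining $q_i(u_i(\rho))$; the maximum exists because $\mathcal P_i$ is compact and $\pi_i$ is continuous in $\bar p_i$. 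Monotonicity in $x$ then gives $v_i(\rho)\ge u_i(\rho)$, so $\rho(\bar{\p})\ge\rho$.
\end{itemize}
Letting $\rho\uparrow\rho_I^*$ yields $P_I^{\mathrm{fare,eq}}=D\rho_I^*$. Along the way I need that $q_i$ is non-increasing in $x$, which holds as a pointwise max of non-increasing functions, and upper semicontinuous, so that the max in $u_i$ is attained.

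\textbf{Step 4: computation.} I would use two nested binary searches.
\begin{itemize}
\item \emph{Outer search.} Binary search on $\rho\in[0,\max_i q_i(0)]$ for $O(\log\frac1\varepsilon)$ rounds, testing $\sum_i u_i(\rho)\ge D$.
\item \emph{Inner search.} Each $u_i(\rho)$ is found by binary search over $x\in[0,D]$ using monotonicity of $q_i$, again $O(\log\frac1\varepsilon)$ rounds.
\item \emph{Evaluating $q_i(x)$ in $O(1)$.} From Equation~\eqref{eq:rider-demand-fare-dependent}, $\pi_i(x,\bar p)$ has two regimes. In the unsaturated regime $x<\tilde k_i^*(\bar p)$, it equals $p_i(\bar p)$ times a function affine in $S_i(\bar p)$, hence is quadratic in $\bar p$. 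In the saturated regime, it equals $p_i(\bar p)\Lambda_i/x$, which is linear in $\bar p$. The regime boundary is the fare solving $\tilde k_i^*(\bar p)=x$, available in closed form. So the maximizer lies among $O(1)$ candidates: the endpoints of $\mathcal P_i$, the vertex of the quadratic, and the regime breakpoint.
\end{itemize}
This gives $O(n\log^2\frac1\varepsilon)$ total time.

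\textbf{Main obstacle.} The delicate part is the accuracy analysis. An additive $\varepsilon$ error in $u_i$ must translate into an additive $O(\varepsilon)$ error in $D\rho$. I would handle this with Lipschitz bounds on $q_i$ in $x$ and in $\rho$ on $[0,D]$, using the explicit quadratic form, and rescale $\varepsilon$ by instance-dependent constants. Near $x=0$ the behaviour needs separate treatment, since $\pi_i(0,\cdot)$ is defined as the limiting slope. A secondary obstacle is the flat-piece subtlety in the uniqueness argument of Step~2.
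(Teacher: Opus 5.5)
Your proposal is correct. Step~4 (nested binary searches, with $q_i$ evaluated in $O(1)$ time from a constant number of candidate margins) matches the paper. The difference lies in how you establish $P_I^{\mathrm{fare,eq}}=D\rho_I^*$.

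\textbf{How the paper argues.} The paper never introduces a fixed-fare level $\rho(\bar{\p})$. It shows directly that some fare vector admits an equilibrium with common per-driver profit $\rho$ if and only if $\sum_i u_i(\rho)\ge D$. The ``only if'' direction is your upper bound: $\rho=\pi_i(x_i,p_i)\le q_i(x_i)$ forces $x_i\le u_i(\rho)$. For the ``if'' direction the paper reverses your order of choices.
\begin{itemize}
\item It first fixes any allocation with $x_i\in[0,u_i(\rho)]$ and $\sum_i x_i=D$.
\item On each used route, a zero margin gives zero profit and $\rho\le q_i(x_i)$. By continuity in the margin and the intermediate value theorem, it picks a fare giving per-driver profit exactly $\rho$.
\item It sets zero margins on unused routes, so the allocation is an equilibrium by inspection.
\end{itemize}
This is shorter and avoids any water-filling lemma. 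However, as written it identifies the best equilibrium over fares, i.e.\ a $\sup_{\bar{\p}}\sup_{\x}$. To match the $\sup\inf$ in the definition of $P_I^{\mathrm{fare,eq}}$, it relies tacitly on existence of equilibria and on the appendix fact that all equilibria under a fixed fare vector have the same cumulative profit.

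\textbf{What your route buys.} Your Step~2 proves exactly that existence and level-uniqueness, so you handle the inner infimum explicitly. Your fare choice, a maximizer of $\pi_i(u_i(\rho),\cdot)$, only needs to deliver $\rho(\bar{\p})\ge\rho$, not an exact level, and that suffices.

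\textbf{Your flagged obstacles are milder than you expect.}
\begin{itemize}
\item The flat-piece issue in Step~2 does not arise. You have $x_i\ge v_i(\rho')$ for all $i$ and $\sum_i v_i(\rho')\ge D=\sum_i x_i$. This forces $x_i=v_i(\rho')$ for every $i$. Since $D>0$, some route has $x_i=v_i(\rho')>0$, and it earns $\pi_i(v_i(\rho'),\bar p_i)\ge\rho'>\pi$. That contradicts its being used at level $\pi$, so the argument closes immediately.
\item The paper does not analyze accuracy at all, and your Lipschitz plan works. Since $S_i(\bar p_i)\le c_i^O-\eta_T l_i-\frac{c_i}{F}$, the saturated branch is reached only at allocations bounded away from zero, uniformly in the fare. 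Hence $\pi_i(\cdot,\bar p_i)$ is Lipschitz on $[0,D]$ uniformly in $\bar p_i$, and so is $q_i$.
\item If you compute each $u_i(\rho)$ from below, the outer test never accepts a $\rho>\rho_I^*$. The Lipschitz constant then bounds how far below $\rho_I^*$ the output can fall.
\end{itemize}
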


\textbf{Fare-Optimized Centralized Cumulative Driver Profit:} Next, we show that the fare-optimized centralized profit benchmark $P_I^{\mathrm{fare,opt}}$ can also be computed efficiently for any feasible instance $I$. The optimal fare for each route admits a semi-closed form as a function of the allocation, due to the objective being quadratic in the fare for any fixed allocation. Hence, the profit of each route is determined completely by the number of drivers assigned to it. With centralized control of the drivers, the problem becomes optimal allocation of drivers across routes. We approximately solve the allocation problem by constructing an $\varepsilon$-net for the feasible set, running dynamic programming on the $\varepsilon$-net, and bounding suboptimality using Lipschitzness of the objective function. Crucially, the profit of one route does not depend on the prices or allocations to other routes, meaning that the state space for the dynamic program is only linear in the number of routes, rather than exponential.  

\begin{theorem}[Polynomial Time Computability of Fare-Optimized Centralized Profit]
\label{thm:system-opt-profit-approx}
For every $\varepsilon>0$, there is an algorithm running in $O(\frac{n}{\varepsilon^2})$ time that returns a feasible allocation $\hat \x$ and route fares $\hat p_i\in\arg\max_{\bar p_i\in \mathcal P_i}P_i(\hat x_i, \bar p_i) := p_i(\bar p_i) \Lambda_i^M(\hat x_i, \bar p_i)$, such that $\sum_{i=1}^n P_i(\hat x_i,\hat p_i)\ge P_I^{\mathrm{fare,opt}}-\varepsilon.$ 
\end{theorem}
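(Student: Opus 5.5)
The plan is to reduce the joint fare–allocation problem to a separable resource allocation problem and then solve it approximately by dynamic programming over a discretized driver grid.

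\textbf{Step 1: separability.} For a fixed allocation $\x$, the objective $P_I(\x,\bar{\p})=\sum_i p_i(\bar p_i)\Lambda_i^M(x_i,\bar p_i)$ separates across routes, and the fare $\bar p_i$ affects only route $i$. So $P_I^{\mathrm{fare,opt}}=\sup_{\x\in\Omega}\sum_i V_i(x_i)$, where $V_i(x):=\max_{\bar p_i\in\mathcal P_i}P_i(x,\bar p_i)$.

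\textbf{Step 2: computing $V_i(x)$ in $O(1)$.} Fix $x$ and write $\bar p_i=\frac{c_i}{F}+p$, so that $S_i(\bar p_i)=\bar S_i-p$ with $\bar S_i:=c_i^O-\eta_T l_i-\frac{c_i}{F}$.
\begin{itemize}
\item On the branch $x<\tilde k_i^*(\bar p_i)$, Equation~\eqref{eq:rider-demand-fare-dependent} shows $\Lambda_i^M$ is affine in $p$, namely $a_i x-b_i(x)(\bar S_i-p)$ with coefficients depending on $x$. Hence $p\,\Lambda_i^M$ is a concave quadratic in $p$.
\item On the branch $x\ge\tilde k_i^*(\bar p_i)$, the profit is $p\Lambda_i$, which is increasing in $p$.
\item The branch switch occurs at a single threshold $\bar p_i$. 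This is because $\tilde k_i^*(\bar p_i)$ is monotone (increasing) in $\bar p_i$ through $S_i(\bar p_i)$, so the condition $x\ge\tilde k_i^*(\bar p_i)$ is an interval in $\bar p_i$, computable in closed form.
\end{itemize}
Therefore $V_i(x)$ is the maximum over $O(1)$ candidates: the interval endpoints, the branch threshold, and the vertex of the quadratic clipped to its interval. This gives the semi-closed form and the choice $\hat p_i\in\arg\max$ once $\hat x_i$ is fixed.

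\textbf{Step 3: Lipschitz continuity.} I would show each $V_i$ is nondecreasing and Lipschitz in $x$ with a constant $L_i$ depending only on instance primitives.
\begin{itemize}
\item For each fixed fare, $\Lambda_i^M(\cdot,\bar p_i)$ is concave and nondecreasing by Corollary~\ref{cor:monotonicity-rider-demands}. Its slope is at most the slope at $0$, namely $\frac{F}{2l_i}\big(t_2-t_1+S_i\frac{\eta_E+\eta_L}{\eta_E\eta_L}\big)$.
\item Multiplying by $p_i\le\bar S_i$ bounds the slope of $P_i(\cdot,\bar p_i)$ uniformly over fares.
\item A pointwise maximum of uniformly $L_i$-Lipschitz functions is $L_i$-Lipschitz, so $V_i$ is $L_i$-Lipschitz. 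Set $L:=\max_i L_i$.
\end{itemize}
This uniform bound is the crux I expect to be the main obstacle. The threshold $\tilde k_i^*$ depends on the fare, so one must check that continuity at the kink (Corollary~\ref{cor:monotonicity-rider-demands}) holds uniformly in $\bar p_i$.

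\textbf{Step 4: discretization.} Take the grid $G=\{0,\delta,2\delta,\dots\}$ on $[0,D]$. Given an optimal allocation $\x^*$, round every coordinate down to $G$ and place the leftover mass (less than $n\delta$) on one route; by monotonicity of $V_i$ this does not decrease value. The resulting loss is at most $\sum_i L\delta=nL\delta$. The paper's stated running time of $O(n/\varepsilon^2)$ treats the instance constants (and the rescaling of $\varepsilon$) as $O(1)$, so choosing $\delta=\Theta(\varepsilon)$ suitably scaled yields total error at most $\varepsilon$.

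\textbf{Step 5: dynamic program.} Define the state as (routes processed, drivers used on the grid), with recursion
\[
\mathrm{DP}[j][m]=\max_{0\le k\le m}\big(\mathrm{DP}[j-1][m-k]+V_j(k\delta)\big).
\]
Run it up to $m=D/\delta$ and output $\mathrm{DP}[n][D/\delta]$, where $D/\delta$ is exact by choosing $\delta$ to divide $D$. Recover $\hat\x$ by backtracking and set each $\hat p_i$ as in Step 2.

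\textbf{Step 6: running time.} There are $n\cdot(D/\delta)$ states, each with $O(D/\delta)$ transitions costing $O(1)$ by Step 2. The total is $O(n/\delta^2)=O(n/\varepsilon^2)$ with instance constants absorbed, and the returned value is at least $P_I^{\mathrm{fare,opt}}-\varepsilon$.
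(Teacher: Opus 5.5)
Your proposal is correct and follows essentially the same route as the paper. Your per-route value $V_i(x)$ (the paper's $G_i$) is evaluated in $O(1)$ from three candidate fares and shown Lipschitz via the slope at $x=0$ times the maximal margin. It is then optimized by rounding an optimum down to a grid and running the same $O(nm^2)$ dynamic program. Your redistribution of leftover mass to keep $\sum_i x_i=D$ exactly is a slightly more careful version of the paper's switch to $\sum_i x_i\le D$.

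One caveat, which the paper's own proof shares, concerns the running time. The grid step must be $\delta=\varepsilon/(nL)$ (the paper's $h=\varepsilon/\sum_i L_i^P$), not $\Theta(\varepsilon)$. So the actual running time is $O\bigl(n(D\sum_i L_i/\varepsilon)^2\bigr)$, and the stated $O(n/\varepsilon^2)$ holds only after absorbing the factor $(D\sum_i L_i)^2$. There is also a sign typo: the quadratic-branch demand is $a_ix+b_i(x)(\bar S_i-p)$ with a plus sign, which is what makes $p\,\Lambda_i^M$ concave in $p$.
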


For complete proofs of Theorems~\ref{thm:equilibrium-profit-computation} and~\ref{thm:system-opt-profit-approx}, see Appendices~\ref{apdx:eq-profit-computation} and~\ref{apdx:system-opt-profit-computation}, respectively.


\vspace{-5pt}
\section{Numerical Experiments} \label{sec:experiments}
\vspace{-2pt}

This section presents numerical experiments based on a real-world case study of an informal transit system in Nalasopara, India, where shared auto-rickshaws serve nearly 90,000 riders daily. Our results under the existing route-level fares in Nalasopara's system demonstrate that, although real-world inefficiencies in cumulative driver profit and rider welfare do not reach the worst-case levels given by our PoA bounds, the profit and welfare ratios under operational data from this system remain meaningfully bounded away from 1. We further examine a setting in which the central planner can adjust route-level fares and show that fare optimization has asymmetric effects across objectives, consistent with Section~\ref{sec:price-optimization}. In particular, prioritizing rider welfare can eliminate welfare losses from decentralized route choice, but at the expense of low cumulative driver profits. Incorporating cumulative driver profit into the fare-setting objective or modest reservation wage constraints substantially improves driver earnings with only modest rider welfare losses, although equilibrium outcomes can remain meaningfully away from the centralized benchmark, as in the fixed-fare setting. In the following, we describe the methodology used to calibrate our model parameters from data on Nalasopara's system in Section~\ref{subsec:setup-calibration} and present our results in Section~\ref{subsec:numerical-results}. All the code for this section is available at the following \href{https://github.com/djalota/informal_transit}{link}.

\vspace{-5pt}
\subsection{Experimental Setup and Model Calibration} \label{subsec:setup-calibration}
\vspace{-2pt}
In Nalasopara, shared auto-rickshaws serve riders on $n = 18$ routes (see Appendix~\ref{apdx:route-selection}) connecting residential neighborhoods to a railway station. For each route $i$, we obtained trip times ($l_i$) and fares ($\Bar{p}_i$) from a local NGO, and estimated the per-trip operating costs ($c_i$) using expenditure statistics, which indicate that auto expenditures constitute roughly 80\% of drivers' daily earnings. Accordingly, we set $c_i = 0.8 F \Bar{p}_i$, yielding a per-rider profit $p_i = \Bar{p}_i - \frac{c_i}{F} = 0.2 \Bar{p}_i$, where $F = 4$ is the fixed capacity of the auto-rickshaws. We study this system during the evening peak period from 5 PM to midnight, when demand in Nalasopara is highest. Unlike typical evening peaks between 3-8 PM, Nalasopara exhibits a late-night demand surge, driven by long rail commutes as riders return home from major employment hubs and substantial nighttime commercial activity near the station. 
During this evening peak period, we estimate the total rider demand $\Lambda_i$ for each route $i$ using population data from the most recent Indian Census. For details of the demand calibration procedure, see Appendix~\ref{appendix:demand_calibration}. The resulting estimates imply approximately 90,000 daily travelers across the eighteen routes, consistent with current estimates of local train ridership in Nalasopara after accounting for population growth since the last census~\citep{mumbailive_nalasopara_station_2025}.

Finally, we calibrate the rider cost function parameters. We set the value of time to $\eta_T = \text{Rs. } 2.5$ per minute, corresponding to an hourly wage of Rs. 150, reflecting average worker earnings in India in 2026 \citep{eri_india_salary_2026}. For schedule delay penalties, we follow the estimates in \cite{Small1982} and set the earliness parameter to $\eta_E = 0.61$ and the lateness parameter $\eta_L = 2.4$. Finally, we calibrate the outside option cost $c_i^O$ as the time required to walk from the origin to the destination stop on route $i$, using a walking speed of $1.3$m/s. To keep this cost consistent with plausible fare levels, we cap it at the cost of taking the minibus, excluding waiting and schedule delays, when the route fare is $25\%$ above its current value. This reflects the fact that riders are unlikely to use the minibus service once fares become sufficiently high, and will instead switch to an outside option (e.g., not taking the trip). While we use this specification as our baseline, our results are qualitatively robust to alternate specifications of $c_i^O$.

\vspace{-5pt}
\subsection{Results} \label{subsec:numerical-results}
\vspace{-2pt}

We first quantify inefficiencies from decentralized driver behavior in this real-world informal transit setting under the current fares in Nalasopara. We measure these inefficiencies through driver profit and rider welfare ratios, defined as the ratio of the optimal cumulative driver profit or rider welfare to that achieved under an equilibrium allocation under the calibrated model parameters. We then turn to fare optimization, where route-level fares can be chosen by a central planner while drivers continue to choose routes selfishly in response. In this setting, we examine how optimized fares mediate the tradeoff between rider welfare and cumulative driver profit by solving fare-optimization problems for convex combinations of the two objectives, indexed by $\alpha \in [0,1]$, where $\alpha=0$ corresponds to rider welfare and $\alpha=1$ corresponds to the cumulative driver profit. We also study how equilibrium rider welfare and cumulative driver profit change as the reservation wage requirement varies, which captures how minimum driver earning guarantees affect the performance of fare optimization.

\textbf{Profit and Welfare Ratios:} Figure~\ref{fig:welfare-profit-ratios} depicts the variation in the profit ratio, rider welfare ratio, and equilibrium per-driver profits as the number of drivers $D$ is varied from 100 to 2,000, while holding route-level fares fixed at their current values in Nalasopara's system. We find that the profit ratio ranges between 1.1-1.2 and the rider welfare ratio ranges between 1.1-1.4 when $D \leq 1300$, a range consistent with practice. While the number of drivers is not observed in the data, it is likely to lie between 750-1250, as this corresponds to equilibrium per-driver profits of around Rs. 250 ($\approx \$3$) in Figure~\ref{fig:welfare-profit-ratios} (right), consistent with average daily driver profits in Nalasopara. Once the number of drivers exceeds 1300, both ratios converge to one as profit-maximizing and rider welfare maximizing allocations already absorb all rider demand; hence, additional drivers increase the corresponding objective under the equilibrium allocation without improving the optimal benchmarks. 


While the empirically observed profit and rider welfare ratios do not reach the worst-case levels given by our PoA bounds, which equal 2 for profit and $1 + \frac{p_{\max}}{p_{\min}} = 4$ for rider welfare, where $\frac{p_{\max}}{p_{\min}} = 3$ in our data, they imply substantial inefficiencies in the practically relevant regime when $D \leq 1300$. Specifically, our observed profit and rider welfare ratios correspond to a 9–17\% reduction in total profits and a 9–29\% reduction in rider demand served. These losses translate into roughly 8,000–26,000 fewer riders served and average income losses of Rs.~22.5–42.5 per driver per day (about \$0.27–\$0.51), a substantial amount for daily-wage workers whose net income is around Rs. 250 (\$3) daily. These results highlight the significant value of cross-subsidization in mitigating the inefficiencies of selfish driver behavior in such high-stakes informal transit settings (see Section~\ref{sec:cross-subsidies}).

Finally, note from the right of Figure~\ref{fig:welfare-profit-ratios} that the equilibrium profit per driver is monotonically decreasing in the number of drivers. This pattern is consistent with Corollary~\ref{cor:monotonicity-per-driver-profits} and reflects the increasing competition for riders as driver supply grows. 


\begin{figure*}[tbh!]
  \centering \hspace{-35pt}
  \begin{subfigure}[b]{0.27\columnwidth}
      \include{Fig/msom/PoABounds/profit_ratio}
  \end{subfigure} \hspace{15pt}
  \begin{subfigure}[b]{0.27\columnwidth}
      \include{Fig/msom/PoABounds/welfare_ratio}
  \end{subfigure} \hspace{20pt}
  \begin{subfigure}[b]{0.27\columnwidth}
      \include{Fig/msom/PoABounds/profit_per_driver}
  \end{subfigure} 
     \vspace{-35pt}
    \caption{{\small \sf Depiction of the profit ratio (left), rider welfare ratio (center), and equilibrium profit per driver (right) as the number of drivers in the system is varied.}} 
    \label{fig:welfare-profit-ratios}
\end{figure*}


\textbf{Fare optimization:}
While the results in Figure~\ref{fig:welfare-profit-ratios} fix route-level fares to their current values in Nalasopara, we now consider a setting in which a central planner can optimize fares. To keep the resulting fares practically meaningful, we restrict the fare on each route to be no more than $25\%$ above its current value. In this setting, the fare-optimized profit ratio closely mirrors the fixed-fare profit ratio in Figure~\ref{fig:welfare-profit-ratios}, while the fare-optimized rider welfare ratio is approximately one when the reservation wage $W=0$ and exactly one when $W>0$ (see Section~\ref{subsec:fare-welfare}). Thus, we focus on the tradeoff between the two objectives under fare optimization. Specifically, we optimize a convex combination of the two objectives indexed by $\alpha \in [0,1]$, where $\alpha=0$ corresponds to rider welfare and $\alpha=1$ to cumulative driver profit, and compare the resulting centralized and equilibrium outcomes on both objectives. We note that for any $\alpha$, the corresponding objective reduces to a special case of the cumulative driver profit objective, given by $\sum_i \tilde p_i(\bar p_i) \Lambda_i^M(x_i, \bar p_i)$, where $\tilde p_i(\bar p_i) = \alpha + (1-\alpha)p_i(\bar p_i)$, so the computational procedures from Section~\ref{subsec:fare-computation} apply in computing the corresponding centralized and equilibrium benchmarks. In these experiments, we set the reservation wage under the rider welfare objective to $W=\mathrm{Rs.}\ 250$, consistent with average daily driver profits in Nalasopara.

Figure~\ref{fig:alpha_profit_welfare} depicts the profit and rider welfare ratios as functions of $\alpha$ for the centralized and equilibrium fare-optimized problems. The reported profit ratio is measured relative to the fare-optimized centralized profit benchmark at $\alpha=1$, and the rider welfare ratio relative to the fare-optimized centralized welfare benchmark at $\alpha=0$; hence, ratios close to one indicate that the corresponding single-objective benchmark is nearly attained. Our results show that moving from the pure rider welfare objective $(\alpha=0)$ to even a small positive weight on profit substantially improves cumulative driver profit with only a small loss in rider welfare. This sharp improvement arises because the pure rider welfare objective sets fares as low as possible subject to the reservation wage constraint, maximizing ridership but leaving little margin for driver profits. Once a small profit weight is introduced, fares rise enough to improve driver margins substantially, but not so much as to deter many riders; thus, the profit ratio falls sharply while the rider welfare ratio remains close to one. In fact, under the centralized solution, even the profit-maximizing fare-optimized solution achieves nearly the rider welfare optimum, indicating that the welfare cost of incorporating driver profits is small. Moreover, as in the fixed-fare setting, the equilibrium solution remains bounded away from the centralized benchmark on both metrics, incurring roughly a $20$--$30\%$ loss in cumulative driver profits and rider welfare for most $\alpha$, reflecting the loss from decentralized driver route choice even when fares are optimized. Note that, at the extremes, the rider welfare ratio of the equilibrium solution is equal to one at $\alpha=0$, while the profit ratio at $\alpha=1$ is strictly larger than one but below two, consistent with our guarantees in Section~\ref{sec:price-optimization}.

The preceding results show that incorporating driver profits into the fare-optimization objective can improve cumulative profits with little loss in rider welfare. However, in practice, driver earnings are often protected not through the objective being optimized but through minimum earning requirements. We thus examine how the fare-optimized rider welfare equilibrium changes with the reservation wage. Figure~\ref{fig:reservation_wage} shows that rider welfare is largely insensitive to the reservation wage up to $W=\mathrm{Rs.}\ 450$, indicating that moderate earning guarantees can be accommodated with little loss in ridership. Over this range, total driver profit is given by $P_I^{\mathrm{fare, eq}} = D W$ and rises linearly with $W$. Beyond this point, however, the reservation wage constraint becomes costly: both rider welfare and cumulative driver profit fall because the system can no longer allocate all drivers while satisfying the higher reservation wage requirement. Thus, moderate reservation wage guarantees are compatible with high rider welfare and cumulative driver profits under optimized fares, but aggressive wage floors can become counterproductive by forcing fare and allocation changes that reduce demand and limit driver participation. 


Overall, these results suggest that fare optimization is an operationally viable tool for improving outcomes in informal transit systems, but its effectiveness depends on how it is paired with driver earning objectives or constraints. When fares are optimized purely for rider welfare, the equilibrium can attain the centralized welfare benchmark, but driver profits remain low. Adding even a small weight on profit substantially improves driver earnings with little loss in rider welfare, although decentralized route choice still creates a nontrivial gap relative to the centralized benchmark. Similarly, moderate reservation wage guarantees can be accommodated under optimized fares without meaningfully reducing ridership, but aggressive earning requirements can become counterproductive by reducing realized demand and limiting driver participation. Thus, fare adjustments can help balance rider affordability and driver earnings, but are most effective when earning targets are moderate and policymakers account for the inefficiency from decentralized driver allocation.


\begin{figure*}[tbh!]
  \centering \hspace{-25pt}
  \begin{subfigure}[b]{0.4\columnwidth}
      \include{Fig/msom/priceOpt/alpha_profit_ratio}
  \end{subfigure} \hspace{35pt}
  \begin{subfigure}[b]{0.4\columnwidth}
      \include{Fig/msom/priceOpt/alpha_welfare_ratio}
  \end{subfigure} 
     \vspace{-35pt}
    \caption{{\small \sf Fare-optimized profit and rider welfare ratios as functions of the objective weight $\alpha$, where $\alpha=0$ corresponds to rider welfare and $\alpha=1$ to cumulative driver profit. For each $\alpha\in[0,1]$, we optimize a convex combination of the two objectives and report the resulting ratios for both the fare-optimized centralized and equilibrium solutions $\x_{\alpha}$. Profit ratios are computed relative to the centralized profit benchmark at $\alpha=1$, denoted $\mathbf{x}_1^*$, while rider welfare ratios are computed relative to the centralized rider welfare benchmark at $\alpha=0$, denoted $\mathbf{x}_0^*$.
    }} 
    \label{fig:alpha_profit_welfare}
\end{figure*}


\begin{figure*}[tbh!]
  \centering \hspace{-25pt}
  \begin{subfigure}[b]{0.4\columnwidth}
      \include{Fig/msom/priceOpt/reservation_wage_profit}
  \end{subfigure} \hspace{35pt}
  \begin{subfigure}[b]{0.4\columnwidth}
      \include{Fig/msom/priceOpt/reservation_wage_welfare}
  \end{subfigure} 
     \vspace{-35pt}
    \caption{{\small \sf Cumulative driver profit and rider welfare accrued under the fare-optimized equilibrium rider welfare maximizing solution as the reservation wage $W$ is varied.
    }} 
    \label{fig:reservation_wage}
\end{figure*}


\vspace{-5pt}

\section{Conclusion and Future Work} \label{sec:conclusion}
\vspace{-2pt}

This work developed a framework for analyzing the incentives in informal and privatized transit systems and proposed incentive mechanisms to mitigate inefficiencies from decentralized driver behavior. We showed, through PoA bounds, that selfish driver behavior can result in bounded yet substantial losses in cumulative driver profit and rider welfare. However, these losses can be mitigated through targeted interventions, including cross-subsidization and fare optimization, though their efficacy depends on the objective. While cross-subsidization can mitigate inefficiencies across objectives, fare optimization, which is easier to implement and enforce, has asymmetric power: it can recover the centralized rider welfare benchmark arbitrarily well, but cannot improve the tight PoA bound of two for cumulative driver profit. We further reinforced these findings through numerical experiments,  which, in particular, highlighted that fare optimization is an operationally viable tool for mitigating inefficiencies in informal transit systems, but its efficacy depends on how rider welfare is balanced against driver earnings. Overall, our work highlights both the promise and the limits of incentive design in informal transit systems, showing how practical monetary levers can align operator incentives with system objectives while identifying when losses from decentralized behavior remain unavoidable. In doing so, we introduce a new, relatively understudied, application domain, particularly relevant in developing nation contexts, to the operations community.

Several directions merit future study. First, our model can be extended to incorporate heterogeneous driver route preferences and \emph{physical} road congestion, where travel times depend on the number of drivers on a route. For the rider-side problem in our framework, it would be valuable to explore standard relaxations of Vickrey's bottleneck model, such as heterogeneous values of time or non-uniform desired arrival time distributions. Further, future work could examine whether the computational complexity of our algorithms for fare-optimized equilibrium and centralized benchmarks is tight or can be improved, and whether fare optimization can mitigate efficiency losses under alternative objectives. Finally, while this paper focuses on fully privatized informal transit systems, the framework naturally extends to settings in which public and informal transit coexist. A fuller treatment of this extension, omitted here for space and to preserve the focus of the present paper, would study how public transit service and incentive mechanisms for decentralized private operators should be jointly designed to improve system-wide outcomes.

\vspace{-5pt}

\section*{Acknowledgments}

This work was supported by the Data Science Institute Postdoctoral Fellowship at Columbia University. We also thank Dr. Jacqueline Klopp for several insightful discussions and the team at the Society for the Promotion of Area Resource Centers (SPARC), including Alex Mohan Kandathil, Anreerudha Paul, Shreya Baoni, and Kshitija Akre for their support in providing the data on Nalasopara's informal transit system. This data collection was enabled through the Mumbai Living Lab as part of PRISM (Partnership for Research on Informal and Shared Mobility).

\bibliographystyle{unsrtnat}
\bibliography{main}

\clearpage

\appendix

\section{Additional Related Work} \label{apdx:related-work}

This section surveys additional works in addition to those covered in Section~\ref{sec:literature} that are related to this work.

Our work is also related to the design of mechanisms to mitigate inefficiencies in decentralized mobility systems. The cross-subsidy mechanisms we study use route-specific tolls or subsidies to shape service patterns of informal or privatized transit drivers, akin to targeted driver subsidies in ride-sharing~\citep{ZHU2021540,wang2023efficiency}. In contrast to these works, we show that a \emph{budget-balanced} cross-subsidy mechanism, requiring no net government expenditure, can align operator incentives in the informal transit setting we study, echoing revenue-neutral congestion pricing with revenue-refunding schemes developed for traffic routing contexts~\citep{jalota-tcns}.

Beyond cross-subsidies, we also study fare optimization as a mechanism to mitigate inefficiencies from selfish driver behavior, relating both to congestion pricing in selfish routing and to platform pricing in two-sided mobility markets. As with congestion pricing, route-level fares can improve decentralized outcomes~\citep{roughgarden2005selfish}; however, unlike classical tolls, which only price agents generating congestion externalities, fares in our setting influence both rider costs and driver margins. This two-sided feature of fare adjustments connects our work to pricing models in ride-hailing platforms~\citep{banerjee-johari-2015,banerjee-johari-2016,bimpikis2019spatial}. Our setting differs, however, in that drivers choose which routes to serve rather than being matched, dispatched, or repositioned by a platform. Thus, in our setting, fare optimization does not simply balance rider demand and driver supply; it also reshapes the route-level earnings that determine decentralized service patterns.

Our work also contributes to the literature on public transit design, including network and line planning~\citep{schobel2012line,kreindler2023optimal}, multi-modal integration~\citep{luo-banerjee-2025,samitha-real-time-2022}, and micro-transit design~\citep{van2023marta,joc-jacquillat-paratransit}. Whereas these works study centrally coordinated transit systems, we adopt a complementary approach by studying equilibria in decentralized transit systems with a predefined menu of routes. Finally, our work relates to studies on the economic impacts of public transportation infrastructure, which have mainly focused on formal transit systems, including subways~\citep{subways-gu} and BRT lines~\citep{tsivanidis2022evaluating}. In contrast, our analysis centers on informal and privatized transit, highlighting how their incentives shape outcomes in settings where formal public transit is limited.

\section{Proofs}

\subsection{Proof of Proposition~\ref{prop:rider-demand-func-queuing}} \label{apdx:pf-rider-demand-gonzalez}

Fix a route $i$ with $S_i \geq 0$ and $x_i$ minibus drivers operating on that route, resulting in a service rate of $\mu_i(x_i) = \frac{x_i F}{2 l_i}$. Then, defining the total time horizon over which the drivers service the route as $T_i(x_i)$, the total mass of riders served by the minibus is given by: $\Lambda_i^M(x_i) = \min \{ \mu_i(x_i), \lambda_i \} T_i(x_i)$. Note that $T_i(x_i) \geq \Delta = t_2 - t_1$, as riders may be willing to arrive earlier or later than their desired arrival times at the destination under the cost functions in Equation~\eqref{eq:cost-minibus-queuing}. Thus, the key to proving our result is to establish a relation for the total time $T_i(x_i)$ over which riders are serviced on a route $i$, given an allocation of $x_i$ minibus drivers on that route.

Note that if the system is over-supplied with $\mu_i(x_i) \geq \lambda_i$, since $S_i \geq 0$, there will be no rider queuing or waiting delays and all riders can arrive at their destinations at their desired times by the minibus; hence, $T_i(x_i) = \Delta$ in this regime. 

Thus, consider the under-supplied regime when $\mu_i(x_i) < \lambda_i$. In this regime, to derive a relation for the total service time, we first characterize riders' equilibrium mode and arrival time decisions on a given route $i$ with $x_i$ drivers operating on that route. We do so by adapting the associated equilibrium characterization in Vickrey’s bottleneck model with an outside option from~\cite{GONZALES20121519} to our informal transit setting. In particular, defining the threshold $\Bar{S}(x_i) = \frac{\Lambda_i \eta_E \eta_L}{\mu_i(x_i) (\eta_E + \eta_L)}$ given an allocation of $x_i$ drivers to route $i$, we characterize riders' equilibrium mode and arrival time decisions in two cases: (i) $S_i \geq \Bar{S}(x_i)$, and (ii) $S_i \in [0, \Bar{S}(x_i))$.

\textbf{Case (i):} In this setting, the outside option is never cost-effective relative to traveling by the minibus, even for the user experiencing the highest travel cost with a wait time of $\Bar{S}(x_i)$. The resulting equilibrium waiting time profile of drivers reduces to the classical Vickrey bottleneck model without an outside option and is depicted on the left of Figure~\ref{fig:eq-waiting-time}, with riders being serviced over the interval $[t_A', t_B']$. In this case, since all riders are serviced by the minibus at equilibrium, the total service time $T_i(x_i) = t_B' - t_A' = \frac{\Lambda_i}{\mu_i(x_i)}$.

\textbf{Case (ii):} In this regime, only a fraction of the total rider demand is served by the minibus. Adapting Proposition 1 from~\cite{GONZALES20121519} to our informal transit setting, we obtain the following equilibrium characterization in the regime when $S_i \in [0, \Bar{S}(x_i))$.

\begin{proposition}[Rider Equilibrium Characterization~\citep{GONZALES20121519}] \label{prop:gonzales-eq-characterization}
Suppose $x_i$ minibus drivers operate on route $i$ with $\mu_i(x_i) < \lambda_i$, where users choose between the minibus and an outside option where the cost difference between the outside option and the minibus without rider queuing delays $S_i \in [0, \Bar{S}(x_i))$. Then, assuming riders are serviced in the order of their desired destination arrival times, there exists a unique rider equilibrium satisfying (see right of Figure~\ref{fig:eq-waiting-time}):
\begin{itemize}
    \item The number of minibus riders that arrive at their destination earlier than their desired time is given by $N_E = \frac{\mu_i(x_i) S_i}{\eta_E}$ and they travel at the beginning of the rush between periods $[t_A, t_B]$.
    \item The number of minibus riders that arrive at their destination later than their desired time is given by $N_L = \frac{\mu_i(x_i) S_i}{\eta_L}$ and they travel at the end of the rush between periods $[t_C, t_D]$.
    \item The number of users that arrive exactly on time by the minibus and those that use the outside option are strictly decreasing functions of the cost difference $S_i$ and they travel in the middle of the rush between $[t_B, t_C]$. 
\end{itemize} 
\end{proposition}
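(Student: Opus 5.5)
We will prove Proposition~\ref{prop:gonzales-eq-characterization} by building the Vickrey-style equilibrium explicitly and checking that it is the only one. Throughout, the route is fixed with constant service rate $\mu := \mu_i(x_i) < \lambda_i$, riders are served first-in-first-out by desired arrival time, and $\Bar{S}(x_i)$ is as defined above.

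\textbf{Equilibrium conditions and waiting-time profile.} The first step records what any equilibrium must satisfy.
\begin{enumerate}[label=(\alph*)]
\item Whenever the minibus has a queue or serves riders, riders are completed at rate exactly $\mu$.
\item Every minibus rider incurs the same total cost $c_i^M(t,t^*)$.
\item Since $S_i < \Bar{S}(x_i)$, this common cost equals $c_i^O$, so riders are indifferent between the two modes. Minibus users then pay delay costs $w(t) + \eta_E(t^*-t)_+ + \eta_L(t-t^*)_+ = S_i$.
\end{enumerate}

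The standard bottleneck argument then pins down the waiting-time profile $w(\cdot)$. Take an early rider completing at $t$ with $t^* > t$. Indifference across nearby completion times forces $w'(t) = \eta_E$ on the early portion. For late riders, $w'(t) = -\eta_L$. The queue must vanish at the start $t_A$ and end $t_D$ of service, since otherwise a rider could shift slightly earlier or later and strictly gain. Hence $w(t_A) = w(t_D) = 0$.

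\textbf{Counting early and late riders.} The earliest rider, served at $t_A$, has zero wait, so all of their cost $S_i$ is schedule delay. Early schedule delay decreases at rate $\eta_E$ while waiting rises at rate $\eta_E$, and the early phase ends at $t_B$, when the schedule delay reaches zero. So the early phase has length $t_B - t_A = S_i/\eta_E$. At rate $\mu$ this gives $N_E = \mu S_i/\eta_E$. Symmetrically, $t_D - t_C = S_i/\eta_L$ and $N_L = \mu S_i/\eta_L$.

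On $[t_B, t_C]$ the waiting time is the constant $S_i$. These riders arrive on time, and FIFO ordering together with uniform $t^*$ makes this interval compatible with desired arrival times. The remaining riders, whose desired times fall in the middle and who cannot be absorbed at rate $\mu < \lambda_i$, take the outside option.

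\textbf{Monotonicity and uniqueness.} The mass of on-time minibus riders equals $\mu$ times the middle-interval length. That length is the residual of $[t_1,t_2]$ after the early and late groups occupy desired-time mass at rate $\lambda_i$. Both $N_E$ and $N_L$ grow linearly in $S_i$, so the on-time mass shrinks as $S_i$ grows. The outside-option users must be accounted for as well. I would verify both monotonicity claims by writing the masses as explicit affine functions of $S_i$, using $\Lambda_i = \lambda_i \Delta$ and the condition $S_i < \Bar{S}(x_i)$ to ensure all masses are positive.

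Uniqueness follows because the conditions above fix $t_B - t_A$, $t_D - t_C$, the slopes of $w$, and the boundary values. Conservation of desired-arrival mass under FIFO then fixes the placement of $t_A, \dots, t_D$ relative to $[t_1, t_2]$.

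\textbf{Main obstacle.} The hardest step is the bookkeeping that places $t_A, \dots, t_D$ relative to the uniform desired-time window and verifies consistency. In particular, one must check that under FIFO early riders genuinely have $t^* \geq t$, late riders have $t^* \leq t$, and the outside-option mass is the complement. This must hold while confirming the strict decrease claims, and it is where the assumption $S_i < \Bar{S}(x_i)$ enters. I would first try to mirror the proof of Proposition 1 in~\cite{GONZALES20121519} directly, translating its bottleneck capacity into $\mu$.
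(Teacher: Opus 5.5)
\paragraph{Gap: the equal-cost premise.} Your conditions (b)--(c) are the problem. Claiming that every minibus rider incurs the same cost, namely $c_i^O$ (delay cost $S_i$), is the single-desired-time bottleneck condition. It fails here, because desired times are spread uniformly over $[t_1,t_2]$.

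In the equilibrium you are building, $w(t)=\eta_E(t-t_A)$ on $[t_A,t_B]$. An early rider with desired time $t^*\le t_B$ who completes at any $t\in[t_A,t^*]$ therefore pays $\eta_E(t^*-t_A)$. This is strictly below $S_i=\eta_E(t_B-t_A)$ unless $t^*=t_B$. Symmetrically, a late rider pays $\eta_L(t_D-t^*)<S_i$. Only riders with desired times in $[t_B,t_C]$ are indifferent to the outside option.

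Hence the sentence ``the earliest rider \ldots\ all of their cost $S_i$ is schedule delay'' is false. That rider has $t^*=t_1$ and pays $S_i(1-\mu/\lambda_i)$. Your length $t_B-t_A=S_i/\eta_E$ comes out right only because two errors cancel: the initial delay and its rate of decrease along the queue are both off by the factor $1-\mu/\lambda_i$. The correct derivation uses $w(t_A)=0$, slope $\eta_E$, and $w(t_B)=S_i$, the last because the rider completing at $t_B$ is on time and marginal.

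\paragraph{Why this matters for the bookkeeping.} The strict preference of early and late riders is exactly what resolves the step you flag as the main obstacle, and premise (c) contradicts it. If early riders were indifferent, nothing would force all of them to ride, and your ``residual of $[t_1,t_2]$'' step would be unjustified. As written, the proposal is internally inconsistent.

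The correct route runs as follows.
\begin{itemize}
\item Every rider with $t^*<t_B$ strictly prefers the minibus, so all of them ride. Mass balance gives $\lambda_i(t_B-t_1)=\mu(t_B-t_A)=\mu S_i/\eta_E$, hence $t_B=t_1+\mu S_i/(\lambda_i\eta_E)$ and $t_A=t_1-(1-\mu/\lambda_i)S_i/\eta_E<t_1$.
\item FIFO then places the rider with desired time $t^*$ at completion time $t_A+(\lambda_i/\mu)(t^*-t_1)$. This is at most $t^*$, with equality only at $t^*=t_B$, so these riders are genuinely early.
\item Symmetrically, $t_C=t_2-\mu S_i/(\lambda_i\eta_L)$.
\item Therefore $t_C-t_B=\Delta\bigl(1-S_i/\bar{S}(x_i)\bigr)$. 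This is positive exactly when $S_i<\bar{S}(x_i)$; that, not (c), is where the hypothesis enters.
\item The on-time and outside-option masses are $\mu(t_C-t_B)$ and $(\lambda_i-\mu)(t_C-t_B)$. Both are strictly decreasing in $S_i$ because $\mu<\lambda_i$.
\end{itemize}

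\paragraph{Uniqueness.} Your uniqueness sketch rests on (b)--(c) and presupposes the three-phase shape of $w$. A self-contained proof must show that every equilibrium has this shape, for instance that $w=S_i$ wherever riders of a given desired time split between modes. The paper does not prove this itself; it imports the whole characterization from Proposition 1 of \cite{GONZALES20121519}.
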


\vspace{-10pt}

\begin{figure}[tbh!]
      \centering
      \includegraphics[width=120mm]{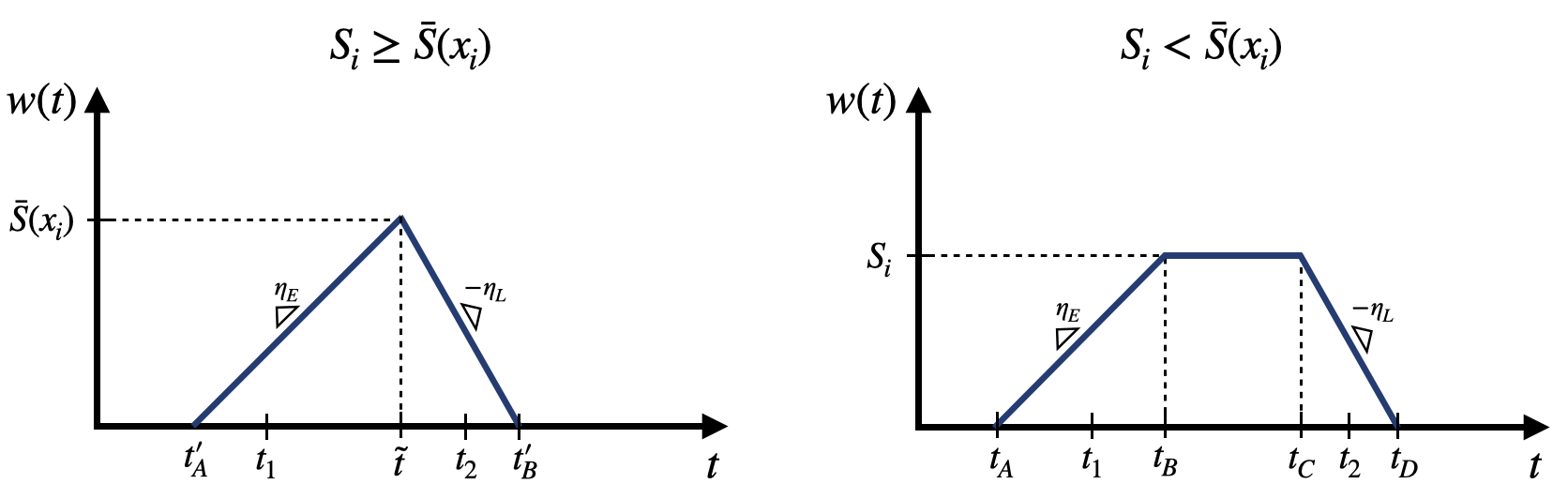}
      \vspace{-10pt}
      \caption{\small \sf Depiction of the equilibrium rider waiting time profiles in the regime when the cost difference between the outside option and that of using the minibus without queuing delays satisfies $S_i \geq \Bar{S}(x_i)$ (left) and $S_i < \Bar{S}(x_i)$ (right).
      }
      \label{fig:eq-waiting-time} 
   \end{figure} 

\vspace{-10pt}
 
While~\cite{GONZALES20121519} characterize the above equilibrium for a broad class of desired destination arrival time distributions for riders, they do not provide a closed-form characterization of the total service time $T_i(x_i) = t_D - t_A$. In our setting, by focusing on uniformly distributed desired destination arrival time distributions, which is consistent with the classical bottleneck model of~\cite{vickrey1969congestion} while also significantly more general than the single departure time formulations most commonly studied in the literature, we obtain a closed-form characterization of the service times in our setting. 

\begin{corollary}[Equilibrium Service Time] \label{cor:service-time}
Suppose $x_i$ minibus drivers operate on route $i$ with $\mu_i(x_i) < \lambda_i$. Then, in the regime when the cost difference between the outside option and the minibus without rider queuing delays is non-negative, i.e., $S_i \geq 0$, the total service time $T_i(x_i) = \Delta + \min \{S_i, \Bar{S}(x_i) \} \left( \frac{\eta_E + \eta_L}{\eta_E \eta_L} \right) \left( 1 - \frac{\Delta x_i F}{2 l_i \Lambda_i} \right)$.
\end{corollary}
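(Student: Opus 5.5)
The plan is to split according to the two cases used in the proof of Proposition~\ref{prop:rider-demand-func-queuing}, namely $S_i \geq \Bar{S}(x_i)$ and $S_i \in [0,\Bar{S}(x_i))$. In each case I will compute $T_i(x_i)$ directly and then check that it matches the single expression in the statement. Two notational simplifications help throughout: set $\kappa := \frac{\eta_E+\eta_L}{\eta_E\eta_L}$, and note that $\frac{\Delta x_i F}{2 l_i \Lambda_i} = \frac{\mu_i(x_i)\Delta}{\Lambda_i} = \frac{\mu_i(x_i)}{\lambda_i}$. The target therefore reads
\begin{align*}
T_i(x_i) = \Delta + \min\{S_i,\Bar{S}(x_i)\}\,\kappa\,\Bigl(1-\frac{\mu_i(x_i)}{\lambda_i}\Bigr).
\end{align*}

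\textbf{Case (i), $S_i \geq \Bar{S}(x_i)$.} Here the minimum equals $\Bar{S}(x_i) = \frac{\Lambda_i}{\mu_i(x_i)\kappa}$. Substituting gives
\begin{align*}
\Delta + \frac{\Lambda_i}{\mu_i(x_i)}\Bigl(1-\frac{\mu_i(x_i)}{\lambda_i}\Bigr) = \Delta + \frac{\Lambda_i}{\mu_i(x_i)} - \frac{\Lambda_i}{\lambda_i} = \frac{\Lambda_i}{\mu_i(x_i)},
\end{align*}
since $\Lambda_i/\lambda_i = \Delta$. This is exactly the service time already derived for this case, because all $\Lambda_i$ riders are served at rate $\mu_i(x_i)$. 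So this case is pure verification.

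\textbf{Case (ii), $S_i < \Bar{S}(x_i)$.} I will use the structure in Proposition~\ref{prop:gonzales-eq-characterization}: early riders travel on $[t_A,t_B]$, on-time riders on $[t_B,t_C]$, and late riders on $[t_C,t_D]$. The key is to write
\begin{align*}
T_i = t_D - t_A = \Delta + (t_1 - t_A) + (t_D - t_2)
\end{align*}
and evaluate the two overhangs separately.

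For the early block, riders are served at rate $\mu_i(x_i)$, so $t_B - t_A = N_E/\mu_i(x_i)$. Because riders are served in order of desired arrival times, and every rider with desired time before $t_B$ uses the minibus early, the early riders are exactly those with desired times in $[t_1,t_B]$. Their desired times are uniform at rate $\lambda_i$, so $t_B - t_1 = N_E/\lambda_i$. Hence
\begin{align*}
t_1 - t_A = N_E\Bigl(\frac{1}{\mu_i(x_i)} - \frac{1}{\lambda_i}\Bigr),
\end{align*}
and symmetrically $t_D - t_2 = N_L\bigl(\frac{1}{\mu_i(x_i)} - \frac{1}{\lambda_i}\bigr)$.

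Plugging in $N_E + N_L = \mu_i(x_i) S_i\bigl(\frac{1}{\eta_E}+\frac{1}{\eta_L}\bigr) = \mu_i(x_i) S_i \kappa$ gives
\begin{align*}
T_i(x_i) = \Delta + S_i\,\kappa\,\Bigl(1-\frac{\mu_i(x_i)}{\lambda_i}\Bigr),
\end{align*}
which is the claimed formula with $\min\{S_i,\Bar{S}(x_i)\} = S_i$. The edge case $S_i = 0$ fits the same formula: there are no early or late riders and $T_i = \Delta$.

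\textbf{Main obstacle.} The delicate step is justifying $t_B - t_1 = N_E/\lambda_i$ (and its late-side analogue). This requires arguing from the Gonzales equilibrium and the service-order assumption that no rider whose desired time lies in $[t_1,t_B]$ takes the outside option. It also requires that the rider served at $t_B$ arrives exactly on time, so that $t_B$ coincides with a desired arrival time. To establish this, I would use equal-cost conditions: early riders face cost $c_i^M + S_i$, the same cost as the outside option. I would then invoke the uniqueness in Proposition~\ref{prop:gonzales-eq-characterization} to rule out gaps in the early block.
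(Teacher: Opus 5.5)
Your proposal is correct and follows essentially the same route as the paper. Case (i) is the same direct substitution. In Case (ii), the paper writes $T_i = (t_B-t_A)+(t_C-t_B)+(t_D-t_C)$ with $t_B-t_A = S_i/\eta_E$, $t_D-t_C = S_i/\eta_L$ and $t_C-t_B = \bigl(1 - S_i/\Bar{S}(x_i)\bigr)\Delta$, which is only a regrouping of your $\Delta$-plus-overhangs decomposition. Both arguments rest on the fact you flag as the main obstacle: the early (late) riders are exactly those with desired times in $[t_1,t_B]$ ($[t_C,t_2]$). The paper asserts this briefly from the linear equilibrium waiting-time profile rather than proving it in detail.
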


\proof{Proof.}
First consider the regime when $S_i \geq \Bar{S}(x_i)$. Then, substituting the expression for $\Bar{S}(x_i)$ in the expression for the total service time in the statement of the corollary and simplifying, we obtain that $T_i(x_i) = \frac{\Lambda_i}{\mu_i(x_i)}$, consistent with the expression in the analysis of case (i) above.

Next, consider the regime when $S_i < \Bar{S}(x_i)$. In this case, from the right of Figure~\ref{fig:eq-waiting-time}, note that the service time $T_i(x_i) = t_D - t_A = (t_D - t_C) + (t_C - t_B) + (t_B - t_A)$. Next, from Proposition~\ref{prop:gonzales-eq-characterization}, since all riders are served by the minibus in the period $[t_A, t_B]$, it follows that $N_E = \frac{\mu_i(x_i) S_i}{\eta_E} = \mu_i(x_i)(t_B - t_A)$, which implies $t_B - t_A = \frac{S_i}{\eta_E}$. Analogously, it follows that $t_D - t_C = \frac{S_i}{\eta_L}$. 

Finally, to derive the relation for $t_C - t_B$, note that the total fraction of the demand $\Lambda_i$ that is served early or late during the periods $[t_A, t_B]$ and $[t_C, t_D]$ is $\frac{S_i}{\Bar{S}(x_i)}$ by the linearity of the equilibrium waiting time function in Figure~\ref{fig:eq-waiting-time}. Hence, it follows that $t_C - t_B = \left(1-\frac{S_i}{\Bar{S}(x_i)} \right) \Delta$. Then:
{\setlength{\abovedisplayskip}{4pt}
\setlength{\belowdisplayskip}{4pt}
\setlength{\jot}{1pt}
\begin{align*}
    T_i(x_i) &= t_D - t_A = (t_D - t_C) + (t_C - t_B) + (t_B - t_A) = \frac{S_i}{\eta_L} + \frac{S_i}{\eta_E} + (t_2-t_1)(1 - \frac{S_i}{\Bar{S}(x_i)}), \\
    &= \Delta + S_i \left( \frac{1}{\eta_L} + \frac{1}{\eta_E} - \frac{\Delta}{\Bar{S}(x_i)}\right) = \Delta + S_i \left( \frac{\eta_E + \eta_L}{\eta_E \eta_L} - \frac{t_2-t_1}{\frac{\Lambda_i \eta_E \eta_L}{\mu_i(x_i) (\eta_E + \eta_L)}} \right), \\
    &= \Delta + S_i \left( \frac{\eta_E + \eta_L}{\eta_E \eta_L} \right) \left( 1 - \frac{\Delta x_i F}{2 l_i \Lambda_i} \right),
\end{align*}}
where we substitute $\mu_i(x_i) = \frac{x_i F}{2 l_i}$ to obtain the final equality. This establishes our total service time relation. 
\endproof

Having established the relations for the total service time in both the under and over-supplied regimes, we now evaluate the minibus rider demand $\Lambda_i^M(x_i) =  \min \{ \mu_i(x_i), \lambda_i \} T_i(x_i)$ by substituting the derived expressions for the service time. 

First consider the over-supplied regime when $\mu_i(x_i) \geq \lambda_i$, corresponding to the setting when $x_i \geq \frac{2 l_i \lambda_i}{F}$. In this case, recalling that $T_i(x_i) = \Delta$, we have $\Lambda_i^M(x_i) =  \min \{ \mu_i(x_i), \lambda_i \} T_i(x_i) = \lambda_i \Delta = \Lambda_i$.

Next, in the under-supplied case, suppose that it further holds that $S_i \geq \Bar{S}(x_i)$, corresponding to the setting when $x_i \geq \frac{2 l_i \Lambda_i \eta_E \eta_L}{F S_i (\eta_E + \eta_L)}$. Then, substituting the expression for $\Bar{S}(x_i)$ in the total service time expression in Corollary~\ref{cor:service-time}, it follows that $T_i(x_i) = \frac{\Lambda_i}{\mu_i(x_i)}$. Hence, in this regime, it again follows that $\Lambda_i^M(x_i) = \Lambda_i$. Together, the above relations imply that for any driver allocation $x_i \geq \min \{ \frac{2 l_i \lambda_i}{F}, \frac{2 l_i \Lambda_i \eta_E \eta_L}{F S_i (\eta_E + \eta_L)} \} = \Tilde{k}_i^*$, all the rider demand is served by the minibus, i.e., $\Lambda_i^M(x_i) = \Lambda_i$ for all $x_i \geq \Tilde{k}_i^*$.

Finally, consider the regime when $x_i < \Tilde{k}_i^*$. In this case, we leverage the relation for the total service time derived in Corollary~\ref{cor:service-time} to derive the minibus rider demand as a function of the driver allocation on that route. In this regime:
{\setlength{\abovedisplayskip}{4pt}
\setlength{\belowdisplayskip}{4pt}
\setlength{\jot}{1pt}
\begin{align*}
    \Lambda_i^M(x_i) &= \min \{ \mu_i(x_i), \lambda_i \} T_i(x_i) \stackrel{(a)}{=} \mu_i(x_i) T_i(x_i) \stackrel{(b)}{=} \frac{x_i F}{2l_i} \left(\Delta + S_i \left( \frac{\eta_E + \eta_L}{\eta_E \eta_L} \right) \left( 1 - \frac{\Delta x_i F}{2 l_i \Lambda_i} \right) \right), \\
    &= \frac{F}{2 l_i} \left( \Delta + S_i \left( \frac{\eta_E + \eta_L}{\eta_E \eta_L} \right) \right) x_i - \left(\left(\frac{F}{2l_i}\right)^2 S_i \frac{\eta_E + \eta_L}{\eta_E \eta_L} \frac{\Delta}{\Lambda_i} \right) x_i^2
\end{align*}}
where (a) follows as $\mu_i(x_i) < \lambda_i$ in the regime when $x_i < \Tilde{k}_i^*$ and (b) follows by substituting the relation for $\mu_i(x_i)$ and that for $T_i(x_i)$ derived in Corollary~\ref{cor:service-time}. This establishes our desired relation for the minibus rider demand in the regime when $x_i < \Tilde{k}_i^*$, which proves our result.

\subsection{Proof of Corollary~\ref{cor:monotonicity-rider-demands}} \label{apdx:pf-cor1}

To establish continuity of the rider demand function, it suffices to show that $\Lambda_i^M(\Tilde{k}_i^*) = \Lambda_i$. This follows directly by substituting the two candidate expressions for $\Tilde{k}_i^*$, i.e., $\frac{2 l_i \lambda_i}{F}$ and $\frac{2 l_i \Lambda_i \eta_E \eta_L}{F S_i (\eta_E + \eta_L)}$, into the quadratic expression for $\Lambda_i^M(\cdot)$. 

Next, to establish monotonicity, it suffices to show that $\Lambda_i^M(\cdot)$ is monotonically non-decreasing in the range $x_i \in [0, \Tilde{k}_i^*]$. To see this, we consider the regime when the derivative of the minibus rider demand function is non-negative:
{\setlength{\abovedisplayskip}{4pt}
\setlength{\belowdisplayskip}{4pt}
\setlength{\jot}{1pt}
\begin{align*}
    \frac{d \Lambda_i^M(x_i)}{d x_i} = \frac{F}{2l_i} \left( t_2 - t_1 + S_i \left( \frac{\eta_E + \eta_L}{\eta_E \eta_L} \right) \right) - 2 \left(\left(\frac{F}{2l_i}\right)^2 S_i \frac{\eta_E + \eta_L}{\eta_E \eta_L} \frac{t_2-t_1}{\Lambda_i} \right) \geq 0.
\end{align*}}
Rearranging the above inequality and simplifying, we obtain the following condition on the driver allocation $x_i$ for the non-negativity of the derivative of the minibus demand function $\Lambda_i^M(\cdot)$: $x_i \leq \frac{2 l_i \lambda_i}{F} + \frac{2 l_i \Lambda_i \eta_E \eta_L}{F S_i (\eta_E + \eta_L)}.$ Since the expression on the right hand side of the this inequality is at least $\Tilde{k}_i^* = \min \left\{ \frac{2 l_i \lambda_i}{F}, \frac{2 l_i \Lambda_i \eta_E \eta_L}{F S_i (\eta_E + \eta_L)} \right\}$, it follows that $\Lambda_i^M(\cdot)$ is monotonically non-decreasing in the range $x_i \in [0, \Tilde{k}_i^*]$.

\subsection{Proof of Corollary~\ref{cor:monotonicity-per-driver-profits}} \label{apdx:pf-cor2}

In the regime when $x_i > \Tilde{k}_i^*$, the per-driver profit $\pi_i(x_i) = \frac{\Lambda_i p_i}{x_i}$. On the other hand, when $x_i \leq \Tilde{k}_i^*$, substituting the relation for the demand function $\Lambda_i^M(\cdot)$, the per-driver profit is given by:
\begin{align} \label{eq:per-driver-profit-relation}
    \pi_i(x_i) = \frac{\Lambda_i^M(x_i) p_i}{x_i} = p_i \frac{F}{2l_i} \left( t_2 - t_1 + S_i \left( \frac{\eta_E + \eta_L}{\eta_E \eta_L} \right) \right) - p_i \left(\left(\frac{F}{2l_i}\right)^2 S_i \frac{\eta_E + \eta_L}{\eta_E \eta_L} \frac{t_2-t_1}{\Lambda_i} \right) x_i.
\end{align}
From these relations and the continuity of the rider demand function $\Lambda_i^M(\cdot)$, the per-driver profit function is continuous and monotonically decreasing in the driver allocation $x_i$.

\subsection{Proof of Theorem~\ref{thm:poa-demand-served}} \label{apdx:pf-poa-rider-welfare}

Fix an informal transit instance $I$ with associated rider demand functions belonging to the class $\mathcal{L}_V$, given by Equation~\eqref{eq:rider-demand-function} for all routes $i \in [n]$. Since our proof applies to all feasible instances $I$, we drop the dependency on $I$ in our notation for the remainder of this proof.

For this instance, let $\x^*$ be the cumulative rider welfare allocation, i.e., $\x^* \in \argmax_{\x \in \Omega} R(\x) = \sum_{i = 1}^n \Lambda_i^M(x_i)$, and let $\x^{Eq}$ be any equilibrium driver allocation. Moreover, as in the proof of Theorem~\ref{thm:poa-driver-profit}, it suffices to restrict attention to the set of routes $Q$ for which the difference $S_i \geq 0$.

Then, defining the two sets: $L_1 = \{ i \in Q: x_i^{Eq} \geq x_i^* \}$ and $L_2 = \{ i \in Q: x_i^{Eq} < x_i^* \}$, we have:
\begin{align*}
    R(\x^*) &= \sum_{i \in Q} \Lambda_i^M(x_i^*) = \sum_{i \in L_1} \Lambda_i^M(x_i^*) + \sum_{i \in L_2} \Lambda_i^M(x_i^*) \stackrel{(a)}{\leq} \sum_{i \in L_1} \Lambda_i^M(x_i^{Eq}) + \sum_{i \in L_2} \Lambda_i^M(x_i^*), \\
    &\stackrel{(b)}{\leq} \sum_{i \in L_1} \Lambda_i^M(x_i^{Eq}) + \frac{1}{p_{\min}} \sum_{i \in L_2} p_i \Lambda_i^M(x_i^*) \stackrel{(c)}{=} \sum_{i \in L_1} \Lambda_i^M(x_i^{Eq}) + \frac{1}{p_{\min}} \sum_{i \in L_2} \pi_i^* x_i^*, \\
    &\stackrel{(d)}{\leq} \sum_{i \in L_1} \Lambda_i^M(x_i^{Eq}) + \frac{1}{p_{\min}} \pi^{Eq} \sum_{i \in L_2} x_i^* \stackrel{(e)}{\leq} \sum_{i \in L_1} \Lambda_i^M(x_i^{Eq}) + \frac{1}{p_{\min}} \pi^{Eq} \sum_{i \in Q} x_i^{Eq}, \\
    &= \sum_{i \in L_1} \Lambda_i^M(x_i^{Eq}) + \frac{1}{p_{\min}} \sum_{i \in Q} p_i \Lambda_i^M(x_i^{Eq}) \leq \sum_{i \in L_1} \Lambda_i^M(x_i^{Eq}) + \frac{p_{\max}}{p_{\min}} \sum_{i \in Q} \Lambda_i^M(x_i^{Eq}), \\
    &\stackrel{(f)}{\leq} \left( 1 + \frac{p_{\max}}{p_{\min}} \right) R(\x^{Eq}),
\end{align*}
where (a) follows by the monotonicity of the minibus rider demand in the driver allocation as established in Corollary~\ref{cor:monotonicity-rider-demands}, (b) follows as $p_{\min} \leq p_i$ for all routes $i$, and (c) follows by the definition of $\pi_i^* = \frac{p_i \Lambda_i^M(x_i^*)}{x_i^*}$. Moreover, (d) follows as $\pi^{Eq} \geq \pi_i^*$ for all $i \in L_2$ by the monotonicity of the per-driver profit function in Corollary~\ref{cor:monotonicity-per-driver-profits}, (e) follows as $\sum_{i \in L_2} x_i^* \leq \sum_{i \in Q} x_i^* = \sum_{i \in Q} x_i^{Eq} = D$, as the same number of drivers are allocated under the optimal and equilibrium allocations, and (f) follows as $\sum_{i \in L_1} \Lambda_i^M(x_i^{Eq}) \leq \sum_{i \in Q} \Lambda_i^M(x_i^{Eq}) = R(\x^{Eq})$, as $L_1 \subseteq Q$. The above analysis holds for any equilibrium allocation $\x^{Eq}$ and instance of our informal transit system, thus establishing our claim that the R-PoA is at most $1 + \frac{p_{\max}}{p_{\min}}$.

\subsection{Proof of Proposition~\ref{prop:tightness-poa-demand}} \label{apdx:pf-tightness-poa-demand}

We proceed in the vein of Proposition~\ref{prop:tightness-poa-profit}, and consider an informal transit instance with $n = 2$ routes, a normalized demand $D = 1$, and the cost difference $S_i = 0$ for both routes. Consequently, we obtain piece-wise linear rider demand functions for both routes, as in Equation~\eqref{eq:rider-demand-function-reduction} with $\Tilde{k}_i^* = \frac{2 l_i \lambda_i}{F}$ for $i \in \{ 1, 2\}$. Moreover, consider the setting where $\Tilde{k}_1^* + \Tilde{k}_2^* = D = 1$. In this case, the rider welfare maximizing allocation is $\x^* = (\Tilde{k}_1^*, \Tilde{k}_2^*)$, resulting in a total maximum achievable rider demand served of $R(\x^*) = \Lambda_1^M(\Tilde{k}_1^*) + \Lambda_2^M(\Tilde{k}_2^*) = \Lambda_1 + \Lambda_2$.

Next, suppose that profit per driver on the two routes satisfies $\pi_1(1) = \pi_2(0)$, i.e., the informal transit system supports an equilibrium driver allocation $\x^{Eq} = (1, 0)$. For this condition to hold, the following equalities must be satisfied:
\begin{align*}
    p_1 \Lambda_1 = \pi_1(1) = \pi_2(0) = p_2 \frac{F(t_2 - t_1)}{2 l_2} = p_2 \Lambda_2 \frac{F}{2 l_2 \lambda_2} = p_2 \Lambda_2 \frac{1}{\Tilde{k}_2^*}
\end{align*}
Using this relation, the minibus rider demand ratio between the optimal and above-defined equilibrium allocation is:
\begin{align*}
    \frac{R(\x^*)}{R(\x^{Eq})} = \frac{\Lambda_1 + \Lambda_2}{\Lambda_1} = 1 + \frac{\Lambda_2}{\Lambda_1} = 1 + \Tilde{k}_2^* \frac{p_1}{p_2}.
\end{align*}
Finally, taking the limit as $\Tilde{k}_2^* \rightarrow 1$ and $\Tilde{k}_1^* \rightarrow 0$ while satisfying $\Tilde{k}_1^* + \Tilde{k}_2^* = D = 1$, the above analysis implies that the minibus rider demand ratio approaches $1 + \frac{p_{\max}}{p_{\min}}$, thus establishing our claim.

\subsection{Proof of Theorem~\ref{thm:cross-subsidies-optimal}} \label{apdx:pf-thm-cross-subsidy}

For brevity of notation, we define the per-driver profit on route $i$ (without cross-subsidies) at the target allocation $\x^*$ as $\pi_i^* = \pi_i(x_i^*) = \frac{\Lambda_i^M(x_i^*) p_i}{x_i^*}$, where the rider demand function is given by Equation~\eqref{eq:rider-demand-function}. Then, to induce $\x^*$ as an equilibrium driver allocation under a cross-subsidy scheme defined by $\ttau$, the following equilibrium condition must hold: for any route $i$ with $x_i^* > 0$, the per-driver profits under cross-subsidy transfers is the at least that of any other route $j$, i.e.,
\begin{align} \label{eq:eq-condition-pf-cross-subsidy}
    \Tilde{\pi}_i(x_i^*) \geq \Tilde{\pi}_j(x_i^*) \quad \implies \quad  \pi_i^* + \tau_i \geq \pi_j^* + \tau_j,
\end{align}
where the inequality is met with an equality for all routes $j$ with $x_j^* > 0$.

In the remainder of this proof, we construct a transfer vector $\ttau$ that satisfies the above equilibrium condition with equality for all routes, i.e., $\pi_i^* + \tau_i = \pi_j^* + \tau_j$ for all routes $i, j$, while satisfying budget balance, i.e., $\sum_{i \in [n]} \tau_i x_i^* = 0$. Thus, we seek to satisfy the following $n$ (unique) equations with $n$ unknowns, corresponding to the entries of the transfer vector $\ttau$: $\pi_1^* + \tau_1 = \pi_2^* + \tau_{2},$ $\pi_1^* + \tau_1 = \pi_3^* + \tau_{3},$ $\ldots$ $\pi_1^* + \tau_1 = \pi_n^* + \tau_{n},$ $\sum_{i = 1}^n \tau_i x_i^* = 0.$
Rearranging the above equations, our goal is to find a vector $\ttau$ that solves the following system:
\begin{equation} \label{eq:lin-eq-cross-subsidies}
    \begin{bmatrix}
1 & -1 & 0 & \ldots & 0 \\
1 & 0 & -1 & \ldots & 0 \\
& & & \vdots & \\
1 & 0 & 0 & \ldots & -1 \\
x_1^* & x_2^* & x_3^* & \ldots & x_n^* \\
\end{bmatrix}
\begin{bmatrix}
\tau_1 \\
\tau_2 \\
\tau_3 \\
\vdots \\
\tau_n \\
\end{bmatrix} = 
\begin{bmatrix}
\pi_2^* - \pi_1^* \\
\pi_3^* - \pi_1^* \\
\vdots \\
\pi_n^* - \pi_1^* \\
0 \\
\end{bmatrix}
\end{equation}
Note that the above matrix is full-rank and thus admits a unique solution if there is any route $j$ such that $x_j^* > 0$. Then, note from the budget-balance constraint that:
\begin{align*}
    0 &= \sum_{i = 1}^n \tau_i x_i^* = \sum_{i \in [n]: x_i^* > 0} \tau_i x_i^* = \tau_j x_j^* +  \sum_{i \in [n]: x_i^* > 0, i \neq j} \tau_i x_i^* \stackrel{(a)}{=} \tau_j x_j^* +  \sum_{i \in [n]: x_i^* > 0, i \neq j} (\pi_j^* + \tau_j - \pi_i^*) x_i^* \\
    &= \sum_{i \in [n]: x_i^* > 0, i \neq j} (\pi_j^* - \pi_i^*) x_i^* + \tau_j \sum_{i \in [n]: x_i^* > 0} x_i^*,
\end{align*}
where (a) follows from the equation $\pi_i^* + \tau_i = \pi_j^* + \tau_j$ for all pairs of routes $i, j$. Rearranging the above expression, we obtain $\tau_j = \frac{\sum_{i \in [n]: x_i^* > 0, i \neq j} (\pi_i^* - \pi_j^*) x_i^*}{\sum_{i \in [n]: x_i^* > 0} x_i^*}.$ Next, for the remaining $i' \neq j$, we have:
\begin{align*}
    \tau_{i'} = \pi_j^* - \pi_{i'}^* + \tau_j = \pi_j^* - \pi_{i'}^* + \frac{\sum_{i \in [n]: x_i^* > 0, i \neq j} (\pi_i^* - \pi_j^*) x_i^*}{\sum_{i \in [n]: x_i^* > 0} x_i^*}.
\end{align*}
The resulting per-driver profit under these tolls, which is fixed across routes is given by:
\begin{align*}
    \Tilde{\pi}^{Eq} = \Tilde{\pi}_j(x_j^*) = \pi_j^* + \tau_j = \pi_j^* + \frac{\sum_{i \in [n]: x_i^* > 0, i \neq j} (\pi_i^* - \pi_j^*) x_i^*}{\sum_{i \in [n]: x_i^* > 0} x_i^*} = \frac{\sum_{i \in [n]: x_i^* > 0} \pi_i^* x_i^*}{\sum_{i \in [n]: x_i^* > 0} x_i^*} = \frac{\sum_{i \in [n]} \pi_i^* x_i^*}{\sum_{i \in [n]} x_i^*},
\end{align*}
which establishes our claim.

\subsection{Proof of Theorem~\ref{thm:fare-welfare-poa}} \label{apdx:pf-fare-welfare-poa}

To prove this claim, we first show the following lemma, which establishes the monotonicity of the minibus rider demand $\Lambda_i^M(x_i,\Bar{p}_i)$ in the fare for any fixed driver allocation $x_i \geq 0$.

\begin{lemma}[Monotonicity of Rider Demand in Route Fare]
\label{lem:fare-monotonicity}
Fix a route $i$ and a driver allocation $x_i \geq 0$. Then, the rider demand
$\Lambda_i^M(x_i,\Bar{p}_i)$ is weakly decreasing in the fare $\Bar{p}_i$ on
$\left[\frac{c_i}{F},\, c_i^O - \eta_T l_i\right]$.
\end{lemma}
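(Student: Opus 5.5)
Since $S_i(\bar p_i) = c_i^O - \eta_T l_i - \bar p_i$ is strictly decreasing and affine in $\bar p_i$, and $\bar p_i$ enters $\Lambda_i^M(x_i,\bar p_i)$ in Equation~\eqref{eq:rider-demand-fare-dependent} only through $S_i$, it suffices to show that for fixed $x_i \ge 0$ the map $S \mapsto \Lambda_i^M(x_i; S)$ (Equation~\eqref{eq:rider-demand-function} with $S_i = S$) is weakly increasing on $S \ge 0$. The feasible fare interval maps exactly to $S \in [0, c_i^O - \eta_T l_i - c_i/F]$, so no sign issues arise.

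The cleanest route is the factored form from the proof of Proposition~\ref{prop:rider-demand-func-queuing}. In the under-supplied region $x_i < \tilde k_i^*(S)$ we have $\Lambda_i^M = \mu_i(x_i)\, T_i(x_i)$ with
\[
T_i(x_i) = \Delta + S\,\tfrac{\eta_E+\eta_L}{\eta_E\eta_L}\bigl(1 - \tfrac{\Delta x_i F}{2 l_i \Lambda_i}\bigr).
\]
Here $1 - \frac{\Delta x_i F}{2l_i\Lambda_i} = 1 - \mu_i(x_i)/\lambda_i > 0$ whenever $x_i < k_i^* = \frac{2l_i\lambda_i}{F}$, which holds since $\tilde k_i^* \le k_i^*$. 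Hence the quadratic branch is affine in $S$ with nonnegative slope $\mu_i(x_i)\frac{\eta_E+\eta_L}{\eta_E\eta_L}(1-\mu_i(x_i)/\lambda_i)$, so it is weakly increasing in $S$. In the saturated region $x_i \ge \tilde k_i^*(S)$ the demand equals the constant $\Lambda_i$.

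The only subtlety, and the step needing care, is that the threshold $\tilde k_i^*(S)$ itself moves with $S$: it is nonincreasing in $S$, since its second term $\frac{2l_i\Lambda_i\eta_E\eta_L}{F S(\eta_E+\eta_L)}$ decreases in $S$. So as $S$ increases, a fixed $x_i$ can only move from the quadratic branch to the saturated branch, never the reverse. To handle the switch I will write $\Lambda_i^M(x_i;S) = \min\{\Lambda_i,\ \mu_i(x_i) T_i(x_i;S)\}$ for $x_i < k_i^*$, and $\Lambda_i^M = \Lambda_i$ for $x_i \ge k_i^*$. This form is justified by continuity (Corollary~\ref{cor:monotonicity-rider-demands}) together with the fact that the quadratic branch is at most $\Lambda_i$ below $\tilde k_i^*$ (by monotonicity in $x_i$ and $\Lambda_i^M(\tilde k_i^*) = \Lambda_i$). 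For $x_i < k_i^*$, the product $\mu_i T_i$ reaches $\Lambda_i$ exactly when $S \ge \bar S(x_i)$, which matches the threshold. A minimum of a constant and a function that is nondecreasing in $S$ is nondecreasing, so $\Lambda_i^M$ is weakly increasing in $S$ and therefore weakly decreasing in $\bar p_i$. The case $x_i = 0$ is trivial, since the demand is $0$ for every fare.
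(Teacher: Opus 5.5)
Your proposal is correct and follows essentially the same argument as the paper. Both proofs write the quadratic branch as affine in $S_i(\bar p_i)$ with nonnegative slope (using $x_i < \tilde k_i^* \le k_i^*$), note that the saturated branch is constant, and handle the branch switch using the monotone movement of the threshold together with the fact that the quadratic branch never exceeds $\Lambda_i$. Your identity $\Lambda_i^M = \min\{\Lambda_i, \mu_i T_i\}$ for $x_i < k_i^*$, checked via the condition $S \ge \bar S(x_i)$, is a tidier packaging of the paper's case split on which term attains the minimum in $\tilde k_i^*(\bar p_i)$.
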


\proof{Proof.}
Fix route $i$ and allocation $x_i \ge 0$, and let $\Delta := t_2 - t_1$. Moreover, we define $S_i(\Bar{p}_i) := c_i^O - \eta_T l_i - \Bar{p}_i,$ so $S_i(\Bar{p}_i) \ge 0$ on the feasible fare interval and $S_i(\Bar{p}_i)$ is strictly
decreasing in $\Bar{p}_i$. Then, we have the following relation for the rider demand served by the minibus:
\begin{align*}
\Lambda_i^M(x_i,\Bar{p}_i) =
\begin{cases}
\frac{F}{2l_i}\left(\Delta + \frac{\eta_E+\eta_L}{\eta_E\eta_L}S_i(\Bar{p}_i)\right)x_i
-
\left(\frac{F}{2l_i}\right)^2
S_i(\Bar{p}_i)\frac{\eta_E+\eta_L}{\eta_E\eta_L}\frac{\Delta}{\Lambda_i}x_i^2,
& \text{if } x_i \in [0,\Tilde{k}_i^*(\Bar{p}_i)), \\[6pt]
\Lambda_i,
& \text{if } x_i \ge \Tilde{k}_i^*(\Bar{p}_i),
\end{cases}
\end{align*}
where $\Tilde{k}_i^*(\Bar{p}_i) = \min\left\{k_i^*,\, \frac{2l_i\Lambda_i\eta_E\eta_L}{F\,S_i(\Bar{p}_i)(\eta_E+\eta_L)} \right\}$, and $k_i^* := \frac{2l_i\lambda_i}{F} = \frac{2l_i\Lambda_i}{F\Delta}.$

For the quadratic branch of the rider demand function, it is convenient to rewrite it as:
\begin{align}
\Lambda_i^M(x_i,\Bar{p}_i)
&=
A_i(x_i) + B_i(x_i)S_i(\Bar{p}_i),
\qquad \text{for } x_i \in [0,\Tilde{k}_i^*(\Bar{p}_i)),
\label{eq:rider-demand-affine-in-S}
\end{align}
where $A_i(x_i) := \frac{F\Delta}{2l_i}x_i$ and $B_i(x_i) = \frac{\eta_E+\eta_L}{\eta_E\eta_L}\frac{F}{2l_i}x_i
\left(1-\frac{F\Delta}{2l_i\Lambda_i}x_i\right).$

In the quadratic branch of the rider demand function, we have $x_i < \Tilde{k}_i^*(\Bar{p}_i) \le k_i^* = \frac{2l_i\Lambda_i}{F\Delta}$; hence, $1-\frac{F\Delta}{2l_i\Lambda_i}x_i \geq 0,$ which implies that $B_i(x_i)\ge 0$. It then follows from \eqref{eq:rider-demand-affine-in-S} and the fact that
$S_i(\Bar{p}_i)$ is decreasing in $\Bar{p}_i$ that, in the quadratic branch,
$\Lambda_i^M(x_i,\Bar{p}_i)$ is weakly decreasing in $\Bar{p}_i$.

We now establish our monotonicity result for the two regimes corresponding to $\Tilde{k}_i^*(\Bar{p}_i)$.

\noindent\emph{Case 1: $\Tilde{k}_i^*(\Bar{p}_i)=k_i^*$.}
In this case, the threshold $\Tilde{k}_i^*(\Bar{p}_i)$ is independent of the fare $\Bar{p}_i$. In this case, if $x_i<k_i^*$, then the rider demand is in the quadratic branch and, by the argument above, $\Lambda_i^M(x_i,\Bar{p}_i)$ weakly decreases with $\Bar{p}_i$. If $x_i\geq k_i^*$, then the non-quadratic branch applies and $\Lambda_i^M(x_i,\Bar{p}_i)=\Lambda_i$, which is constant in $\Bar{p}_i$.

\noindent\emph{Case 2: $\Tilde{k}_i^*(\Bar{p}_i)=\dfrac{2l_i\Lambda_i\eta_E\eta_L}{F\,S_i(\Bar{p}_i)(\eta_E+\eta_L)}$.}
In this regime, since $S_i(\Bar{p}_i)$ is decreasing in $\Bar{p}_i$, the threshold
$\Tilde{k}_i^*(\Bar{p}_i)$ is weakly increasing in $\Bar{p}_i$ (until it possibly reaches the cap
$k_i^*$). Let $\Bar{p}_i' \ge \Bar{p}_i$.

If $x_i < \Tilde{k}_i^*(\Bar{p}_i)$, then also
$x_i < \Tilde{k}_i^*(\Bar{p}_i')$, so both rider demands lie in the quadratic branch under these prices. Since $B_i(x_i)\geq 0$
and $S_i(\Bar{p}_i') \le S_i(\Bar{p}_i)$, we obtain
\begin{align*}
    \Lambda_i^M(x_i,\Bar{p}_i') = A_i(x_i)+B_i(x_i)S_i(\Bar{p}_i') \leq A_i(x_i)+B_i(x_i)S_i(\Bar{p}_i) = \Lambda_i^M(x_i,\Bar{p}_i).
\end{align*}

If instead $x_i \geq \Tilde{k}_i^*(\Bar{p}_i)$, then at fare $\Bar{p}_i$, the rider demand is not in the quadratic branch; hence, $\Lambda_i^M(x_i,\Bar{p}_i)=\Lambda_i.$ For the higher fare $\Bar{p}_i'$, either $x_i \ge \Tilde{k}_i^*(\Bar{p}_i')$, in which case the rider demand is still in the non-quadratic branch and the rider demand remains $\Lambda_i$, or $x_i < \Tilde{k}_i^*(\Bar{p}_i')$, in which case the rider demand moves to the quadratic branch. In the latter case, by the definition of $\Tilde{k}_i^*(\Bar{p}_i')$ as the
threshold beyond which the full demand $\Lambda_i$ is served, it must be the case that $\Lambda_i^M(x_i,\Bar{p}_i') \leq \Lambda_i = \Lambda_i^M(x_i,\Bar{p}_i).$ Thus, in all cases, $\Lambda_i^M(x_i,\Bar{p}_i') \le \Lambda_i^M(x_i,\Bar{p}_i).$

Combining the two cases analyzed above establishes that $\Lambda_i^M(x_i,\Bar{p}_i)$ is weakly decreasing in
$\Bar{p}_i$ for every fixed $x_i$. 
\endproof

We now leverage this monotonicity property to complete the proof of Theorem~\ref{thm:fare-welfare-poa}.

\proof{Proof of Theorem~\ref{thm:fare-welfare-poa}.}
Fix an instance $I$ and $\varepsilon>0$. For each route $i$, define $\Bar p_i^{\min}:=\frac{c_i}{F}$ and $\Bar \p^{\min}:=(\Bar p_1^{\min},\dots,\Bar p_n^{\min}).$ By Lemma~\ref{lem:fare-monotonicity}, for every feasible allocation $x\in\Omega_I$ the rider welfare $R_I(x,\Bar p):=\sum_{i=1}^n \Lambda_i^M(x_i,\Bar p_i)$ is weakly decreasing in each coordinate $\Bar p_i$. Therefore, the fare vector that maximizes rider welfare is $\Bar \p^{\min}$. Thus, we define $R_I^{F,\mathrm{opt}} = \max_{x\in\Omega_I} R_I(x,\Bar p^{\min})$ and let $x^* \in \arg\max_{x\in\Omega_I} R_I(x,\Bar p^{\min})$ be any rider-welfare maximizing allocation under these prices. Note, however, that at the minimum fare vector, the per-driver profit on all routes is $p_i(\Bar{p}_i) = p_i(\frac{c_i}{F}) = \frac{c_i}{F} - \frac{c_i}{F} = 0$, so drivers are indifferent across routes and every feasible allocation $\x \in \Omega$ is an equilibrium. Thus, although the minimum fares maximize rider welfare, $\inf_{\x\in\Omega(\bar{\p}^{\min})} R_I(\mathbf x,\bar{\p}^{\min})$ can be arbitrarily small.

To eliminate these low welfare equilibria, we introduce route-specific price perturbations of the minimum fare vector that induces $\x^*$ as the unique equilibrium driver allocation. To this end, define the sets $U:=\{i\in[n]:x_i^*>0\}$ and $Z:=[n]\setminus U$, and for each route $i$, define $M_i:=c_i^O-\eta_T l_i-\frac{c_i}{F}>0.$ Since $M_i>0$, each route admits strictly interior fares of the form
$\Bar p_i=\frac{c_i}{F}+q_i$ with $q_i\in(0,M_i)$, and for such fares one has $S_i(\Bar p_i)=M_i-q_i>0.$

We first choose a fare vector in a small neighborhood of $\Bar p^{\min}$ under which the rider welfare at $x^*$ remains
arbitrarily close to optimal. Since, for fixed $x^*$, each map
$\Bar p_i \mapsto \Lambda_i^M(x_i^*,\Bar p_i)$ is continuous on $\mathcal P_I$, the function
$\Bar \p \mapsto R_I(x^*,\Bar p)$ is continuous at $\Bar \p^{\min}$. Hence there exist $\delta_i \in (0,M_i)$ for all $i\in[n]$ such that whenever $\Bar p_i \in \left[\frac{c_i}{F},\, \frac{c_i}{F}+\delta_i\right]$ for all $i$, it follows that: $R_I(x^*,\Bar \p)\ge \frac{1}{1+\varepsilon}R_I(x^*,\Bar \p^{\min}) = \frac{1}{1+\varepsilon}R_I^{F,\mathrm{opt}}.$

We now choose route-specific fare perturbations that make $x^*$ the unique equilibrium.
For each used route $i\in U$, define $f_i(q)
:=
\pi_i\!\left(x_i^*;\frac{c_i}{F}+q\right)
=
\frac{\Lambda_i^M\!\left(x_i^*,\,\frac{c_i}{F}+q\right)}{x_i^*}\, q,$ for $q\in[0,\delta_i].$ Each $f_i$ is continuous, satisfies $f_i(0)=0$, and is strictly positive for every $q>0$.
Therefore, define $\underline{\kappa} := \min_{i\in U} f_i(\delta_i) >0$ and choose any $\kappa \in (0,\underline{\kappa}).$ By the intermediate value theorem, for each $i\in U$ there exists $q_i^\varepsilon \in (0,\delta_i]$ such that $f_i(q_i^\varepsilon)=\kappa$, i.e., $\pi_i\!\left(x_i^*;\frac{c_i}{F}+q_i^\varepsilon\right)=\kappa$ for all $i \in U$.

For each unused route $j\in Z$, define $g_j(q)
:=
\pi_j\!\left(0;\frac{c_j}{F}+q\right)$ for $q\in[0,\delta_j].$ Since $g_j$ is continuous and $g_j(0)=0$, we may choose
$q_j^\varepsilon \in (0,\delta_j]$ sufficiently small so that $g_j(q_j^\varepsilon)<\kappa$ for all $j \in Z$.

Now define the fare vector $\Bar p^\varepsilon$ route-wise by $\Bar p_i^\varepsilon := \frac{c_i}{F}+q_i^\varepsilon$ for all $i \in [n]$. 
By construction, for every used route $i\in U$, $\pi_i(x_i^*;\Bar p_i^\varepsilon)=\kappa$, while for every unused route $j\in Z$, $\pi_j(0;\Bar p_j^\varepsilon)<\kappa.$ Hence, at the allocation $x^*$, every used route yields the same per-driver profit and every unused
route yields strictly lower per-driver profit. Therefore $x^*$ is an equilibrium allocation under
$\Bar \p^\varepsilon$.

To see uniqueness, note that since $\Bar p_i^\varepsilon < c_i^O-\eta_T l_i$ for all $i$, we have $S_i(\Bar p_i^\varepsilon)>0$ for all $i$, which implies the uniqueness of the equilibrium under $\Bar \p^\varepsilon$. Hence the equilibrium induced by $\Bar \p^\varepsilon$ is unique, and in fact $\mathcal E_I(\Bar \p^\varepsilon)=\{x^*\}.$

Finally, since each $q_i^\varepsilon \leq \delta_i$, the fare vector $\Bar \p^\varepsilon$ lies in the
continuity neighborhood chosen above. Therefore, $R_I(x^*,\Bar p^\varepsilon) \geq \frac{1}{1+\varepsilon}R_I^{F,\mathrm{opt}}.$ Since $x^*$ is the unique equilibrium under $\Bar \p^\varepsilon$, we obtain
\[
R_I^{F,\mathrm{eq}}
\ge
\min_{x\in\mathcal E_I(\Bar \p^\varepsilon)} R_I(x,\Bar \p^\varepsilon)
=
R_I(x^*,\Bar \p^\varepsilon)
\ge
\frac{1}{1+\varepsilon}R_I^{F,\mathrm{opt}}.
\]
Equivalently, $\frac{R_I^{F,\mathrm{opt}}}{R_I^{F,\mathrm{eq}}}\le 1+\varepsilon.$ Since the instance $I$ was arbitrary, the same bound holds uniformly over all instances, which proves the claim. 
\endproof

\subsection{Proof of Theorem~\ref{thm:fare-opt-profit-poa}} \label{apdx:eq:profit-poa-fare}

We first prove the upper bound. Fix an instance $I$ and a feasible fare vector $\bar{\p}\in\mathcal P$. Conditional on $\bar{\p}$, the per-rider margins $p_i(\bar p_i)$ and the minibus cost advantages $S_i(\bar p_i)$ are fixed, so the induced rider demand functions remain exactly of the form in Equation~\eqref{eq:rider-demand-function} in the fixed-fare model. Hence, the fixed-fare profit PoA guarantee applies to the induced instance: $\sup_{\x\in\Omega} P_I(\x,\bar{\p}) \leq 2 \inf_{\x\in\Omega^{\mathrm{Eq}}(\bar{\p})} P_I(\x,\bar{\p}).$ Taking the supremum over $\bar{\p}\in\mathcal P$ gives $P_I^{\mathrm{fare,opt}} \leq 2P_I^{\mathrm{fare,eq}}.$

It remains to show tightness. To do so, for any $\varepsilon>0$, we construct a two-route instance parametrized by $\rho\in(0,\varepsilon)$. For this instance, let $D=1,$ $k_1^*=\rho,$ $k_2^*=1-\rho,$ $\beta := \frac{\eta_E+\eta_L}{\eta_E\eta_L (t_2 - t_1)}$, and choose a constant $K>0$ satisfying $\beta K < 1.$ Next, choose $\Lambda_1>0$ arbitrarily, and set $\Lambda_2 := k_2^* \Lambda_1.$ Finally, choose route parameters so that $k_i^*=\frac{2l_i\lambda_i}{F}$ and $c_i^O-\eta_T l_i-\frac{c_i}{F}=K$ for $i \in \{ 1, 2\}$. Thus, on both routes, the feasible per-rider profit margin interval is $p_i(\bar p_i)\in[0,K].$ For this instance, we establish the desired lower bound in two steps. 

\emph{Step 1: Lower bounding the centralized fare-optimized profit.} Set the maximum fare on both routes, so $p_i(\bar p_i)=K$ and $S_i(\bar p_i)=0$. Then, the rider demand function reduces to the capacity-constrained form in Equation~\eqref{eq:rider-demand-function-reduction} and $\x^*=(k_1^*,k_2^*)$ serves the full demand on both routes. Then: $P_I^{\mathrm{fare,opt}} \geq K(\Lambda_1+\Lambda_2) = K\Lambda_1(1+k_2^*) = K\Lambda_1(2-\rho).$

\emph{Step 2: Upper bounding the best equilibrium profit over all fare vectors.} Fix any feasible $\bar \p$, and let $\x \in \Omega^{\mathrm{Eq}}(\bar{\p})$ be any equilibrium under $\bar \p$. We show that $P_I(\x,\bar \p)\le K\Lambda_1$ by considering two cases.

\emph{Case 1: $x_2 = 0$.} Here, $x_1=1$ and at most $\Lambda_1$ riders are served; hence, $P_I(x,\bar p) \leq p_1(\bar p_1)\Lambda_1 \le K\Lambda_1.$ 

\emph{Case 2: $x_2>0$.} In this case, let $\pi^{\mathrm{eq}}$ be the common per-driver profit on the used routes. Since $D = 1$, $P_I(\x,\bar{\p})=\pi^{\mathrm{eq}}.$ Since the per-driver profit on a route is weakly decreasing in its own driver allocation $\pi^{\mathrm{eq}} = \pi_2(x_2,\bar p_2) \leq \pi_2(0,\bar p_2).$ Now, let $p_2:=p_2(\bar p_2)\in[0,K]$. Since $S_2(\bar p_2)=K-p_2,$ the minibus rider demand function in Equation~\eqref{eq:rider-demand-function} implies that, for $x_2<\tilde k_2^*(\bar p_2)$,
{\setlength{\abovedisplayskip}{4pt}
\setlength{\belowdisplayskip}{4pt}
\setlength{\jot}{1pt}
\begin{align*}
    \Lambda_2^M(x_2,\bar p_2) = \frac{\Lambda_2}{k_2^*}\Bigl(1+\beta(K-p_2)\Bigr)x_2 - \frac{\Lambda_2}{(k_2^*)^2}\beta(K-p_2)x_2^2.
\end{align*}}
Therefore, $\pi_2(0,\bar p_2) = \frac{\Lambda_2}{k_2^*}\,p_2\Bigl(1+\beta(K-p_2)\Bigr).$ Then, consider the scalar function $h(p):=p\Bigl(1+\beta(K-p)\Bigr)$ for $p\in[0,K].$ Its derivative is $h'(p)=1+\beta K-2\beta p \ge 1-\beta K > 0,$ where the strict inequality uses $\beta K<1$. Thus $h$ is strictly increasing on $[0,K]$, so $\pi_2(0,\bar p_2) \le \frac{\Lambda_2}{k_2^*}\,K = \frac{k_2^*\Lambda_1}{k_2^*}\,K = K\Lambda_1.$ Consequently, $P_I(\x,\bar \p)=\pi^{\mathrm{eq}}\le K\Lambda_1.$ Since both cases yield the same bound, for every feasible fare vector and every equilibrium under that fare vector, $P_I(\x,\bar \p)\le K\Lambda_1.$ Hence, $P_I^{\mathrm{fare,eq}} \le K\Lambda_1.$

Combining the lower bound on $P_I^{\mathrm{fare,opt}}$ with the upper bound on $P_I^{\mathrm{fare,eq}}$,
we obtain
{\setlength{\abovedisplayskip}{4pt}
\setlength{\belowdisplayskip}{4pt}
\setlength{\jot}{1pt}
\begin{align*}
    \frac{P_I^{\mathrm{fare,opt}}}{P_I^{\mathrm{fare,eq}}} \ge \frac{K\Lambda_1(2-\rho)}{K\Lambda_1} = 2-\rho > 2-\varepsilon.
\end{align*}}
Since $\varepsilon>0$ was arbitrary, the factor two upper bound is tight, establishing that $\mathrm{P\text{-}PoA}^{\mathrm{fare}} = 2.$ 

\subsection{Proof of Theorem~\ref{thm:equilibrium-profit-computation}} \label{apdx:eq-profit-computation}

We begin by introducing some notation. For an instance $I$, let $M_i:=c_i^O-\eta_Tl_i-\frac{c_i}{F},$ $\Delta:=t_2-t_1,$ $a_i:=\frac{F\Delta}{2l_i},$ $b_i:=\frac{F(\eta_E+\eta_L)}{2l_i\eta_E\eta_L},$ $k_i^*:=\frac{\Lambda_i}{a_i}=\frac{2l_i\Lambda_i}{F\Delta},$ and $H_i:=\frac{\Lambda_i}{b_i}
=\frac{2l_i\Lambda_i\eta_E\eta_L}{F(\eta_E+\eta_L)}.$
Furthermore, let the per-rider profit be $p_i:=\bar p_i-\frac{c_i}{F}\in[0,M_i].$ Then, for $x_i<\tilde k_i^*(p_i)$, Equation~\eqref{eq:rider-demand-function} implies $\Lambda_i^M(x_i,p_i) = a_i x_i+b_i(M_i-p_i)x_i\left(1-\frac{x_i}{k_i^*}\right),$ while for $x_i\ge \tilde k_i^*(p_i)$, $\Lambda_i^M(x_i,p_i)=\Lambda_i.$ Accordingly, re-defining the per-driver profit as a function of the per-rider profit (rather than the fare), we have:
\begin{align*}
    \pi_i(x_i,p_i)= 
    \begin{cases}
    p_i\left(a_i+b_i(M_i-p_i)\left(1-\dfrac{x_i}{k_i^*}\right)\right),
    & x_i<\tilde k_i^*(p_i),\\[8pt]
    \dfrac{p_i\Lambda_i}{x_i},
    & x_i\ge \tilde k_i^*(p_i),
    \end{cases}
\end{align*}
and the total profit on the route is $P_i(x_i,p_i):=p_i\Lambda_i^M(x_i,p_i)=x_i\,\pi_i(x_i,p_i).$

Next, to prove our claim, fix a route $i$. For every fixed fare $\Bar{p}_i$ (and, hence, fixed per-rider profit $p_i$), first note that the map $x\mapsto \pi_i(x,p_i)$ is weakly decreasing; hence, $q_i(\cdot)$, which is the point-wise maximum of continuous weakly decreasing functions, is also continuous and weakly decreasing.

We next show that $q_i(x)$ is computable in $O(1)$ time for each $x$. If $x\ge k_i^*$,
the entire demand is served for every $p_i\in[0,M_i]$; hence, $q_i(x)=\max_{p_i\in[0,M_i]}\frac{p_i\Lambda_i}{x}=\frac{M_i\Lambda_i}{x}.$ If, on the other hand, $x<k_i^*$, the entire demand is served when $p_i \leq M_i - \frac{H_i}{x}$. This relation implies that when $p_i\le p_i^{\mathrm{sat}}(x):=\max\left\{0,\,M_i-\frac{H_i}{x}\right\}$, it follows that $\pi_i(x,p_i)=\frac{p_i\Lambda_i}{x},$; hence, the highest profit in this regime is achieved when $p_i=p_i^{\mathrm{sat}}(x)$. When $p_i\ge p_i^{\mathrm{sat}}(x)$, it follows that:
\begin{align*}
    \pi_i(x,p_i) = p_i\left(a_i+b_i(M_i-p_i)\left(1-\frac{x}{k_i^*}\right)\right),
\end{align*}
which is a concave quadratic function of $p_i$. Hence, its maximizer is the projection of the unique stationary point $p_i^{\mathrm{quad}}(x) = \frac12\left(M_i+\frac{H_i}{k_i^*-x}\right)$ onto the interval $[p_i^{\mathrm{sat}}(x),M_i]$. Thus, $q_i(x)$ is obtained by evaluating a constant number (three) of candidate prices.

Now fix $\rho\ge 0$. Since $q_i$ is continuous and weakly decreasing, $u_i(\rho)$ is well-defined,
and
\begin{align*}
    q_i(x)\ge \rho \quad\Longleftrightarrow\quad x\in[0,u_i(\rho)].
\end{align*}
Next, we show that an equilibrium with common profit $\rho$ exists if and only if $\sum_{i=1}^n u_i(\rho)\geq D$. To this end, suppose first that $\sum_{i=1}^n u_i(\rho)\geq D$. Since each feasible interval starts at $0$, we can choose $x_i\in[0,u_i(\rho)]$ for all $i$ such that $\sum_i x_i=D$. For every route with $x_i>0$, we have $q_i(x_i)\ge\rho$, i.e., there exists some price with per-driver profit at least $\rho$ at the allocation $x_i$. Since $p_i\mapsto \pi_i(x_i,p_i)$ is continuous and satisfies $\pi_i(x_i,0)=0$, there exists $p_i\in[0,M_i]$ such that $\pi_i(x_i,p_i)=\rho$. For routes with $x_i=0$, set $p_i=0$. Under these fares, every used route yields a per-driver profit of exactly $\rho$, while every unused route yields profit $0\le \rho$. Since each route profit is weakly decreasing in its own allocation, no driver can profitably deviate. Hence, the resulting allocation is an equilibrium, and its cumulative profit is $\sum_{i=1}^n P_i(x_i,p_i)=\sum_{i=1}^n x_i\rho=D\rho.$

Conversely, suppose $\x$ is an equilibrium driver allocation, $\p$ is the vector of per-rider profits, with common per-driver profit $\rho$ on all used routes. Then, for every used route $i$, $\rho =\pi_i(x_i,p_i)\le q_i(x_i),$ so $x_i\le u_i(\rho)$. Summing over all routes gives $D=\sum_{i=1}^n x_i\le \sum_{i=1}^n u_i(\rho).$ Thus, we have established that an equilibrium allocation with common profit $\rho$ exists if and only if $\sum_{i=1}^n u_i(\rho)\geq D$.

It follows that the highest equilibrium common profit is $\rho_I^{*}=\sup\{\rho\ge 0:\sum_{i=1}^n u_i(\rho)\ge D\},$ and thus the highest equilibrium cumulative profit is $D\rho_I^{*}$. Finally, since every $u_i(\rho)$ is weakly decreasing in $\rho$ for $i \in [n]$, the function $\sum_{i=1}^n u_i(\rho)$ is also weakly decreasing. Hence $\rho_I^{*}$ can be found by binary search on $[0,\overline\pi]$, where, for each candidate $\rho$, $u_i(\rho)$ is computed using binary search on $x\in[0,D]$. Here $\overline\pi = \max_i \max_{p_i \in \mathcal P_i} \pi_i(0, p_i)$. Finally, recalling here that $q_i(x)$ can be evaluated in $O(1)$ time, we obtain the desired complexity.

\subsection{Proof of Theorem \ref{thm:system-opt-profit-approx}}\label{apdx:system-opt-profit-computation}

In the following proof, we express all quantities as a function of the per-rider profits $p_i = \bar p_i - \frac{c_i}{F}$ for each route $i$. Define $G_i(x):=\max_{p_i\in[0,M_i]} P_i(x,p_i)$ for $x\in[0,D]$ and consider the same quantities $a_i, b_i, M_i$ defined in the proof of Theorem~\ref{thm:equilibrium-profit-computation}. Then, for each fixed $x$, the maximization defining $G_i(x)$ is one-dimensional and can be solved in $O(1)$ time akin to the proof of Theorem~\ref{thm:equilibrium-profit-computation}. Next, for every fixed per-rider profit $p_i = \Bar{p}_i - \frac{c_i}{F}$, the route profit $x\mapsto P_i(x,p_i)$ is weakly increasing and on the quadratic branch has derivative
\begin{align*}
    \frac{\partial P_i(x,p_i)}{\partial x} = p_i\left(a_i+b_i(M_i-p_i)\left(1-\frac{2x}{k_i^*}\right)\right) \le p_i\bigl(a_i+b_i(M_i-p_i)\bigr)\le L_i^P,
\end{align*}
where $L_i^P:=\max_{\bar p_i\in \mathcal P_i} p_i(\bar p_i)\bigl(a_i+b_i(M_i-p_i)\bigr).$ On the saturated branch the derivative is $0$. Thus, each map $x\mapsto P_i(x,p_i)$ is $L_i^P$-Lipschitz, and so $G_i$, the pointwise maximum of such maps, is also $L_i^P$-Lipschitz. Moreover $G_i$ is weakly increasing.

Since $G_i$ is weakly increasing for all $i$, the profit maximization problem with constraints $\sum_i x_i=D$ and $\sum_i x_i\le D$ have the same optimum value. Next, define $h:=\frac{\varepsilon}{\sum_{i=1}^n L_i^P}$ and $m:=\left\lfloor\frac{D}{h}\right\rfloor.$ Further, for $i \in [n]$ and $s\in\{0,1,\dots,m\}$, define
\begin{align*}
    V_i(s):= \max\left\{\sum_{j=1}^i G_j(x_j): x_j\in\{0,h,2h,\dots\},\ \sum_{j=1}^i x_j\le sh \right\}.
\end{align*}
Then \(V_0(s)=0\) and $V_i(s)=\max_{0\le t\le s}
\Bigl\{V_{i-1}(s-t)+G_i(th)\Bigr\}.$
This dynamic program has $n(m+1)$ states, and each state scans at most $m+1$ transitions, so the
running time is $O(nm^2)$.

Next, let $\x^*$ be an optimal solution of the (continuous) profit maximization problem, and define the rounded allocation $\tilde x_i:=h\Bigl\lfloor \frac{x_i^*}{h}\Bigr\rfloor$ for all $i$. Then, $\sum_i \tilde x_i\le D$, so $\tilde \x$ is feasible for the discretized problem with the inequality constraint $\sum_i \tilde x_i\le D$. Since each $G_i$ is $L_i^P$-Lipschitz, it follows that:
\begin{align*}
    G_i(x_i^*)-G_i(\tilde x_i)\le L_i^P(x_i^*-\tilde x_i)\le L_i^P h.
\end{align*}
Summing the above inequality over all routes gives $P_I^{\mathrm{fare,opt}}-\sum_{i=1}^n G_i(\tilde x_i) \le h\sum_{i=1}^n L_i^P = \varepsilon.$ Since the dynamic program computes the best grid allocation:
\begin{align*}
    V_n(m)\ge \sum_{i=1}^n G_i(\tilde x_i)\ge P_I^{\mathrm{fare,opt}}-\varepsilon.
\end{align*}
Finally, for the dynamic program solution \(\hat \x\), choose $\hat p_i\in\arg\max_{p_i\in[0,M_i]} P_i(\hat x_i,p_i)$ for each route $i$, which can be performed in $O(1)$ time. Then, $\sum_{i=1}^n P_i(\hat x_i,\hat p_i)=\sum_{i=1}^n G_i(\hat x_i)\ge P_I^{\mathrm{fare,opt}}-\varepsilon,$ which proves our claim.

\section{Fare Optimization for Rider Welfare}\label{apdx:price-opt-welfare}

\subsection{Computational Tractability of Fare-optimized Rider Welfare Maximization with no Reservation Wage}

The implementing perturbation in Theorem~\ref{thm:fare-welfare-poa} is constructive and can be computed efficiently. First, the rider-welfare maximizing allocation $\x^*$ at the minimum fare vector $\bar{\p}^{\min}$ can be computed efficiently, since the resulting problem is a separable concave resource-allocation problem. Given $\x^*$, for any candidate common profit level $\kappa$, we compute the perturbation $q_i^{\varepsilon}$ on each used route $i$ by solving $f_i(q_i^{\varepsilon})=\kappa$, where $f_i(q_i^{\varepsilon}) = \pi_i(x_i^*, \frac{c_i}{F} + q_i^{\varepsilon})$ is either linear or quadratic in $q_i^{\varepsilon}$, and hence $q_i^{\varepsilon}$ can be computed in closed form; for unused routes, we choose any sufficiently small $q_i^{\varepsilon}$ so that $f_j(q_j^{\varepsilon})<\kappa$. Moreover, for fixed $\x^*$, the following Lipshitzness relation can be established: the rider welfare loss induced by these perturbations is at most linear in $\kappa$. In other words, there exists a constant \(C>0\) such that $R_I(\x^*,\bar{\p}^{\min})-R_I(\x^*,\bar{\p}^{\,\kappa})\le C\kappa$. Therefore, choosing $\kappa=O(\varepsilon)$ suffices to obtain a $(1+\varepsilon)$-approximate implementation, and once $\x^*$ is known, the corresponding implementing fare vector can be computed in $O(n)$ time. Note that if one prefers to determine such a $\kappa$ by search rather than by an explicit bound, this adds only $O(\log(1/\varepsilon))$ iterations, for an overall complexity of $O(n\log(1/\varepsilon))$ after computing $\x^*$.

\proof{Proof of Lipshitzness Relation.}
Fix the numbers $\delta_i>0$ in the proof of
Theorem~\ref{thm:fare-welfare-poa}, and define $\underline{\kappa}:=\min_{i\in U} f_i(\delta_i)>0$ and $m_i:=\frac{\Lambda_i^M\!\left(x_i^*,\,\frac{c_i}{F}+\delta_i\right)}{x_i^*}>0$, where $i \in U$. For each $i \in U$ and $\kappa\in(0,\underline{\kappa}]$, let $q_i(\kappa)\in(0,\delta_i]$ be any solution of $f_i(q)=\kappa$. 

Next, define $r_i(q):=\Lambda_i^M\!\left(x_i^*,\,\frac{c_i}{F}+q\right).$ By Lemma~\ref{lem:fare-monotonicity}, $r_i$ is decreasing, so for every $i \in U$, $\kappa=f_i(q_i(\kappa)) =\frac{r_i(q_i(\kappa))}{x_i^*}\,q_i(\kappa) \ge m_i q_i(\kappa),$ and therefore $q_i(\kappa)\le \frac{\kappa}{m_i}.$ For fixed $x_i^*$, the branch condition depends on $q$ through
\begin{align*}
    \tilde{k}_i^*\!\left(\frac{c_i}{F}+q\right)
=
\min\left\{
k_i^*,\,
\frac{2l_i\Lambda_i\eta_E\eta_L}{F(M_i-q)(\eta_E+\eta_L)}
\right\},
\qquad M_i:=c_i^O-\eta_Tl_i-\frac{c_i}{F}.
\end{align*}
Hence \(r_i(q):=\Lambda_i^M\!\left(x_i^*,\frac{c_i}{F}+q\right)\) is either
(i) always on the quadratic branch, (ii) always equal to \(\Lambda_i\), or (iii) equal to
\(\Lambda_i\) up to the unique switching point $\hat q_i = M_i-\frac{2l_i\Lambda_i\eta_E\eta_L}{F(\eta_E+\eta_L)x_i^*},$ and affine thereafter. On the quadratic branch, we have $r_i(q)=A_i(x_i^*)+B_i(x_i^*)(M_i-q),$ so its slope is $-B_i(x_i^*)$, while on the saturated branch its slope is $0$. Therefore $r_i$ is globally $L_i$-Lipschitz on $[0,\delta_i]$, with
\begin{align*}
    L_i=
\begin{cases}
B_i(x_i^*), & x_i^*<k_i^*,\\[4pt]
0, & x_i^*\ge k_i^*.
\end{cases}
\end{align*}
Thus, for every $i \in U$, $0\le r_i(0)-r_i(q_i(\kappa)) \le L_i q_i(\kappa) \le \frac{L_i}{m_i}\kappa.$ Summing over $i \in U$ yields $R(\x^*, \Bar{\p}^{\min})-R_I(\x^*,\bar{\p}^{\,\kappa}) \le C\kappa,$ where $C:=\sum_{i\in U}\frac{L_i}{m_i}.$ Therefore, any $\kappa \le \min\left\{ \underline{\kappa}, \frac{\varepsilon}{1+\varepsilon}\frac{R_I^{\mathrm{fare, opt}}}{C} \right\}$ guarantees $R_I(\x^*,\bar{\p}^{\,\kappa})\ge \frac{1}{1+\varepsilon}R_I^{\mathrm{fare, opt}}.$ 
\endproof

\subsection{Fare-Optimized Rider Welfare PoA under Reservation Wages} \label{apdx:reservation-wages-extension-poa}

Suppose drivers have a reservation wage $W \geq 0$, interpreted as an individual rationality constraint on per-driver profit. When $W = 0$, this reduces to the setting studied in Theorem~\ref{thm:fare-welfare-poa}; below, we focus on the case $W>0$. Under a reservation wage, the centralized fare-optimized rider-welfare benchmark is
\begin{align*}
    R_I^{\mathrm{fare,opt}}(W) :=
    \max_{\substack{\x\in\Omega,\ \bar \p\in\mathcal P \\
    \pi_i(x_i,\bar p_i)\ge W\ \forall i:\,x_i>0}}
    R_I(\x,\bar \p).
\end{align*}
For a fare vector \(\bar \p\in\mathcal P\), define the set of reservation-wage-feasible equilibria as $\Omega_W^{\mathrm{Eq}}(\bar \p) := \left\{\x\in\Omega^{\mathrm{Eq}}(\bar \p): \pi_i(x_i,\bar p_i)\ge W, \ \forall i \text{ with } x_i>0 \right\}.$ The corresponding fare-optimized equilibrium benchmark is
\begin{align*}
    R_I^{\mathrm{fare,eq}}(W)
    :=
    \sup_{\bar \p\in\mathcal P}
    \inf_{\x\in\Omega_W^{\mathrm{Eq}}(\bar \p)}
    R_I(\x,\bar \p).
\end{align*}

For each route $i$ and allocation $x_i>0$, define the smallest fare that meets the reservation wage as $\bar p_i^{W}(x_i) := \inf\left\{ \bar p_i\in\left[\frac{c_i}{F},\,c_i^O-\eta_T l_i\right]: \pi_i(x_i,\bar p_i)\geq W \right\},$ whenever this set is nonempty. By the monotonicity of rider demand in fares, rider welfare is weakly decreasing in $\bar p_i$ for fixed $x_i$. Hence, for any fixed feasible driver allocation $\x$, the rider welfare maximizing fare on each used route is precisely $\bar p_i^W(x_i)$. Therefore, the reservation wage constrained rider welfare optimum can be written as
\begin{align*}
    R_I^{\mathrm{fare,opt}}(W) = \max_{\x\in\Omega_W}
    \sum_{i:\,x_i>0} \Lambda_i^M\!\bigl(x_i,\bar p_i^W(x_i)\bigr),
\end{align*}
where $\Omega_W$ denotes the set of allocations for which $\bar p_i^W(x_i)$ exists on every used route.

Let \(\x^{W,*}\) be an optimizer of the above problem, and set
\begin{align*}
    \bar p_i^{W,*}
    :=
    \begin{cases}
    \bar p_i^W(x_i^{W,*}), & \text{if } x_i^{W,*}>0,\\[4pt]
    \frac{c_i}{F}+\delta_i, & \text{if } x_i^{W,*}=0,
    \end{cases}
\end{align*}
where $\delta_i>0$ is chosen sufficiently small so that $\pi_i\!\left(0,\frac{c_i}{F}+\delta_i\right)<W$ for all $i$ with $x_i^{W,*}=0$. Such a choice is possible whenever $\frac{c_i}{F}<c_i^O-\eta_T l_i$ by continuity of $\pi_i(0,\bar p_i)$ and the fact that $\pi_i\!\left(0,\frac{c_i}{F}\right)=0<W.$

By construction, every used route $i$ satisfies $\pi_i(x_i^{W,*},\bar p_i^{W,*})=W,$ while every unused route $j$ satisfies $\pi_j(0,\bar p_j^{W,*})<W.$ Hence, all used routes yield the same per-driver profit and every unused route yields strictly less. Therefore, $\x^{W,*}\in \Omega_W^{\mathrm{Eq}}(\bar \p^{W,*}).$

Consequently, if in addition $\bar p_i^{W,*}<c_i^O-\eta_T l_i$ for all routes $i$ with $x_i^{W,*}>0$, then $S_i(\bar p_i^{W,*})>0$ on every used route. In this regime, the equilibrium under the fare vector $\bar \p^{W,*}$ is unique and thus the fare-optimized rider welfare PoA remains equal to one in the reservation wage constrained setting.

If, however, some used route satisfies $\bar p_i^{W,*}=c_i^O-\eta_T l_i$, then the strict uniqueness argument may fail because $S_i(\bar p_i^{W,*})=0$ on that route. In this case, there still exists an equilibrium attaining the reservation wage constrained rider welfare optimum, but the equilibrium need not be unique; consequently, other equilibria under the same fare vector may achieve lower rider welfare.

\subsection{Computational Tractability of Fare-optimized Rider Welfare Maximization with Reservation Wage}

\begin{proposition}[Additive approximation for the reservation wage-constrained system optimum]
\label{prop:reservation-wage-approx}
Assume the reservation $W>0$, which can be interpreted as the outside option for drivers, so the planner may leave some drivers unmatched. Then, the planner solves
{\setlength{\abovedisplayskip}{4pt}
\setlength{\belowdisplayskip}{4pt}
\setlength{\jot}{1pt}
\begin{align*}
    \max_{\substack{x_i\ge 0,\ \bar p_i\in[\frac{c_i}{F},\,c_i^O-\eta_Tl_i]\\
\sum_{i=1}^n x_i\le D,\ \pi_i(x_i,\bar p_i)\ge W\ \forall i:\,x_i>0}}
\sum_{i=1}^n \Lambda_i^M(x_i,\bar p_i).
\end{align*}}
For each route $i$, define $M_i:=c_i^O-\eta_Tl_i-\frac{c_i}{F}$, $a_i:=\frac{F(t_2-t_1)}{2l_i}$, $b_i:=\frac{F(\eta_E+\eta_L)}{2l_i\eta_E\eta_L}$, $L_i:=a_i+b_iM_i$, and $U_i:=\min\!\left\{D,\frac{M_i\Lambda_i}{W}\right\}.$ Also, define the route-wise reservation wage welfare function 
\[
R_i^W(x):= \max\Bigl\{\Lambda_i^M(x,\bar p_i): \bar p_i\in \Bigl[\frac{c_i}{F},\,c_i^O-\eta_Tl_i\Bigr],\ \pi_i(x,\bar p_i)\ge W \Bigr\},
\qquad R_i^W(0):=0.
\]
Then, for every $\varepsilon>0$, there is an algorithm that returns a feasible allocation $\hat \x$ with $\sum_{i=1}^n R_i^W(\hat x_i)\ge R_I^{\mathrm{fare,opt}}-\varepsilon$. The algorithm runs in $O(nm^2)$, where $m = :=\left\lceil \frac{D\sum_{i=1}^n L_i}{\varepsilon}\right\rceil,$ which is polynomial in $n$ and $1/\varepsilon$.

\end{proposition}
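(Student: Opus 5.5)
The plan follows the proof of Theorem~\ref{thm:system-opt-profit-approx}: reduce the joint fare--allocation problem to a separable allocation problem over route-wise value functions $R_i^W$, show each $R_i^W$ is evaluable in $O(1)$ time and Lipschitz with constant $L_i$, and then run the same grid dynamic program. First, since the constraint is $\sum_i x_i \le D$ with the reservation-wage constraint imposed only on used routes, the planner's problem equals $\max\{\sum_i R_i^W(x_i): \sum_i x_i\le D,\ x_i\in\{0\}\cup \mathrm{dom}(R_i^W)\}$. Here $R_i^W(x)=-\infty$ (infeasible) when no fare attains $\pi_i(x,\bar p_i)\ge W$.

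Second, I will characterize the feasible domain and evaluate $R_i^W$. By Lemma~\ref{lem:fare-monotonicity}, demand is weakly decreasing in the fare, so $R_i^W(x)=\Lambda_i^M(x,\bar p_i^W(x))$ with $\bar p_i^W(x)$ the smallest fare meeting the wage. For fixed $x$, $p\mapsto \pi_i(x,p)$ is, on each branch, linear ($p\Lambda_i/x$) or concave quadratic, so $\bar p_i^W(x)$ is the smallest root of at most two explicit equations. Feasibility requires $q_i(x)\ge W$, with $q_i$ as in the proof of Theorem~\ref{thm:equilibrium-profit-computation}. Since $\pi_i(x,p)\le p\Lambda_i/x\le M_i\Lambda_i/x$, the domain lies in $(0,U_i]$. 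Because $q_i$ is weakly decreasing, the domain is an interval $(0,u_i(W)]\subseteq(0,U_i]$, and $u_i(W)$ is computable by binary search or in closed form.

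Third, I will establish Lipschitzness, which I expect to be the main obstacle. Unlike $G_i$ in Theorem~\ref{thm:system-opt-profit-approx}, $R_i^W$ is not a pointwise maximum over a fixed family: the feasible fare set shrinks or grows with $x$. The function may also fail to be monotone, and it jumps from $0$ at $x=0$ to a positive value. I would avoid a global Lipschitz claim. Instead I would use a one-sided rounding argument. For an optimal $(\x^*,\bar\p^*)$, round each used $x_i^*$ \emph{down} to the grid point $\tilde x_i$. Because $\pi_i(\cdot,\bar p_i^*)$ is non-increasing (Corollary~\ref{cor:monotonicity-per-driver-profits}), the same fare $\bar p_i^*$ stays wage-feasible at $\tilde x_i$, provided $\tilde x_i>0$. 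Hence $R_i^W(\tilde x_i)\ge\Lambda_i^M(\tilde x_i,\bar p_i^*)\ge \Lambda_i^M(x_i^*,\bar p_i^*)-L_i h$. The last inequality uses the slope bound $\partial_x\Lambda_i^M\le a_i+b_iS_i\le a_i+b_iM_i=L_i$ from the derivative computed in Corollary~\ref{cor:monotonicity-rider-demands}.

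The remaining worry is routes with $x_i^*<h$, which round to zero. For these, the loss is $\Lambda_i^M(x_i^*,\bar p_i^*)\le L_i x_i^* < L_i h$, since $\Lambda_i^M(0)=0$, so the same bound holds. With $h=\varepsilon/\sum_i L_i$, I expect the total loss to be at most $\varepsilon$. This gives $m=\lceil D\sum_i L_i/\varepsilon\rceil$ grid points, consistent with the statement.

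Finally, the dynamic program $V_i(s)=\max_{0\le t\le s}\{V_{i-1}(s-t)+R_i^W(th)\}$ will treat infeasible grid points as $-\infty$. It has $n(m+1)$ states with $O(m)$ transitions each, and each $R_i^W(th)$ takes $O(1)$ time by the second step, giving $O(nm^2)$. Its output $\hat\x$ satisfies $\sum_i\hat x_i\le D$ and is feasible together with the fares $\bar p_i^W(\hat x_i)$. Its value $V_n(m)\ge\sum_i R_i^W(\tilde x_i)\ge R_I^{\mathrm{fare,opt}}-\varepsilon$ completes the claim.
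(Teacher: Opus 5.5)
Your proposal is correct and follows essentially the paper's argument. The shared steps are: evaluate $R_i^W$ at the smallest wage-feasible fare via Lemma~\ref{lem:fare-monotonicity}; reuse the fare from the larger allocation at the rounded-down one, which stays feasible by Corollary~\ref{cor:monotonicity-per-driver-profits}, to get the one-sided bound with constant $L_i$; then round the optimum down and run the same grid dynamic program. Your $-\infty$ convention for grid points outside the true feasible interval $(0,u_i(W)]$ is slightly more careful than the paper, which caps grid points only at $U_i$ (necessary but not sufficient for feasibility); one fix is needed, though: run the dynamic program with budget $\lfloor D/h\rfloor$ rather than $\lceil D/h\rceil$, since otherwise $\sum_i\hat x_i$ can exceed $D$ by up to $h$.
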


\proof{Proof.}
By Lemma~2, for each fixed \(x_i\), rider welfare is weakly decreasing in \(\bar p_i\). Hence
\(R_i^W(x)\) is attained at the smallest fare satisfying the reservation wage constraint. Therefore,
for any fixed \(x\), each value \(R_i^W(x)\) can be computed in \(O(1)\) time by solving a single
linear or quadratic equation in the profit per-rider $q_i:=\bar p_i-\frac{c_i}{F}$: on the full-demand
branch one has \(q_i=\frac{Wx}{\Lambda_i}\), while on the quadratic branch \(q_i\) is the smaller root
of the quadratic equation $q_i\frac{\Lambda_i^M(x,\frac{c_i}{F}+q_i)}{x}=W.$

Moreover, any feasible route allocation must satisfy $Wx_i \le \Bigl(\bar p_i-\frac{c_i}{F}\Bigr)\Lambda_i^M(x_i,\bar p_i)\le M_i\Lambda_i$; hence, $x_i\le U_i$.

\textbf{Lipschitz Bound:} We next prove a one-sided Lipschitz bound for $R_i^W$. Fix $0\le y\le x\le U_i$, and let $\bar p_i^W(x)$ be a welfare-maximizing fare for route $i$ at allocation $x$. Since the
per-driver profit on a route is non-increasing in the route allocation, the same fare $\bar p_i^W(x)$ is also feasible at the smaller allocation $y$. Therefore, $R_i^W(y)\ge \Lambda_i^M\!\bigl(y,\bar p_i^W(x)\bigr)$ and $R_i^W(x)=\Lambda_i^M\!\bigl(x,\bar p_i^W(x)\bigr).$ Hence,
{\setlength{\abovedisplayskip}{1pt}
\setlength{\belowdisplayskip}{1pt}
\setlength{\jot}{1pt}
\begin{align*}
    R_i^W(x)-R_i^W(y) \le \Lambda_i^M\!\bigl(x,\bar p_i^W(x)\bigr) - \Lambda_i^M\!\bigl(y,\bar p_i^W(x)\bigr).
\end{align*}}
For any fixed fare, the function \(z\mapsto \Lambda_i^M(z,\bar p_i)\) is continuous, has derivative
\(0\) on the saturated branch, and on the quadratic branch has derivative at most $a_i+b_iM_i=L_i.$ Thus, for every fixed fare, $\Lambda_i^M(x,\bar p_i)-\Lambda_i^M(y,\bar p_i)\le L_i(x-y),$ which implies that: $R_i^W(x)-R_i^W(y)\le L_i(x-y).$

\textbf{Dynamic Program:}
We now use a dynamic programming approach to establish our guarantee. To this end, choose the grid size $h:=\frac{\varepsilon}{\sum_{i=1}^n L_i}$ and $m:=\left\lfloor \frac{D}{h}\right\rfloor.$ For each $i \in [n]$ and $s\in\{0,1,\dots,m\}$, define the set
{\setlength{\abovedisplayskip}{1pt}
\setlength{\belowdisplayskip}{1pt}
\setlength{\jot}{1pt}
\begin{align*}
    \mathcal F_i(s):= \left\{ (x_1,\dots,x_i): x_j\in\{0,h,2h,\dots\},\ 0\le x_j\le U_j,\ \sum_{j=1}^i x_j\le sh \right\},
\end{align*}}
and define $V_i(s):=\max_{(x_1,\dots,x_i)\in\mathcal F_i(s)} \sum_{j=1}^i R_j^W(x_j).$

Thus, $V_i(s)$ is the maximum rider welfare attainable using only the first $i$ routes and at
most $sh$ units of driver mass on the grid. Note that the boundary condition is $V_0(s)=0$ for all $s$. We claim that $V_i(s)$ satisfies the recursion
{\setlength{\abovedisplayskip}{1pt}
\setlength{\belowdisplayskip}{1pt}
\setlength{\jot}{1pt}
\begin{align*}
    V_i(s)=\max_{0\le t\le \min\{s,\lfloor U_i/h\rfloor\}} \Bigl\{V_{i-1}(s-t)+R_i^W(th) \Bigr\}.
\end{align*}}
To prove this, fix $i, s$. First, let $(x_1,\dots,x_i)\in\mathcal F_i(s)$ be any feasible grid
allocation with $x_i=th$. Then, $0\le t\le \min\{s,\lfloor U_i/h\rfloor\},$ and the vector $(x_1,\dots,x_{i-1})$ belongs to $\mathcal F_{i-1}(s-t)$, since $\sum_{j=1}^{i-1}x_j \le sh-th=(s-t)h.$ Thus,
{\setlength{\abovedisplayskip}{1pt}
\setlength{\belowdisplayskip}{1pt}
\setlength{\jot}{1pt}
\begin{align*}
    \sum_{j=1}^i R_j^W(x_j) = \sum_{j=1}^{i-1}R_j^W(x_j)+R_i^W(th) \le V_{i-1}(s-t)+R_i^W(th).
\end{align*}}
Since this holds for every feasible allocation in $\mathcal F_i(s)$, we obtain
{\setlength{\abovedisplayskip}{1pt}
\setlength{\belowdisplayskip}{1pt}
\setlength{\jot}{1pt}
\begin{align*}
    V_i(s)\le \max_{0\le t\le \min\{s,\lfloor U_i/h\rfloor\}} \Bigl\{ V_{i-1}(s-t)+R_i^W(th) \Bigr\}.
\end{align*}}
For the reverse inequality, fix any integer $0\le t\le \min\{s,\lfloor U_i/h\rfloor\},$ and let $(x_1,\dots,x_{i-1})\in\mathcal F_{i-1}(s-t)$ attain the value $V_{i-1}(s-t)$. Then
setting $x_i:=th$ yields a vector in $\mathcal F_i(s)$, because $th\le U_i$ and
{\setlength{\abovedisplayskip}{1pt}
\setlength{\belowdisplayskip}{1pt}
\setlength{\jot}{1pt}
\begin{align*}
    \sum_{j=1}^{i}x_j = \sum_{j=1}^{i-1}x_j+th \le (s-t)h+th = sh.
\end{align*}}
Hence, $V_i(s)\ge V_{i-1}(s-t)+R_i^W(th),$ and taking the maximum over all feasible $t$ yields the opposite inequality, and so the recursion holds.

By induction on $i$, the dynamic program computes the exact optimum over all grid allocations. The dynamic program has $n(m+1)$ states, and each state $V_i(s)$ is computed by scanning over at most $m+1$ possible values of $t$; resulting in a total running time of $O(nm^2)$ and a space complexity of $O(nm)$.

\textbf{Concluding the Proof:} Finally, consider a rounded vector $\Tilde{\x}$ with $\tilde x_i:=h\Bigl\lfloor \frac{x_i^*}{h}\Bigr\rfloor,$ then $\sum_{i=1}^n \frac{\tilde x_i}{h} = \sum_{i=1}^n \Bigl\lfloor \frac{x_i^*}{h}\Bigr\rfloor \le \Bigl\lfloor \sum_{i=1}^n \frac{x_i^*}{h}\Bigr\rfloor \le \Bigl\lfloor \frac{D}{h}\Bigr\rfloor = m,$ so the rounded solution $\tilde \x$ is feasible for the grid problem and is therefore dominated by the dynamic-programming optimum $V_n(m)$. Next, by the earlier obtained Lipschitz bound, it follows that:
{\setlength{\abovedisplayskip}{1pt}
\setlength{\belowdisplayskip}{1pt}
\setlength{\jot}{1pt}
\begin{align*}
    R_i^W(x_i^*)-R_i^W(\tilde x_i)\le L_i(x_i^*-\tilde x_i)\le L_i h.
\end{align*}}
Summing over $i$ gives $R_I^{\mathrm{fare,opt}}-\sum_{i=1}^n R_i^W(\tilde x_i) \le h\sum_{i=1}^n L_i = \varepsilon.$ Since the dynamic program computes the best grid allocation, it follows that:
{\setlength{\abovedisplayskip}{1pt}
\setlength{\belowdisplayskip}{1pt}
\setlength{\jot}{1pt}
\begin{align*}
    V_n(m) = \max_{0\le s\le m}V_n(s)\ge \sum_{i=1}^n R_i^W(\tilde x_i)\ge R_I^{\mathrm{fare,opt}}-\varepsilon.
\end{align*}}
This proves the additive-\(\varepsilon\) guarantee. Note here that since $R_i^W(x)\ge 0$ for all $i, x$, the values $V_n(s)$ are non-decreasing in $s$, so the best grid solution is $V_n(m)$. 
\endproof

\begin{remark}
In the above proof, we can retain the original equality constraint $\sum_i x_i=D$, with the same proof going through after adding a dummy outside option route that yields payoff $W$ and rider welfare $0$. Then, the planner may send any unused drivers to the dummy route, and the approximation algorithm above applies without change.
\end{remark}

\section{Additional Discussions and Theoretical Results}

\subsection{Limited Gains from Route Switching} \label{apdx:route-switching}

We show that, once all riders on a route have been served, the potential gains to a driver from switching routes are bounded and unlikely to outweigh the associated switching costs, such as the travel time required to move between geographically separated routes.

To this end, we first note from Corollary~\ref{cor:service-time} that the total service time $T_i$ on any route under our studied framework are bounded above by $\Delta + S_i \left( \frac{\eta_E + \eta_L}{\eta_E \eta_L} \right)$. Letting $\Bar{S} = \max_i S_i$, it follows for any two routes $i, i'$ that the difference in the service times satisfies $|T_i - T_{i'}| \leq \Bar{S} \left( \frac{\eta_E + \eta_L}{\eta_E \eta_L} \right)$.

We now show that there exists a bounded switching cost under which drivers would not change routes if $T_i < T_{i'}$. In this case, we can define the driver profit as the sum of three terms: (i) Driver profit on route $i$, (ii) Driver profit on route $i'$ during the period $T_{i'} - T_{i}$, and (iii) the negative of the cost to switch routes. Thus, for the drivers to not switch routes, we just require the switching cost to be high enough to cancel out the Driver profit on route $i'$ during the period $T_{i'} - T_{i}$.

Now, the maximum driver profit on any route $i'$ is bounded above by
{\setlength{\abovedisplayskip}{4pt}
\setlength{\belowdisplayskip}{4pt}
\setlength{\jot}{1pt}
\begin{align*}
    p_{i'} (\Lambda_{i'}^M)'(0) = \frac{p_{i'} F}{2 l_{i'}} \left( \Delta + S_{i'} \left( \frac{\eta_E + \eta_L}{\eta_E \eta_L} \right) \right) \leq \frac{p_{\max} F}{2 \underline{l}} \left( \Delta + \Bar{S} \left( \frac{\eta_E + \eta_L}{\eta_E \eta_L} \right) \right),
\end{align*}}
as the rider demand function $\Lambda_i^M$ has a decreasing slope. Here, we take $p_{\max} = \max_i p_i$ and let $\underline{l} = \min_i l_i > 0$. Thus, for a bounded switching cost that is at least $\left( \Delta + \Bar{S} \left( \frac{\eta_E + \eta_L}{\eta_E \eta_L} \right) \right)$, drivers will not seek to change routes even if $T_i < T_{i'}$.

\subsection{Additional Properties of Budget-Balanced Cross-Subsidization Scheme} \label{apdx:generality-thm-cross-subsidies}

Beyond the algorithmic and computational properties highlighted in Section~\ref{sec:cross-subsidies}, we emphasize the generality of Theorem~\ref{thm:cross-subsidies-optimal} by highlighting several other properties of the optimal budget-balanced cross-subsidization scheme. 

First, if $\x^*$ is a solution to a system optimization problem beyond cumulative driver profit or rider welfare maximization that can only be solved approximately, up to an approximation factor $\beta$, then implementing this allocation via cross-subsidization results in an equilibrium that is likewise $\beta$-optimal. Thus, cross-subsidization preserves approximation guarantees when translating system-optimal driver allocations into equilibrium outcomes.

Next, since Theorem~\ref{thm:cross-subsidies-optimal} provides a method to implement any feasible target driver allocation $\x^*$ with at least one strictly positive entry as an equilibrium, it applies not only to allocations optimizing different system objectives but also to allocations satisfying additional constraints, such as upper or lower bounds on driver supply across routes. 

Furthermore, while the proof of Theorem~\ref{thm:cross-subsidies-optimal} analyzed the setting when the equilibrium condition in Equation~\eqref{eq:eq-condition-pf-cross-subsidy} holds with equality for all routes, it can be directly extended to settings when this equilibrium condition holds with a strict inequality for routes where the target allocation satisfies $x_j^* = 0$. 

Finally, while our equilibrium condition assumes that drivers choose routes based solely on relative per-driver profits, selecting the route that maximizes earnings without imposing requirements on the absolute level of profits, our framework can be readily generalized to incorporate individual rationality constraints when drivers have an outside option with reservation wage $W$. In particular, if a target allocation $\x^*$ satisfies individual rationality for drivers, so that $\pi_j^* \geq W$ for all routes $j$ with $x_j^*>0$, then this condition is preserved under cross-subsidization: 
{\setlength{\abovedisplayskip}{4pt}
\setlength{\belowdisplayskip}{4pt}
\setlength{\jot}{1pt}
\begin{align*}
    \Tilde{\pi}^{Eq} = \frac{\sum_{i \in [n]} \pi_i^* x_i^*}{\sum_{i \in [n]} x_i^*} \geq \frac{W \sum_{i \in [n]} x_i^*}{\sum_{i \in [n]} x_i^*} = W.
\end{align*}}
Thus, drivers are never made worse off under cross-subsidization relative to their reservation wage.

\subsection{Uniqueness and Multiplicity of Equilibria under Cross-Subsidization} \label{apdx:discussion-uniqueness}

An equilibrium driver allocation induced by a cross-subsidy scheme may, in general, be non-unique. This not influence the validity of Theorem~\ref{thm:cross-subsidies-optimal} and Corollary~\ref{cor:poly-time-cross-subsidy}, as our cross-subsidization scheme guarantees that the target allocation is an equilibrium, irrespective of the existence of other equilibria. That said, in the regime when the cost difference $S_i > 0$ for all routes $i$, a condition that commonly holds in informal transit systems, where the cost of using the minibus without rider queuing and schedule delays is substantially lower than that of the outside option, the equilibrium driver allocation is guaranteed to be unique (see Appendix~\ref{apdx:pf-uniqueness}). More generally, even when multiple equilibria arise, cumulative driver profits are identical across all equilibrium allocations, although rider welfare may vary across equilibrium allocations (see Appendix~\ref{apdx:pf-uniqueness}). Thus, potential equilibrium multiplicity does not undermine the effectiveness of cross-subsidization in aligning driver incentives with the cumulative driver profit metric. While equilibrium multiplicity under rider welfare may lead to the realization of an equilibrium allocation with a lower rider welfare than the optimal, this issue is unlikely to arise in the empirically relevant regimes described above.

\subsection{Uniqueness of Cumulative Driver Profits and Equilibrium Driver Allocation} \label{apdx:pf-uniqueness}

In this section, we show that the cumulative driver profits are the same at any equilibrium allocation for the rider demand functions specified in Equation~\eqref{eq:rider-demand-function}. Moreover, if the  quadratic coefficient in Equation~\eqref{eq:rider-demand-function} is strictly positive (i.e., if the free-flow cost difference $S_i > 0$), we show that the resulting equilibrium driver allocation is unique.

To see this, consider two equilibrium driver allocations $\x^{(1)}$ and $\x^{(2)}$, where $\x^{(1)} \neq \x^{(2)}$. Then, since $\sum_{i \in [n]} x_i^{(1)} = D = \sum_{i \in [n]} x_i^{(2)}$, it follows that there exist non-empty sets $L_1 = \{ i: x_i^{(1)} > x_i^{(2)} \}$ and $L_2 = \{ i: x_i^{(1)} < x_i^{(2)} \}$. Moreover, define $L_3 = \{ i: x_i^{(1)} = x_i^{(2)} \}$ and let $\pi^{(1)}$ be the equilibrium per-driver profit under $\x^{(1)}$, where $\pi_i(x_i^{(1)}) = \pi^{(1)}$ for all routes $i$ with $x_i^{(1)} > 0$. Analogously define $\pi^{(2)}$. Then, for some routes $i_1 \in L_1$ and $i_2 \in L_2$, we obtain the following relation for the per-driver profit:
{\setlength{\abovedisplayskip}{4pt}
\setlength{\belowdisplayskip}{4pt}
\setlength{\jot}{1pt}
\begin{align} \label{eq:ineq-helper}
    \pi_{i_1}(x_{i_1}^{(2)}) \stackrel{(a)}{\geq} \pi_{i_1}(x_{i_1}^{(1)}) = \pi^{(1)} \stackrel{(b)}{\geq} \pi_{i_2}(x_{i_2}^{(1)}) \stackrel{(c)}{\geq} \pi_{i_2}(x_{i_2}^{(2)}) = \pi^{(2)} \stackrel{(d)}{\geq} \pi_{i_1}(x_{i_1}^{(2)}),
\end{align}}
where (a) and (c) follow by the monotonicity of the per-driver profit (see Corollary~\ref{cor:monotonicity-per-driver-profits}), and (b) and (d) follow since $\x^{(1)}$ and $\x^{(2)}$ are equilibrium driver allocations. Since the left and right most terms in the above sequence of inequalities are the same, it follows that each of the above inequalities is met with an equality and, in particular, $\pi^{(1)} = \pi^{(2)}$.

Given this, we now compare the cumulative profits under the two equilibrium driver allocations. In particular:
{\setlength{\abovedisplayskip}{4pt}
\setlength{\belowdisplayskip}{4pt}
\setlength{\jot}{1pt}
\begin{align*}
    P(\x^{(1)}) &= \sum_{i = 1}^n \pi_i(x_i^{(1)}) x_i^{(1)} = \pi^{(1)} \sum_{i = 1}^n x_{i}^{(1)} \stackrel{(a)}{=} \pi^{(2)} \sum_{i = 1}^n x_{i}^{(2)} = \sum_{i = 1}^n \pi_i(x_i^{(2)}) x_i^{(2)} = P(\x^{(2)}),
\end{align*}}
where (a) follows as $\pi^{(1)} = \pi^{(2)}$ and the same number of drivers are allocated under both allocations. Thus, we have that the cumulative profits under any two equilibrium allocations are the same.

Next, we show that the equilibrium allocation is unique if the quadratic coefficient in Equation~\eqref{eq:rider-demand-function} is strictly positive. To see this, first note that under this condition, the per-driver profits is strictly monotonically decreasing in the driver allocation. Then, the only way for the inequalities in Equation~\eqref{eq:ineq-helper} to be held with equality is if $x_{i_1}^{(1)} = x_{i_1}^{(2)}$ for all $i \in L_1$ and $x_{i_2}^{(1)} = x_{i_2}^{(2)}$ for all $i \in L_2$, a contradiction. Hence, it must follow that $\x^{(1)} = \x^{(2)}$, which establishes our claim.

\section{Additional Details on Numerical Experiments} \label{apdx:additional-experiments}

\subsection{Details on Route Selection} \label{apdx:route-selection}

Our original dataset on the informal transit system in Nalasopara contains information on twenty-one routes. However, some of these routes serve destinations in the same geographic area, for which only a single destination coordinate is available. In these cases, we aggregate routes by constructing weighted averages of route characteristics, including trip time, distance, and fare. In addition, geographic coordinates are unavailable for one route, which are required for demand estimation. Thus, for our experiments, we focus on a final set of eighteen routes.

\subsection{Rider Demand Calibration Procedure} \label{appendix:demand_calibration}

We calibrate the demand $\Lambda_i$ for each route $i$ using population statistics from the 2011 Indian Census, the most recent census conducted in India, which reports data at the level of administrative sub-divisions referred to as \emph{wards}. To this end, we first assign each route $i$ to a ward based on the geographic coordinates of its destination stop. Since we study the evening commute period in our experiments, we take the Nalasopara railway station to be the common origin of all routes with the destination being the respective residential neighborhoods, reflecting that the majority of trips during the study period are home-bound.

Next, we estimate the population that can feasibly access each route  by defining a pedestrian catchment area around its destination stop based on walking accessibility. Specifically, for routes with total length of at most 5 km, we assume users are willing to walk five minutes to access the route, while for longer routes we assume a walking radius of ten minutes. Using a walking speed of 1.3 m/s, we obtain circular catchment areas around each destination stop. 

However, since these circular catchment areas may overlap across destination stops, we then generate disjoint catchments by assigning users to their nearest stop, ensuring that each user is associated with at most one route. To do so, we use Monte Carlo nearest-neighbor sampling. Specifically, for each circular catchment area, we repeatedly draw a large number of points uniformly at random, interpreting each draw as a representative residential location, and determine the fraction of those points that are geographically closest (in Euclidean distance) to that catchment area's destination stop. The fraction of points assigned to a given destination stop determines the share of the circular catchment area that can access that stop. Assuming that population density within the circular catchment area is equal to the ward-level population density of the ward to which the destination stop belongs, we compute the total population that can access each route by multiplying the total population within the circular catchment area by this fraction.

Finally, we assume that 40\% of the population commute daily by train, consistent with standard mode share estimates in the Mumbai Metropolitan Region. Multiplying this commuting share with the above estimated population that can access each route yields the total mass of users $\Lambda_i$ seeking to make trips on that route during our study period.

\subsection{Implementation Details to Compute Equilibrium Driver Allocation} \label{apdx:implementation-details-eq-allocation}

We now describe a procedure to compute an approximate equilibrium driver allocation, where we leverage the fact that, in equilibrium, all routes with a strictly positive mass of drivers have the same per-driver profit. Specifically, we search over a discretized set of possible equilibrium profit levels within a bounded interval $[\underline{\pi}, \Bar{\pi}]$, where $\underline{\pi}, \Bar{\pi}$ denote lower and upper bounds on the equilibrium per-driver profits. For each candidate per-driver profit level on the discretized grid, we compute the corresponding allocation of drivers across routes that would result in per-driver profits at that level and sum these allocations across the routes to obtain the total number of drivers willing to operate at that per-driver profit level. Then, we obtain an approximate equilibrium per-driver profit as the value on the discretized grid for which the resulting aggregate driver allocation is closest to the actual mass of drivers $D$, with the corresponding route-level allocation of drivers across routes taken as the approximate equilibrium driver allocation.

\end{document}

%% file: tex/macros2.tex
\newcommand{\x}{\mathbf{x}}

\newcommand{\p}{\mathbf{p}}
\newcommand{\ttau}{\boldsymbol{\tau}}
\newcommand{\LLambda}{\boldsymbol{\Lambda}}

%% file: Fig/msom/PoABounds/profit_ratio.tex
\definecolor{mycolor1}{rgb}{0.00000,0.44700,0.74100}%
\definecolor{mycolor2}{rgb}{0.85000,0.32500,0.09800}%
\definecolor{mycolor3}{rgb}{0.46600,0.67400,0.18800}%
\definecolor{mycolor4}{rgb}{0.49400,0.18400,0.55600}%

\begin{tikzpicture}

\begin{axis}[%
width=1.3in,
height=0.9in,
at={(0in,0in)},
scale only axis,
xmin=0,
xmax=2100,
ymin=0.98,
ymax=1.22,
xlabel style={font=\color{white!15!black}, font=\footnotesize, yshift=0.1cm},
xlabel={Number of Drivers ($D$)},
ylabel style={font=\color{white!15!black}, font=\footnotesize},
ylabel={Profit Ratio $\frac{P(\x^*)}{P(\x^{Eq})}$},
ytick={1.0,1.1,1.2},
yticklabels={1.0,1.1,1.2},
ylabel style={yshift=-3pt},
title style={font=\footnotesize, yshift=-0.27cm},
tick label style={font=\scriptsize, yshift=0.07cm},
axis background/.style={fill=white},
grid=both,
grid style={dashed,gray!30},
legend style={
    at={(0.5,0.22)},
    anchor=east,
    legend cell align=left,
    align=left,
    draw=white!0!black,
    inner xsep=1pt,
    inner ysep=0pt,
    font=\tiny,
    row sep=-0.05cm
}
]

\addplot [color=mycolor2, line width=1.0pt, mark=square*, mark size=0.75pt,
    mark options={line width=1pt}]
coordinates {
    (100,  1.1427484202600338)
    (150,  1.1468359889428423)
    (200,  1.1387764308138195)
    (250,  1.1152715948827279)
    (300,  1.1165510385778132)
    (350,  1.1099714909748573)
    (400,  1.102241334347865)
    (450,  1.1139513001035)
    (500,  1.1186172114819002)
    (550,  1.1297444135332202)
    (600,  1.1355905964155184)
    (650,  1.1512964518018212)
    (700,  1.1520090777871597)
    (750,  1.156557197560496)
    (800,  1.1716731803101088)
    (850,  1.1772324864610615)
    (900,  1.1791334664314521)
    (950,  1.1807153128359231)
    (1000, 1.1820158150205982)
    (1050, 1.1890370626853939)
    (1100, 1.1979187762757695)
    (1150, 1.181549653896125)
    (1200, 1.1634736832912593)
    (1250, 1.139091894142429)
    (1300, 1.1133855798342305)
    (1350, 1.089350904913372)
    (1400, 1.0641134373361427)
    (1450, 1.0408950449312941)
    (1500, 1.025771897953225)
    (1550, 1.0176898309782416)
    (1600, 1.0101472916019791)
    (1650, 1.0059323935033602)
    (1700, 1.0051745403141665)
    (1750, 1.0045800085989929)
    (1800, 1.004018998518597)
    (1850, 1.0034913989550394)
    (1900, 1.0029971055772455)
    (1950, 1.0026255855191544)
    (2000, 1.0022752719441828)
};

\end{axis}

\begin{axis}[%
width=0in,
height=0in,
at={(0in,0in)},
scale only axis,
xmin=0,
xmax=1,
ymin=0,
ymax=1,
axis line style={draw=none},
ticks=none,
axis x line*=bottom,
axis y line*=left
]
\end{axis}

\end{tikzpicture}%

%% file: Fig/msom/PoABounds/welfare_ratio.tex
\definecolor{mycolor1}{rgb}{0.00000,0.44700,0.74100}%
\definecolor{mycolor2}{rgb}{0.85000,0.32500,0.09800}%
\definecolor{mycolor3}{rgb}{0.46600,0.67400,0.18800}%
\definecolor{mycolor4}{rgb}{0.49400,0.18400,0.55600}%

\begin{tikzpicture}

\begin{axis}[%
width=1.3in,
height=0.9in,
at={(0in,0in)},
scale only axis,
xmin=0,
xmax=2100,
ymin=0.98,
ymax=1.45,
xlabel style={font=\color{white!15!black}, font=\footnotesize, yshift=0.1cm},
xlabel={Number of Drivers ($D$)},
ylabel style={font=\color{white!15!black}, font=\footnotesize},
ylabel={\shortstack{Rider Welfare \\ Ratio $\frac{R(\x^*)}{R(\x^{Eq})}$}},
ytick={1.0,1.1,1.2,1.3, 1.4},
yticklabels={1.0,1.1,1.2,1.3, 1.4},
ylabel style={yshift=-3pt},
title style={font=\footnotesize, yshift=-0.27cm},
tick label style={font=\scriptsize, yshift=0.07cm},
axis background/.style={fill=white},
grid=both,
grid style={dashed,gray!30},
legend style={
    at={(0.5,0.22)},
    anchor=east,
    legend cell align=left,
    align=left,
    draw=white!0!black,
    inner xsep=1pt,
    inner ysep=0pt,
    font=\tiny,
    row sep=-0.05cm
}
]

\addplot [color=mycolor4, line width=1.0pt, mark=*, mark size=0.75pt,
    mark options={line width=1pt}]
coordinates {
    (100,  1.3762504869950696)
(150,  1.4007922051441843)
(200,  1.301806408960378)
(250,  1.1980986959415947)
(300,  1.1614270945290055)
(350,  1.1370899717496414)
(400,  1.1824219510208198)
(450,  1.1817865167353383)
(500,  1.1881050350720836)
(550,  1.2155575804214678)
(600,  1.222139535280891)
(650,  1.2044008838619789)
(700,  1.177502612068427)
(750,  1.2401684520273593)
(800,  1.2036478513075373)
(850,  1.2026680691003584)
(900,  1.2459291506044992)
(950,  1.2012239494451675)
(1000, 1.217494260213488)
(1050, 1.2527053436533593)
(1100, 1.283899963884994)
(1150, 1.2627114936179797)
(1200, 1.199854502047594)
(1250, 1.1494266099620307)
(1300, 1.1494266099620307)
(1350, 1.1494266099620307)
(1400, 1.093693322673287)
(1450, 1.0538805485673495)
(1500, 1.0538805485673495)
(1550, 1.0480613153008833)
(1600, 1.0233590778393589)
(1650, 1.00365423782497)
(1700, 1.00365423782497)
(1750, 1.00365423782497)
(1800, 1.00365423782497)
(1850, 1.00365423782497)
(1900, 1.00365423782497)
(1950, 1.00365423782497)
(2000, 1.00365423782497)
};

\end{axis}

\begin{axis}[%
width=0in,
height=0in,
at={(0in,0in)},
scale only axis,
xmin=0,
xmax=1,
ymin=0,
ymax=1,
axis line style={draw=none},
ticks=none,
axis x line*=bottom,
axis y line*=left
]
\end{axis}

\end{tikzpicture}%

%% file: Fig/msom/PoABounds/profit_per_driver.tex
\definecolor{mycolor1}{rgb}{0.00000,0.44700,0.74100}%
\definecolor{mycolor2}{rgb}{0.85000,0.32500,0.09800}%
\definecolor{mycolor3}{rgb}{0.46600,0.67400,0.18800}%
\definecolor{mycolor4}{rgb}{0.49400,0.18400,0.55600}%

\begin{tikzpicture}

\begin{axis}[%
width=1.3in,
height=0.9in,
at={(0in,0in)},
scale only axis,
xmin=0,
xmax=2100,
ymin=100,
ymax=500,
xlabel style={font=\color{white!15!black}, font=\footnotesize, yshift=0.1cm},
xlabel={Number of Drivers ($D$)},
ylabel style={font=\color{white!15!black}, font=\footnotesize},
ylabel={\shortstack{Equilibrium Profit \\ per Driver (Rs.)}},
ylabel style={yshift=-3pt},
title style={font=\footnotesize, yshift=-0.27cm},
tick label style={font=\scriptsize, yshift=0.07cm},
axis background/.style={fill=white},
grid=both,
grid style={dashed,gray!30},
legend style={
    at={(0.5,0.22)},
    anchor=east,
    legend cell align=left,
    align=left,
    draw=white!0!black,
    inner xsep=1pt,
    inner ysep=0pt,
    font=\tiny,
    row sep=-0.05cm
}
]

\addplot [color=mycolor3, line width=1.0pt, mark=triangle*, mark size=0.75pt,
    mark options={line width=1pt}]
coordinates {
 (100,  435.0)
(150,  380.0)
(200,  370.0)
(250,  363.0)
(300,  355.0)
(350,  352.0)
(400,  335.0)
(450,  327.0)
(500,  311.0)
(550,  301.0)
(600,  300.0)
(650,  299.0)
(700,  297.0)
(750,  282.0)
(800,  280.0)
(850,  278.0)
(900,  263.0)
(950,  261.0)
(1000, 255.0)
(1050, 243.0)
(1100, 232.0)
(1150, 230.0)
(1200, 229.0)
(1250, 228.0)
(1300, 223.0)
(1350, 218.0)
(1400, 217.0)
(1450, 215.0)
(1500, 208.0)
(1550, 205.0)
(1600, 204.0)
(1650, 200.0)
(1700, 194.0)
(1750, 189.0)
(1800, 183.0)
(1850, 178.0)
(1900, 174.0)
(1950, 169.0)
(2000, 165.0)
};

\end{axis}

\begin{axis}[%
width=0in,
height=0in,
at={(0in,0in)},
scale only axis,
xmin=0,
xmax=1,
ymin=0,
ymax=1,
axis line style={draw=none},
ticks=none,
axis x line*=bottom,
axis y line*=left
]
\end{axis}

\end{tikzpicture}%

%% file: Fig/msom/priceOpt/alpha_profit_ratio.tex
\begin{tikzpicture}
\begin{axis}[
    width=1\textwidth,
    height=0.6\textwidth,
    xlabel={Weight on Cumulative Driver Profit ($\alpha$)},
    xlabel style={font=\color{white!15!black}, font=\footnotesize, yshift=0.1cm},
    ylabel={Profit Ratio $\frac{P(\x^*_1)}{P(\x_{\alpha})}$},
    xmin=0, xmax=1,
    ymin=0.9, ymax=2.9,
    xtick={0,0.2,0.4,0.6,0.8,1.0},
    xticklabels={0,0.2,0.4,0.6,0.8,1.0},
    ytick={1.0,1.5,2.0,2.5},
    yticklabels={1.0,1.5,2.0,2.5},
    grid=both,
    grid style={line width=.1pt, draw=gray!20},
    major grid style={line width=.2pt, draw=gray!35},
    legend style={
        at={(0.98,0.98)},
        anchor=north east,
        draw=none,
        fill=white,
        font=\small
    },
    tick label style={font=\scriptsize, yshift=0.07cm},
    label style={font=\small},
]

\addplot[
    thick,
    mark=*,
    mark size=1.8pt
] coordinates {
    (0.00, 2.778986)
    (0.10, 1.000806)
    (0.20, 1.000402)
    (0.30, 1.000402)
    (0.40, 1.000402)
    (0.50, 1.000000)
    (0.60, 1.000000)
    (0.70, 1.000000)
    (0.80, 1.000000)
    (0.90, 1.000000)
    (1.00, 1.000000)
};
\addlegendentry{Centralized}

\addplot[
    thick,
    dashed,
    mark=square*,
    mark size=1.8pt
] coordinates {
    (0.00, 2.778986)
    (0.10, 1.427526)
    (0.20, 1.209625)
    (0.30, 1.209625)
    (0.40, 1.209625)
    (0.50, 1.209625)
    (0.60, 1.209625)
    (0.70, 1.209625)
    (0.80, 1.209625)
    (0.90, 1.209625)
    (1.00, 1.209625)
};
\addlegendentry{Equilibrium}

\end{axis}
\end{tikzpicture}

%% file: Fig/msom/priceOpt/alpha_welfare_ratio.tex
\begin{tikzpicture}
\begin{axis}[
    width=1\textwidth,
    height=0.6\textwidth,
    xlabel={Weight on Cumulative Driver Profit ($\alpha$)},
    xlabel style={font=\color{white!15!black}, font=\footnotesize, yshift=0.1cm},
    ylabel={\shortstack{Rider Welfare \\ Ratio $\frac{R(\x^*_0)}{R(\x_{\alpha})}$}},
    xmin=0, xmax=1,
    ymin=0.98, ymax=1.23,
    xtick={0,0.2,0.4,0.6,0.8,1.0},
    xticklabels={0,0.2,0.4,0.6,0.8,1.0},
    ytick={1.0,1.05,1.10,1.15,1.20},
    yticklabels={1.00,1.05,1.10,1.15,1.20},
    grid=both,
    grid style={line width=.1pt, draw=gray!20},
    major grid style={line width=.2pt, draw=gray!35},
    legend style={
        at={(0.98,0.98)},
        anchor=north east,
        draw=none,
        fill=white,
        font=\small
    },
    tick label style={font=\scriptsize, yshift=0.07cm},
    label style={font=\small},
]

\addplot[
    thick,
    mark=*,
    mark size=1.8pt
] coordinates {
    (0.00, 1.000000)
    (0.10, 1.000163)
    (0.20, 1.000655)
    (0.30, 1.000655)
    (0.40, 1.000655)
    (0.50, 1.002848)
    (0.60, 1.002848)
    (0.70, 1.002848)
    (0.80, 1.002848)
    (0.90, 1.002848)
    (1.00, 1.002848)
};

\addplot[
    thick,
    dashed,
    mark=square*,
    mark size=1.8pt
] coordinates {
    (0.00, 1.000000)
    (0.10, 1.000071)
    (0.20, 1.205372)
    (0.30, 1.205372)
    (0.40, 1.205372)
    (0.50, 1.205372)
    (0.60, 1.205372)
    (0.70, 1.205372)
    (0.80, 1.205372)
    (0.90, 1.205372)
    (1.00, 1.205372)
};

\end{axis}
\end{tikzpicture}

%% file: Fig/msom/priceOpt/reservation_wage_profit.tex
\begin{tikzpicture}
\begin{axis}[
    width=1\textwidth,
    height=0.6\textwidth,
    xlabel={Reservation Wage},
    ylabel={\shortstack{Cumulative \\ Driver Profit}},
    xmin=0, xmax=800,
    xtick={0,200,400,600,800},
    xlabel style={font=\color{white!15!black}, font=\footnotesize, yshift=0.12cm},
    yticklabel style={
        /pgf/number format/fixed,
        /pgf/number format/precision=0
    },
    grid=both,
    grid style={line width=.1pt, draw=gray!20},
    major grid style={line width=.2pt, draw=gray!35},
    tick label style={font=\scriptsize, yshift=0.07cm},
    label style={font=\small},
]

\addplot[
    thick,
    mark=*,
    mark size=1.8pt
] coordinates {
    (0.00, 0.000000)
    (50.00, 50000.000000)
    (100.00, 100000.000000)
    (150.00, 150000.000000)
    (200.00, 200000.000000)
    (250.00, 250000.000000)
    (300.00, 300000.000000)
    (350.00, 350000.000000)
    (400.00, 400000.000000)
    (450.00, 450000.000000)
    (500.00, 480000.000000)
    (550.00, 440000.000000)
    (600.00, 438000.000000)
    (650.00, 396500.000000)
    (700.00, 280000.000000)
    (750.00, 240000.000000)
    (800.00, 192000.000000)
};

\end{axis}
\end{tikzpicture}

%% file: Fig/msom/priceOpt/reservation_wage_welfare.tex
\begin{tikzpicture}
\begin{axis}[
    width=1\textwidth,
    height=0.6\textwidth,
    xlabel={Reservation Wage},
    xlabel style={font=\color{white!15!black}, font=\footnotesize, yshift=0.12cm},
    ylabel={Rider Welfare},
    xmin=0, xmax=800,
    xtick={0,200,400,600,800},
    yticklabel style={
        /pgf/number format/fixed,
        /pgf/number format/precision=0
    },
    grid=both,
    grid style={line width=.1pt, draw=gray!20},
    major grid style={line width=.2pt, draw=gray!35},
    tick label style={font=\scriptsize, yshift=0.07cm},
    label style={font=\small},
]

\addplot[
    thick,
    mark=*,
    mark size=1.8pt
] coordinates {
    (0.00, 87433.349555)
    (50.00, 87429.099177)
    (100.00, 87424.844169)
    (150.00, 87420.584493)
    (200.00, 87416.320111)
    (250.00, 87412.050984)
    (300.00, 87407.777073)
    (350.00, 87403.622244)
    (400.00, 87399.977953)
    (450.00, 87396.331815)
    (500.00, 81592.946939)
    (550.00, 72888.153554)
    (600.00, 66807.905806)
    (650.00, 58740.372090)
    (700.00, 44015.655585)
    (750.00, 36601.188608)
    (800.00, 29806.850906)
};

\end{axis}
\end{tikzpicture}